\documentclass[final,3p,times,12pt]{elsarticle}

\usepackage{epsfig}
\usepackage{amssymb}
\usepackage{amsthm}
\usepackage{amsmath}
\usepackage{multirow}
\usepackage{bm}
\usepackage{color}
\usepackage{booktabs}
\usepackage{subcaption}
\usepackage{graphicx}

\usepackage{float}
\usepackage{mathrsfs}
\usepackage{setspace}
\usepackage{epstopdf}
\usepackage{threeparttable}
\usepackage{pifont}

\newtheorem{thm}{Theorem}

\newtheorem{lem}{Lemma}

\newtheorem{definition}{Definition}

\bibliographystyle{elsarticle-num-names} 

\begin{document}
	
\begin{frontmatter}
		
\title{Fault Diagnosis of Nonlinear Systems Using a Hybrid-Degree Dual Cubature-based Estimation Scheme}
		
\author[CU]{Yanyan Shen}
\ead{yy.shen1989@gmail.com}
		
\author[CU]{Khashayar Khorasani\corref{cor1}}
\ead{kash@ece.concordia.ca}
		
\cortext[cor1]{Corresponding author: Electrical and Computer Engineering, Concordia University, Montreal,	Quebec, H3G 1M8 Canada. K. Khorasani would  like to acknowledge the support received from
the Natural Sciences and Engineering Research Council of Canada (NSERC)
and the Department of National Defence (DND) under the Discovery Grant
and DND Supplemental Programs.}
		
\address[CU]{Electrical and Computer Engineering, Concordia University, Montreal, Quebec, H3G 1M8 Canada}
		
\begin{abstract}
			
In this paper, a novel hybrid-degree dual estimation approach based on cubature rules and cubature-based nonlinear filters is proposed for fault diagnosis of nonlinear systems through simultaneous state and time-varying parameter estimation. Our proposed dual nonlinear filtering scheme is developed based on \textit{case-dependent} cubature rules that are motivated by the following observations and facts, namely \textit{(i)} dynamic characteristics of nonlinear system states and parameters generally are distinct and posses different degrees of complexities, and \textit{(ii)} performance of cubature rules depend on the system dynamics and vary due to handling  of high-dimensional integrations approximations. For improving the robustness capability of our proposed methodologies a modified cubature point propagation method is incorporated. The performance of our proposed  dual estimation strategy is demonstrated and evaluated by application to a nonlinear gas turbine engine for addressing the component fault diagnosis problem within an integrated fault detection, isolation and identification framework. Robustness analysis is implemented to verify the capability of our proposed approaches to deal with parametric uncertainties and unmodeled dynamics. Extensive simulation case studies and discussions with respect to component fouling, erosion or abrupt faults are provided to substantiate and justify the superiority of our proposed fault diagnosis methodology when compared to other well-known alternative diagnostic techniques such as the Unscented Kalman Filters (UKF) and Particle Filters (PF) that are commonly available in the literature.
			
\end{abstract}
				
\begin{keyword}
Hybrid-degree \sep Cubature rules \sep Cubature-based nonlinear filters \sep Dual estimation \sep Fault diagnosis \sep Aircraft gas turbine engines.
\end{keyword}
		
\end{frontmatter}
	
Estimation as a quantitative evaluation process of unmeasured states and/or parameters from uncertain or imprecise observations is a fundamental problem for nonlinear systems in various disciplines such as control and fault diagnosis \cite{afshari2017gaussian}. Model-based fault diagnosis (FD) techniques (that consist of fault detection, isolation, and identification (FDII)) relying on estimation approaches have been extensively investigated, albeit mostly  for linear systems using e.g.,  Kalman filters (KF) \cite{zhang2018adaptive} and observer-based methods \cite{gao2008novel}. Although linear approaches enable one to achieve acceptable estimation performance locally, their performance undergo deterioration as nonlinear dynamics dominate the system behavior and risk losing convergence given an accurate approximation requirement. As far as fault diagnosis of nonlinear systems are concerned linear approaches might be subject to high rates of false alarms and poor detection and diagnosis performance.

A great deal of investigation on fault detection and isolation (FDI) problems that utilize nonlinear estimation approaches have been conducted in the literature.  According to the employed nonlinear filters, these approaches can be broadly categorized into: 1) FDI methods developed using local nonlinear filters, e.g., Extended Kalman Filters (EKF) in \cite{foo2013sensor}, Unscented Kalman Filters (UKF) in \cite{shang2011sensor}, Cubature Kalman Filters (CKF) in \cite{kim2016cubature}, Gauss-Hermite Filters (GHF) in \cite{wang2016novel}; and 2) FDI methods developed using  global nonlinear filters, e.g., Particle Filters (PF) in \cite{yin2015intelligent} and Ensemble Kalman Filters (EnKF) in \cite{daroogheh2018ensemble}. These work have introduced basic frameworks for state estimation of nonlinear systems, and FDI schemes using the corresponding generated residuals.

In this paper, component faults are represented as variations of health-related parameters, and their diagnosis will be achieved through simultaneous state and parameter estimation. The central idea of using simultaneous state and parameter estimation to accomplish fault diagnosis of nonlinear system is certainly not new. One significant set of publications focus on joint estimation schemes. The parameters and states are augmented into one vector for performing  simultaneous estimation using e.g. nonlinear observer-based methods \cite{buciakowski2017bounded}, EKF-based methods \cite{stojanovic2016joint}, and UKF-based methods \cite{yu2017lithium}. Theoretically, joint estimation scheme casts the simultaneous estimation problem into a single filtering scheme; however, the consequent drawback is the large and high-dimensional matrix operation of the augmented model.

Another set of publications focus on simultaneously estimating states and parameters using parallel filters, namely through dual estimation schemes. A higher accuracy can be expected as the estimation is performed in a closed-loop manner \cite{wei2017multi}. Additionally, Shi \textit{et al.} in \cite{shi2019sequential} have shown that  the computational time of the dual EKF-based scheme is 25\% less than the joint one. However, due to the linearization errors, the dual EKF-based scheme might fail to converge to accurate estimation in cases of highly nonlinear systems. Plett in \cite{plett2006sigma} developed a dual UKF-based scheme for battery management systems with better  performance as compared to EKF.

Nevertheless, when the system order exceeds beyond three, perturbations are induced which cause the numerical inaccuracy; and moreover presence of negative weights might risk the filtering stability \cite{arasaratnam2009cubature}. In \cite{daroogheh2017dual}, a dual PF-based fault diagnosis scheme was developed for a single-spool gas turbine engine. The superiorities on higher estimation accuracy and lower false alarms have been verified with guaranteed filtering stability. Other PF-based dual estimation works are found in \cite{ding2017fault,olaizola2019real} for different applications. Although their estimation performance have been justified to be highly accurate, their significant computational demands challenge utilization in wide range of real-time applications. One can observe from the above dual estimation works that the employed nonlinear filters are critical to the performance of simultaneous estimation and fault diagnosis problems.

The CKF has recently been extensively studied for high-dimensional state estimation. Its derivative-free properties, reasonable estimation accuracy when subjected to Gaussian considerations, divergence avoidance, and dimensionality issues are the main advantages when compared to EKF, UKF, or PF estimation techniques \cite{afshari2017gaussian,arasaratnam2009cubature}. It has also been successfully applied in various fields such as lithium-ion batteries state of charge estimation \cite{peng2019improved}, missile attitude estimation \cite{liu2019maximum}, among others. Few studies can  be found that take advantage of the CKF state estimation capability for FDI problems. For example, Kim \textit{et al.} in \cite{kim2016cubature} used a CKF to generate residuals to perform FDI of a multi-unmanned vehicle; Xu \textit{et al.} in \cite{xu2017comparative} has run various experiments to verify that CKF has the best diagnosis performance as compared to the EKF and strong tracking filter, and is more suitable for FDI of ECAS systems.

However, CKF has limited estimation accuracy for certain nonlinearities due to its employed  \textit{3rd-degree} cubature rules. Consequently, biased residual signals might be generated  which can lead to misclassifying the health status. To remedy this drawback, and as a motivation to enhance the level of accuracy and to capture higher nonlinear dynamics, higher-degree cubature rules are proposed to be utilized. Jia \textit{et al.} in  \cite{jia2013high} designed a high-degree CKF based on the spherical-radial rule. Its estimation accuracy has been shown to be superior to UKF, CKF, and PF. The downside with respect to the higher degree of cubature rules are the increase in both the computational cost and the risks in numerical instability. Consequently, investigating efficient and numerically stable cubature rules, and incorporating such cubature rules within the Bayesian filtering framework to propose our   ``\textit{cubature-based nonlinear filters} (CNF)'' and to solve the nonlinear FD estimation problem is one of the main objectives of this paper.

Derivation of cubature rules have  ascribed great importance in the field of numerical mathematics during the past few decades, e.g., \cite{wang2013spherical,ballreich2019stable}. Although a variety of cubature rules have been developed in numerical mathematics, not all of them are directly applicable to the general nonlinear filtering problems given that many were developed to only solve certain specific problems. Our focus in this paper is on those cubature rules that can either be generalized to fault diagnosis and health monitoring problems of nonlinear systems having arbitrary order as well as arbitrary degrees of polynomial accuracy and that can particularly perform well for the targeted nonlinear gas turbine engine (GTE) application.

Early research on FDII of GTE have been one of the challenging application areas that have received much attention. Gas path analysis (GPA) is one of the most popular diagnostic procedures which relies on discernible changes in the observable parameters of the engine to detect presence of faults. Various fault diagnosis techniques have been developed for GTE based on GPA ranging from Kalman filters-based variants \cite{pourbabaee2015sensor,daroogheh2018ensemble}, neural networks \cite{amozegar2016ensemble}, data-driven methods \cite{naderi2018data}, and component adaptation method \cite{tsoutsanis2014component}.

To the authors' best knowledge, the fault identification problem along with a unified integrated fault diagnosis scheme that employs CNF based on a dual estimation scheme has not been investigated in the literature. This represents as another  objective of this paper where our goal  is to develop an efficient and stable dual cubature-based estimation scheme to not only detect and isolate component faults but also to accurately identify  severity of simultaneous multi-mode scenarios for the GTE system.

However, several limitations and open areas are still outstanding for applying CNF to address component FD problems. Specifically, one can state the following challenges in and shortcomings of the literature:
\begin{itemize}
	\item[1)] The balancing selection and choice is between reasonable estimation accuracy and acceptable computational cost for a real-time implementation application. The cubature rules utilized in our corresponding CNF determine the estimation performance through the capacity of capturing the system nonlinear dynamics. Improved accuracy can be achieved for strong nonlinearities by using higher rule degrees, however, the computational demands would be growing quadratically or even of higher order.
	\item[2)] Numerical stability of certain CNF decreases with the increasing degree of accuracy and the increasing system order due to the growing influence of negative weights. The accumulative impacts of numerical instabilities through iterative calculations do indeed impose high risks that lead to unstable filtering solutions \cite{ballreich2019stable}.
\end{itemize}

Moreover, in dual estimation research that investigated Dual-EKF, Dual-UKF or Dual-CKF one actually utilizes the \textit{same order} of accuracy for the state and parameter estimation modules. However, due to the fact that the dynamics of states and parameters are completely different, higher false alarms or low accuracy rates can occur given their incapability of capturing the corresponding different degrees of nonlinearities.

Motivated by the above discussion on FD of nonlinear systems, in this paper a novel hybrid-degree dual estimation scheme is  implemented for the first time in the literature for real-time health monitoring of GTE. The term ''\textit{hybrid}'' indicates and refers to the notion that different degrees or cubature rules are considered for developing the nonlinear state and parameter estimation schemes. The hybrid-degree dual estimation scheme is motivated by the fact that the nonlinear dynamics of the system states and parameters in general, and the GTE in particular, are practically completely different. For example, the GTE  state dynamics contain higher degrees of nonlinearities where a conventional \textit{3rd-degree} cubature rule cannot completely capture and represent, thereby necessitating one to utilize a higher-degree cubature rule, whereas the GTE health-related parameter dynamics are in effect less complex so that a high-degree cubature rule is not necessary given the practical  implementation and limited computational resources available in real-time.

To summarize the main contributions of this paper can be stated as follows:
\begin{itemize}
	\item[1)] A novel stable and efficient hybrid-degree dual cubature-based filtering approach for FD of nonlinear systems. In contrast to the same degree-based Dual-EKF, Dual-UKF, and Dual-CKF, our proposed case-dependent hybrid degree solution will improve the estimation accuracy and FD performance. Furthermore, our proposed hybrid-degree strategy has the flexibility to simultaneously achieve improved accuracy and computational efficiency by considering the prior knowledge of  system dynamics and  user's specifications and requirements. We have also performed quantitative comparative evaluation and analysis of various approaches  available in the literature that are utilized as reference benchmark.
	\item[2)] Compared to Dual-UKF, Dual-CKF, and Dual-PF algorithms, advantages and superiorities of our proposed FD scheme are justified and validated in terms of fault detection promptness, isolation, and identification accuracy, false alarm rates, precision, and computational cost.
	\item[3)]	Robustness capabilities of the proposed methodology with respect to modeling uncertainties have been formally and quantitatively analyzed. For improving the robustness and reliability of the proposed methodology against parametric, unmodelled dynamic uncertainties, a modified cubature points propagation methodology is incorporated into the proposed framework. Comparative  evaluations in terms of false alarm rates, fault detection time, and accuracy performance metrics are provided.
	\item[4)] Performance of the proposed unified component FD framework is verified and validated by application to multi-mode simultaneous/concurrent fault diagnosis problem of GTE system that is subject to both abrupt and incipient fault types.
	\item[5)] The boundedness properties of the estimated health-related parameter errors are formally investigated and analyzed.
\end{itemize}

The remainder of this paper is organized as follows. In Section \ref{sec2}, the statement of dual estimation problem for system states and parameters based on numerical cubature rules is presented. The cubature-based nonlinear filters (CNF) are developed in Section \ref{sec3}, and a detailed design procedure of our proposed hybrid-degree dual estimation scheme is provided in Section \ref{sec4}. The state and parameter estimation problems as well as FD strategy formulations are  provided. The effectiveness of the proposed framework is verified by its application to a component FD problem of a GTE system. Comparative studies are conducted in Section \ref{sec5} where performance of our proposed CNF and FD strategies are evaluated in terms of metrics of accuracy, stability factor, and computational cost. Conclusions  are provided in Section \ref{sec6}.

\section{Problem Statement}
\label{sec2}

Consider the general discrete-time nonlinear system
\begin{equation}
\label{eq:nonlinear_sys_state}
x_{k+1}=f(x_{k},\theta_{k},u_k)+w_{k}
\end{equation}
\begin{equation}
\label{eq:nonlinear_sys_measure}
z_k=g(x_k, \theta_{k},u_k)+v_{k}
\end{equation}
where $x_k \in \mathbb{R}^{n_x}$, $z_k \in \mathbb{R}^{n_z}$, $\theta_k \in \mathbb{R}^{n_\vartheta}$ denote the system states, measurements, and health-related parameters, respectively. Also, $u_k \in \mathbb{R}^{n_u}$ denotes the control input, $f: \mathbb{R}^{n_x} \times \mathbb{R}^{n_\theta} \times \mathbb{R}^{n_u} \rightarrow \mathbb{R}^{n_x}$ denotes the nominal nonlinear  system dynamics, $g: \mathbb{R}^{n_x} \times \mathbb{R}^{n_\theta} \times \mathbb{R}^{n_u} \rightarrow \mathbb{R}^{n_z}$ denotes a known nonlinear function,
$w_{k}$ and $v_{k}$ represent the zero-mean uncorrelated Gaussian white noise sequences for states and measurements, with  $\mathbb{E}[w_{k}w_{l}^T]=\Sigma_{w,k}\delta_{k,l}$, $\mathbb{E}[v_{k}v_{l}^T]=\Sigma_{v,k}\delta_{k,l}$ and $\mathbb{E}[w_{k}v_{l}^T]=0$, respectively.

It is assumed that the dynamic characteristics of the multiplicative health-related parameters are represented by
\begin{equation}
\label{eq:nonlinear_sys_par}
\theta_k=h(\theta_{k-1})+ \tau_k
\end{equation}
where $h$ denotes the degradation dynamics, and $\tau_{k}$ represents a zero-mean Gaussian white noise with $\mathbb{E}[\tau_{k}\tau_{l}^T]=\Sigma_{\tau,k}\delta_{k,l}$.
\vspace{1mm}\\
\textbf{Problem Statement 1:} This paper aims to develop a unified multi-mode FD methodology that simultaneously handles fault detection, isolation and identification problems. Our goals for the proposed FD methodology are to provide a fast detection, low false alarm rates and missed detections, and low estimation errors, while being computationally feasible for real-time implementation. This is accomplished by monitoring the status of health-related parameters $\theta_k$ by developing an efficient dual estimation scheme. 

The objective of dual estimation scheme here can be formulated as that of approximating the conditional expectations with respect to states and parameters as governed by
\begin{equation}
\label{eq:obj_state}
\mathbb{E}(\phi_1(x_k)|z_{1:k},\theta_{k-1})=\int_{\mathbb{R}^{n_x}} \phi_1(x_k)p(x_k|z_{1:k},\theta_{k-1})dx_k
\end{equation}
\begin{equation}
\label{eq:obj_par}
\mathbb{E}(\phi_2(\theta_k)|z_{1:k},x_k)=\int_{\mathbb{R}^{n_\theta}} \phi_2(\theta_k) p(\theta_k|z_{1:k},x_k)d\theta_k
\end{equation}
where $\phi_1(x_k)$ and $\phi_2(\theta_k)$ are functions that are to be simultaneously estimated, and $z_{1:k}=\{z_1,z_2,\cdots,z_k\}$ denotes the available observations up to the time instant $k$, $p(x_k|z_{1:k},\theta_{k-1})$ and  $p(\theta_k|z_{1:k},x_k)$ denote the conditional probability density functions (pdfs) that are expected to be approximated by the developed nonlinear filters. 

The key difficulty for obtaining the pdfs by using Bayesian filtering scheme is the involved intractable multivariate integrals in both prediction and update stages. Proceeding with the assumption that the pdfs of  states and parameters are Gaussian, only the moments of their means and covariances are needed to be computed. The computation of the mean and covariance requires multivariate integrals that are of the tractable Gaussian weighted form. 

The underlying problem now is to accurately approximate the Gaussian weighted integrals in each stage of the Bayesian filtering. 
Let us take $\phi_1(x_k)=x_k$ and $\phi_2(\theta_k)=\theta_k$, and denote the prior Gaussian distributions of states and parameters at time $k-1$ as $\mathcal{N}_{k-1}^x\triangleq(x_{k-1};\hat{{x}}_{k-1}, P_{k-1}^{xx})$ and $\mathcal{N}_k^\theta\triangleq(\theta_k;\hat{{\theta}}_{k-1}, P_{k-1}^{\theta\theta})$, respectively, where $\hat{{x}}_{k-1}$ and $\hat{{\theta}}_{k-1}$ denote the estimated mean, $P_{k-1}^{xx}$ and $P_{k-1}^{\theta\theta}$ represent the corresponding covariance matrices. Let us consider the Gaussian integrals for the mean of predictive density of states and parameters in the prediction stage where they can be approximated by cubature rules as:
\begin{equation}
\label{eq:state_filter_1}
\begin{split}
\int_{{\mathbb{R}}^{n_x}}\phi_1(x_k)\mathcal{N}_{k-1}^xdx_{k-1}
& \approx \sum_{i=1}^{N_{d_x}} {\phi_1}(\tilde{\xi}_{i, k|k-1}^{d_x}) {w}_i^{d_x}
\end{split}
\end{equation}
\begin{equation}
\label{eq:par_filter_1}
\begin{split}
\int_{{\mathbb{R}}^{n_\theta}}\phi_2(\theta_k)\mathcal{N}_{k-1}^\theta d\theta_{k-1}
&\approx \sum_{j=1}^{N_{d_\theta}} {\phi_2}(\tilde{\xi}_{j, k|k-1}^{d_\theta}) {w}_j^{d_\theta}
\end{split}
\end{equation}
where the variables $w$, $N$ and $\tilde{\xi}_{k|k-1}$ denote the weights, the total number of points, and the propagated cubature points at time $k|k-1$ based on the sampled points ${\xi}$. The subscripts or superscripts $d_x$ and $d_\theta$ represent the degree of cubature rules for state and parameter filters, respectively. 

Let us define two sets $\{\xi_{i}^{d_x},{w}_i^{d_x}\}$, $i=1,\cdots, N_{d_x}$ and $\{\xi_{j}^{d_\theta},{w}_i^{d_\theta}\}$, $j=1,\cdots, N_{d_\theta}$. Obviously, design of these variables affect the approximation performance of (\ref{eq:state_filter_1}) and (\ref{eq:par_filter_1}). They are determined according to the degree of cubature rules, $d_x$ and $d_\theta$ according to the employed cubature rules. 

This paper designs the degree of cubature rules according to the prior knowledge on  $\phi_1(\cdot)$ and $\phi_2(\cdot)$ functions.
Specifically, the function in (\ref{eq:state_filter_1}) can be obtained by the \textit{$d^x$th-degree} cubature rules if it is exact for the nonzero function $\phi_1(x_k)$ whose components are linear combinations of monomials having coefficients $a_{\bm{\alpha}}$ and monomials $\bm{x}^{\bm{\alpha}}= \prod_{i=1}^{n_x}x_i^{\alpha_i}$, with the total degree up to $d^x$. In other words, the monomials integers   in  state dynamics satisfy the following
\begin{equation}
\label{eq:dx}
d^x= \text{max} \{\sum_{i=1}^{n_x}|\alpha_i|: a_{\alpha} \neq 0\}
\end{equation}

In case of $\alpha_1+\cdots+\alpha_{n_x} >d^x$ in $\phi_1(\cdot)$, the \textit{$d^x$th-degree} cubature rules cannot be exactly approximated by the monomials with the \textit{$d^x$th-degree} accuracy, since the \textit{$d^x$th-degree} cubature rule can reach a $d^x$th-degree of accuracy \cite{jia2013high}. Similarly, the parameter estimation is accomplished by the $d^{\theta}$th-degree cubature rules if the parameter dynamics fulfill $d^\theta= \text{max} \{\sum_{j=1}^{n_\theta}|\beta_j|: b_{\beta} \neq 0\}$, where $\bm{\theta}^{\bm{\beta}}= \prod_{i=1}^{n_{\theta}}\theta_j^{\beta_j}$ denotes the monomials. From approximation accuracy point of view, the choice of degrees are essentially case-dependent given the priori knowledge on the underlying system dynamics.
\vspace{1mm}
\\
\textbf{Problem Reformulation:} Construct appropriate sets  $\{\xi_{i}^{d_x},{w}_i^{d_x}\}$ and $\{\xi_{j}^{d_\theta},{w}_j^{d_\theta}\}$, $i=1,\cdots, N_{d_x}$, $j=1,\cdots, N_{d_\theta}$ to approximate the involved Gaussian weighted integrals (\ref{eq:state_filter_1}) and (\ref{eq:par_filter_1}) in the dual estimation scheme  by proposing proper $d_x$th-degree and  $d_\theta$th-degree cubature rules for state and  parameter estimation problems, respectively. 

Importantly, the designed case-dependent degree of cubature rules and the employed cubature formulas will eventually lead to different CNFs. For achieving reliable fault diagnosis performance though designing efficient, accurate and stable dual-CNF scheme, this paper will not limit to one solution, but provide comparisons and evluations on various choices.  In this case, we can provide a generalized solution to arbitrary nonlinear system order with arbitrary degree of accuracy, or that can particularly perform quite satisfactorily on specific cases of nonlinear systems such as the GTE for addressing the fault diagnosis problem. 

With respect to the application to GTE system, the unmodelled dynamics, model mismatches, parametric uncertainties, and noise discrepancy between the actual GTE and the on-board engine model (OBEM) can increase the fault detection time and lead to occurrence of false alarms and incorrect fault severity estimation results. The details are provided in Section IV.D. In order to verify the robustness of our proposed methodology in presence of measurement uncertainties, robustness analysis of the proposed CNF and FD methodologies on fault estimation performance against them is conducted, based on the Assumption 1 below.
\vspace{1mm}\\
\textbf{Assumption 1:} The measurement uncertainty is bounded by  $\zeta(x_k,u_k) \le \bar{\zeta}$, for $\bar{\zeta}>0$, which is present in the measurement equation as below:
\begin{equation}
\label{eq:nonlinear_sys_measure_uncertain}
z_k=g(x_k, \theta_{k},u_k)+\zeta(x_k,u_k)+v_{k}
\end{equation}
Acceptable ranges on modeling uncertainties that do not lead to false alarms will be specified   under various healthy and faulty scenarios in Section V.

\section{Cubature-Based Nonlinear Filters (CNFs)}\label{sec3}

The goal of this section is twofold. First, a class of cubature rules are presented in an accessible manner to aid  in  implementation of these methods. Second, a class of CNFs is constructed based on cubature rules for the nonlinear estimation problem.

\subsection{The Construction of Sets $\{\xi_{i}^{d_x},{w}_i^{d_x}\}$ and $\{\xi_{j}^{d_\theta},{w}_j^{d_\theta}\}$}
\label{sec3-1}

As stated in Section \ref{sec2}, the objective of the dual estimation problem can be transformed into designing the sets $\{\xi_{i}^{d_x},{w}_i^{d_x}\}$ and $\{\xi_{j}^{d_\theta},{w}_j^{d_\theta}\}$. This section presents a class of cubature rules where for sake of generality an $n$-dimensional nonlinear estimation problem is considered and the general set $\{\xi_{i}^{d},{w}_i^{d}\}$ will be constructed. The integral with respect to a general Gaussian distribution   $\mathcal{N}(x;\hat{x},P)$ is approximated by
\begin{equation}
\label{eq:integral}
\mathcal{I}(\phi)=\int_{\mathbb{R}^{n}} \phi(x)\mathcal{N}(x;\hat{x},P)dx \approx \sum_{i=1}^{N} w_i \phi(S\xi_{i}+\hat{x})
\end{equation}
where $\phi$ denotes an arbitrary nonlinear function, and $P=SS^T$. The weight $w_i$, the sampled cubature points $\xi_i$ and the total number of points $N$ are determined  and specified  in the following subsections by analyzing various cubature rules.

Generally, the choice of cubature points $\xi_{i}^{d}$ depends on the domain of integration. This paper investigates the cubature rules over both spherical surfaces and the entire $n$-D space surfaces. Of interest is the spherical surface that is concerned with both Genz's theorem \cite{genz2003fully} and the Mysovskikh's theorem \cite{mysovskikh1980approximation} for obtaining the corresponding cubature rules.

\subsubsection{\underline{Cubature Rules Over Spherical Surface}}

The integral in (\ref{eq:integral}) can be transformed into the following spherical-radial coordinate system
\begin{equation}
\label{eq:SR_coordinate}
\mathcal{I}(\phi)= \int_{0}^{\infty} \int_{U_n} \phi(r \textbf{s}) r^{n-1} \text{exp} (-r^2) d \sigma dr
\end{equation}
where the spherical surface $U_n=\{\textbf{s} \in \mathbb{R}^n : \textbf{s}^T \textbf{s}=1 \}$, with $x=r \textbf{s}$
and $r=\sqrt{x^T x}$, $\textbf{s}=[\textbf{s}_1, \cdots, \textbf{s}_n]^T$, and $\sigma(\cdot)$ denotes the spherical surface measure. The integrals in  (\ref{eq:SR_coordinate}) can be addressed by separately approximating two sub-integrals, namely \textit{(a)} the spherical integral $\int_{U_n} \phi_s(\textbf{s}) d \sigma(\textbf{s})$, and \textit{(b)} the radial integral $\int_{0}^{\infty} \phi_r(r) r^{n-1} \text{exp}(-r^2) dr$ \cite{arasaratnam2009cubature}.

$\bullet$ {\textit{\underline{The Spherical Rule}}}: It is utilized to solve the spherical integral as $\sum_{p=1}^{N_s} w_{s,p} \phi(\textbf{s}_p)$, where $\textbf{s}_p$ and $w_{s,p}$ denote spherical points and weights.
We concentrate on two rules to compute the spherical integral, namely: \textit{(i)} \textit{Genz spherical rule} \cite{genz2003fully} that allows a system with an arbitrary order achieves an arbitrary degree of accuracy, and \textit{(ii)} \textit{Mysovskikh spherical rule} \cite{mysovskikh1981interpolatory} that enables one to realize a more efficient approximation when compared to the Genz spherical rule for systems having order $n \ge 4$. Specifically, the Genz method constructs \textit{$(2m+1)$th-degree} spherical rule over the surface of the sphere $U_n$, where $m= \rho_1 + \rho_2 +\cdots + \rho_n$, with $\rho_p$ denoting nonnegative integers.
For designing the $d$th-degree spherical rule, one sets $d=2m+1$ and analyze each non-negative integer. Mysovskikh in \cite{mysovskikh1980approximation} derived a rule based on the transformation group of regular simplex with vertices $\textbf{a}^{(p)}=[a_{1}^{(p)},a_{2}^{(p)},\cdots,a_{n}^{(p)}]^T$, $p=1,2,\cdots,n+1$. For designing the $d$th-degree spherical rule, various topologies are considered, as provided in Algorithm 1.

$\bullet$ {\textit{\underline{The Radial Rule}}}: It aims to solve the radial integral in  (\ref{eq:SR_coordinate}) as $\sum_{q=1}^{N_r} w_{r,q} \phi({r}_q)$, where ${r}_q$ and $w_{r,q}$ denote radial points and weights. Moment matching method is employed for computing the radial rule. The key idea is to determine radial points and weights that satisfy the moment equations based on the rule degree and the system order. For more details refer to \cite{arasaratnam2009cubature}.

Consequently, (\ref{eq:SR_coordinate})  can further be formulated as
\begin{equation}
\label{eq:sphe_radial_cubature}
\begin{split}
\mathcal{I}(\phi)&
\approx \sum_{q=1}^{N_r} \sum_{p=1}^{N_s} w_{r,q} w_{s,p} \phi(r_q \textbf{s}_p)
\end{split}
\end{equation}
where the sets $(\textbf{s}_p, w_{s,p})$ and $(r_q, w_{r,q})$ are obtained by using the spherical and radial rules, with $N_s$ and $N_r$ denotes the corresponding required number of points. respectively, as provided in Algorithm 1. 

\begin{definition}
	\label{def1}
	The set $\{\xi_i^{d},{w}_i^{d}\}$ based on the spherical surface by using the spherical-radial cubature rules is constructed as
	\begin{equation}
	\label{eq:set1}
	\begin{split}
	\xi_i^{d}&\in \{\sqrt{2r_q \textbf{s}_p}, p\!=1,\!\cdots\!,N_s, q=1,\!\cdots\!, N_r\}\\
	{w}_i^{d}&\in \{w_{r,q}  w_{s,p} /\pi^{n/2}, p\!=1,\!\cdots\!,N_s, q=1,\!\cdots\!, N_r\}
	\end{split}
	\end{equation}
	where $i=1,\cdots N$, with $N=N_r \times N_s$ if $r_q \neq 0$, and $N=(N_r-1) \times N_s+1$ if one of $r_q$ in (\ref{eq:sphe_radial_cubature}) is zero.
\end{definition}
\begin{thm} [\cite{moller1979lower}]
	The number of nodes $N$ of a cubature of degree $d=2s-1$ satisfies $N \geq N_{min}$ with
	\begin{equation}
	\footnotesize
	\nonumber
	N_{min}\!=\!\!\left\{\!\!\begin{array}{ll}
	\left(\begin{array}{l}
	n\!+\!s-1 \\
	n
	\end{array}
	\right)\!+\!\sum_{k=1}^{n-1}2^{k-n}	\left(\begin{array}{l}
	k\!+\!s-1 \\
	k
	\end{array}
	\right), & $s$ \ even\\
	\left(\begin{array}{l}
	n\!+\!s-1 \\
	n
	\end{array}
	\right)\!+\!\sum_{k=1}^{n\!-\!1}(1\!-\!2^{k\!-\!n})	\left(\begin{array}{l}
	k\!+\!s-2 \\
	k
	\end{array}
	\right), & $s$ \ {odd}
	\end{array}
	\right.
	\end{equation}
\end{thm}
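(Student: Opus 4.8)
\emph{Proof proposal.} The plan is to recast the statement as a dimension/rank inequality for polynomial spaces and then to exploit the invariance of the (standard) Gaussian weight under coordinate sign changes. First I would normalize: since the number of nodes is unchanged under an invertible affine substitution $x\mapsto Sx+\hat{x}$, it suffices to treat the weight $\mathcal{N}(x;0,I_n)$, whose integration functional $L(p)=\int_{\mathbb{R}^n}p(x)\,\mathcal{N}(x;0,I_n)\,dx$ is positive definite on polynomials and invariant under the group $G=\{\pm1\}^n$ acting by independent sign flips of the coordinates. Accordingly, the space $\mathbb{P}^n_m$ of polynomials of total degree at most $m$ splits as $\bigoplus_{\epsilon\in\{0,1\}^n}\mathbb{P}^n_m[\epsilon]$ into parity sectors (parity prescribed in each variable), and $L(pq)=0$ whenever $p,q$ lie in different sectors. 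Then I would record the classical (Stroud) part: for a rule $\mathcal{I}(\phi)\approx\sum_{i=1}^N w_i\phi(x_i)$ of degree $d=2s-1$, the bilinear form $(p,q)\mapsto L(pq)$ on $\mathbb{P}^n_{s-1}$ is well defined (since $2(s-1)\le 2s-1$), is given by exactness as $\sum_i w_i p(x_i)q(x_i)$, and is positive definite; hence no nonzero element of $\mathbb{P}^n_{s-1}$ vanishes at every node, forcing $N\ge\dim\mathbb{P}^n_{s-1}=\binom{n+s-1}{n}$, the leading term of $N_{min}$.

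The improvement exploits that the rule is exact up to degree $2s-1=s+(s-1)$, one more than the $2s-2$ used above. Let $\mathcal{O}_s$ be the space of degree-$s$ polynomials that are $L$-orthogonal to $\mathbb{P}^n_{s-1}$; for the Gaussian these are the total-degree-$s$ orthogonal polynomials, $\dim\mathcal{O}_s=\binom{n+s-1}{n-1}$, each of total parity $(-1)^s$. For every $R\in\mathcal{O}_s$ and $q\in\mathbb{P}^n_{s-1}$ one has $\sum_i w_i R(x_i)q(x_i)=L(Rq)=0$. Writing $V$ for the matrix with rows $(\sqrt{w_i}\,P_j(x_i))_i$ over a basis $\{P_j\}$ of $\mathbb{P}^n_{s-1}$ and $W$ for the analogous matrix over a basis of $\mathcal{O}_s$, this says $WV^{T}=0$, so the two row spaces are orthogonal in $\mathbb{R}^N$ and therefore $N\ge\operatorname{rank}V+\operatorname{rank}W=\dim\mathbb{P}^n_{s-1}+\operatorname{rank}W$, since $VV^{T}$ is the positive definite Gram matrix and $V$ has full row rank.

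It remains to bound $\operatorname{rank}W$ from below, and this is where the sign-change symmetry is indispensable and where the nontrivial coefficients $2^{k-n}$ appear. The plan is to decompose $\mathcal{O}_s$ and the restriction of the rule by the parity sectors above and to run an induction on the dimension $n$: prescribing the parity of one distinguished coordinate to be $0$ or $1$ reduces the $n$-variable count to two $(n-1)$-variable counts, each contributing half of the previous amount (hence the powers of $\tfrac12$), with base case $n=1$ the exact Gauss rule of $s$ nodes. Carried through, this yields $\operatorname{rank}W\ge\sum_{k=1}^{n-1}2^{k-n}\binom{k+s-1}{k}$ for $s$ even and $\operatorname{rank}W\ge\sum_{k=1}^{n-1}(1-2^{k-n})\binom{k+s-2}{k}$ for $s$ odd, the two cases splitting because the parity $(-1)^s$ of $\mathcal{O}_s$ changes which sub-sectors survive, and the identifications $\binom{k+s-1}{k}=\dim\mathbb{P}^k_{s-1}$ and $\binom{k+s-2}{k}=\dim\mathbb{P}^k_{s-2}$ are exactly what make the recursion close; a short binomial (hockey-stick) manipulation then matches $\dim\mathbb{P}^n_{s-1}+\operatorname{rank}W$ with the stated $N_{min}$. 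I expect the genuine obstacle to be precisely this last step: extracting the \emph{exact} symmetry-induced lower bound on $\operatorname{rank}W$ — showing not merely that the degree-$s$ orthogonal polynomials force some extra nodes, but that they force exactly the claimed number — which is the delicate combinatorial core of Möller's argument, whereas the reductions in the earlier steps are routine linear algebra.
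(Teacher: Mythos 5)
This theorem is not proved in the paper at all: it is imported verbatim from M\"oller's 1979 paper and used only to evaluate $N_{min}$ for $d=3$ and $d=5$, so there is no in-paper argument to compare yours against; the comparison has to be with M\"oller's own proof, whose strategy your outline correctly identifies (the Radon/Stroud bound $N\ge\dim\mathbb{P}^n_{s-1}=\binom{n+s-1}{n}$ from positive definiteness of $L$ on $\mathbb{P}^n_{s-1}$, plus an extra rank contribution coming from the degree-$s$ orthogonal polynomials and the sign-change invariance of the weight).

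The proposal nevertheless has a genuine gap, and you name it yourself: the entire content of the theorem beyond the classical term $\binom{n+s-1}{n}$ is the \emph{exact} lower bound $\operatorname{rank}W\ge\sum_{k=1}^{n-1}2^{k-n}\binom{k+s-1}{k}$ (even $s$), resp.\ $\sum_{k=1}^{n-1}(1-2^{k-n})\binom{k+s-2}{k}$ (odd $s$), and in your write-up this is announced (``Carried through, this yields\ldots'') rather than derived; the parity-sector induction, the role of the base case, and why prescribing the parity of one coordinate halves the count are exactly the delicate part of M\"oller's argument, so as it stands the proof reduces the theorem to its hardest lemma without proving it. A second, more technical flaw: your rank-additivity step writes rows $\bigl(\sqrt{w_i}\,P_j(x_i)\bigr)_i$ and invokes a positive definite Gram matrix $VV^T$, which tacitly assumes $w_i\ge 0$. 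The theorem must hold for cubature rules with negative weights --- precisely the situation this paper cares about (cf.\ its stability factor $\mathrm{SF}$ and the discussion of negative weights for the \textit{5th-degree} rules when $n\ge 4$) --- and with indefinite weights the inner product $\sum_i w_i u_i v_i$ on $\mathbb{R}^N$ is not definite, so ``orthogonal row spaces'' no longer yields $N\ge\operatorname{rank}V+\operatorname{rank}W$. M\"oller's proof avoids this by working with the unweighted node-evaluation matrices and the ideal of polynomials vanishing at the nodes; your argument would need the same repair (note that the first bound $N\ge\dim\mathbb{P}^n_{s-1}$ survives without positivity, since a polynomial of degree $\le s-1$ vanishing at all nodes gives $0=\sum_i w_i p(x_i)^2=L(p^2)>0$ by exactness at degree $2s-2$, independently of the signs of the $w_i$).
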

It is easy to check that for $d=3$, $N_{min}=2n$, while $d=5$ yields $N_{min}=n^2+n+1$. Given that one of our main goals is to investigate the FD problem of GTE system,  efficiency concerns in determining whether our approach is applicable to a practical problem is of significant importance. In this case, a special but efficiency \textit{5th-degree} cubature rule that is integrated over the entire n-D space is considered as follows.

\subsubsection{\underline{Efficient Cubature Rule Over the Entire n-D Space}}
Given that one of our main goals is to investigate the FD problem of GTE system,  efficiency concerns in determining whether our approach is applicable to a practical problem is of significant importance. In this case, a special but efficiency \textit{5th-degree} cubature rule is considered. 

\begin{definition}
	\label{def2}
	The set $\{\xi_i^{d},{w}_i^{d}\}$ based on n-D surface by using the fifth-degree modified Stroud's Theorem is constructed as
	\begin{equation}
	\label{eq:set2}
	\begin{split}
	\xi_i^{d}\in\{\ \underbrace{\bm{\nu}_1,\ -\bm{\nu}_1}_2, \ \ \underbrace{\bm{\nu}_2,\ -\bm{\nu}_2}_{2C_n^1=n},\underbrace{\bm{\nu}_3,\ -\bm{\nu}_3}_{2C_n^2\!=n(n\!-1)\!/2}\!\!\!\}\\
	\end{split}
	\end{equation}
	where $\bm{\nu}_1$,  $\bm{\nu}_2$, and $\bm{\nu}_3$ denote the points that are related to the system order $n$ that have been given below in Algorithm 1, for $i=1,\cdots N$ with $N=n^2+n+2$. For further detail on the weights ${w}_i^{d}$ and all other coefficients refer to \cite{stroudapproximate}. 
\end{definition}

One downside of this rule is that it is only valid for a limited range of system orders of $2 \le n \le 7$, it is perhaps the most efficient one possible among the \textit{5th-degree} rules since it requires only \textit{one} point more than the lower bound that is given by \cite{moller1979lower}. A variety of dynamical systems fulfill such condition, including the state dynamics of our GTE system. Therefore, the efficient cubature rule that is derived over the entire $n$-D space as proposed in Stroud \cite{stroudapproximate} will also be discussed in this paper.

Finally, given the logic of determining the degrees \textit{$d^x$} and \textit{$d^\theta$} in Section \ref{sec2}, and the known dimensions of $n_x$ and $n_\theta$, the objective sets $\{\xi_i^{d_x},{w}_i^{d_x}\}$ and $\{\xi_j^{d_\theta},{w}_j^{d_\theta}\}$ could be consequently constructed by using the set $\{\xi_i^{d},{w}_i^{d}\}$.
\vspace{1mm}\\
{\small
	\textit{\textbf{Algorithm 1: Pseudo-code for designing $d$th-degree CR}}\\
	\textbf{Input}: Desired degree $d$, System dimension $n$: \\
	\textbf{Output}: $\mathcal{I}(\phi)$=\bf{CubatureRule}($d,n$): \\
	1: \ding{111} \textbf{Cubature rules over spherical surface using (\ref{eq:sphe_radial_cubature})}:  \\
	2: \quad \textbf{$\bullet$ $d$th-degree spherical rules $[s_p^{d},w_{s,p}^{d}]$}: \\				
	3: \quad \quad \ding{192} $[\bm{s}_p^{d},w_{s,p}^{d}]=\textbf{Genz}(n,d)$:  \\
	\textbf{4}: \quad \quad \quad \quad Determine all possible nonnegative $\bm{\rho}=[\rho_1,\cdots, \rho_n]$\\
	\textbf{5}: \quad \quad \quad \quad \quad with $|\rho|=m$ and $d=2m+1$\\
	\textbf{6}: \quad \quad \quad \quad For each possible $[\rho_1,\cdots, \rho_n]$, derive \cite{genz2003fully}\\
	\textbf{7}: \quad \quad \quad \quad \quad $\textbf{s}_p^d\!\!=\![v_1u_{\rho_1},\!\!\cdots\!,v_nu_{\rho_n}]^T$, \\
	\textbf{8}: \quad \quad \quad \quad \quad \quad with $u_{\rho_p}\!\!=\!\sqrt{\rho_p/m}$, $v_p\!\!=\! \pm 1$\\
	\textbf{9}: \quad \quad \quad \quad \quad Calculate the weight $w_{s,p}^d=2^{-c(\textbf{u}_p)}w_{\bm{\rho}}$\\
	\textbf{10}: \quad \quad \quad \quad  End\\
	11: \quad \quad  \ding{193} $[s_p^{d},w_{s,p}^{d}]=\textbf{Mysovskikh}(n,d)$: \\
	12: \quad \quad \quad \quad $\textbf{s}_p^d$: Transformed topologies of regular simplex with:	\\
	\textbf{13}: \quad \quad \quad \quad \ $[a_{p,1},\cdots,a_{p,n}],p\!=\!1,\cdots\!,n\!+\!1, i\!=\!1,\!\cdots\!,n$	\\
	\textbf{15}: \quad \quad \quad		\quad  	\
	{\footnotesize$
		a_{p,i}\!\triangleq \! \left\{\!\!\!\!\!\! \!\!\!\!\!
		\begin{array}{ccc}
		& \!-\!\sqrt{(n\!+\!1)/\left[n(n\!-\!i\!+\!2)(n\!-\!i+1)\right]},&i\!<\!p \\
		&\!	\!-\!\sqrt{(n\!+\!1)(n\!-\!p\!+\!1)/\left[n(n\!-\!i\!+\!2)\right]},&i\!=\!p\\
		&	0,&i\!>\!p
		\end{array}
		\right.
		$}\\
	\textbf{16}: \quad \quad \quad \ $w_{s,p}$: derived in \cite{mysovskikh1981interpolatory}\\
	17: \quad \textbf{$\bullet$ $d$th-degree radial rules}\\ 
	18: \quad \quad  $[s_q^{d},w_{r,q}^{d}]=\textbf{MomentMatching}(n,d)$ \cite{arasaratnam2009cubature} \\
	\textbf{19}: \quad \quad  Solve $\sum_{q\!=\!1}^{N_r} w_{r,q} \phi(r_q)\!=\!\frac{1}{2}\Gamma(\frac{n\!+l}{2})$, $l\!=\!0,2,\!\cdots\!,2m$ \cite{arasaratnam2009cubature}\\
	20:  \textbf{$\Rightarrow$ $\mathcal{I}(\phi)$=SphericalSurface $(n,d,s_q^{d},w_{s,q}^{d},s_p^{d},w_{r,p}^{d})$} using\\
	21: \quad \quad  \quad \quad\quad $\sum_{q=1}^{N_r} \sum_{p=1}^{N_s} w_{r,qq} w_{s,p} \phi(r_q \textbf{s}_p)$	\\
	22: \ding{111} \textbf{Efficient \textit{5th-degree} cubature rules over n-D space}\\
	23: \textbf{  $\mathcal{I}(\phi)$=EntireSurface $(n,\nu_1,\nu_2,\nu_3,w_1,w_2,w_3)$} \\
	24: \quad Re-defined: $\nu_1=[\eta,\eta,\cdots,\eta]$, $\nu_2=[\lambda,\xi,\cdots,\xi]$, \\
	25: \quad \quad \quad \quad \quad \quad
	$\nu_3=[\upsilon,\upsilon,\gamma,\cdots,\gamma]$,\\
	26: \quad \quad \quad   $\upsilon=(-3 \pm \sqrt{16-2n})\gamma$, $\gamma^2=\frac{3 \pm \sqrt{7-n}}{2(16-n\pm 4 \sqrt{16-2n})}$,\\
	27: \quad \quad \quad 
	$\eta^2=\frac{n(n-7)\mp (n^2-3n-16)\sqrt{7-n}}{2(2n^3-7n^2-16n+128)}$\\
	28: $\Rightarrow$ $\mathcal{I}(\phi)=w_1\left[\phi(\bm{\nu_1})+\phi(-\bm{\nu_1})\right]+w_2\left[\phi(\bm{\nu_2})+\phi(-\bm{\nu_2})\right]$			\\
	\quad \quad \quad \quad \quad  $+w_3\left[\phi(\bm{\nu_3})+\phi(-\bm{\nu_3})\right]$
}
\vspace{1mm}\\
\textbf{Remark 1.} Each class of cubature rules exhibits different advantages and disadvantages in dealing with various nonlinear systems. This paper compares their performance of approximation accuracy, numerical stability and computational cost when applied to a complex GTE system. This should provide guidelines to be used as benchmark and reference for handling other systems and applications. \\
\textbf{Remark 2.} For tackling FD of GTE system that is considered in this paper the degree of the cubature rule selected  is up to the \textit{5th-degree}. Using higher degrees of cubature rules \textit{do not} yield improved or better solutions for the GTE case study. A significant observation is that a cubature rule over spherical surface that is constituted by the \textit{$d$th-degree} spherical rule and the \textit{$d$th-degree} radial rule can achieve a \textit{$d$th-degree} estimation accuracy. Nevertheless, the degree of the spherical rule as shown in Algorithm 1 is not necessarily equal to the radial rule degree which facilitates and motivates  the \textit{mixture-degree} of cubature rules over the spherical surface \cite{jia2013high,wang2017mixed}.

\subsection{Cubature-based Nonlinear Filters (CNF)}
The proposed CNF in this paper represent as Bayesian filtering approaches that are developed on the basis of the class of $d$th-degree cubature rules. We have utilized the Genz and Mysovskikh theorem-based  cubature rules, and a specially designated but efficient \textit{5th-degree} cubature rule for systems with order $2 \le n \le 7$. The pseudo-code of  procedures for designing the $d$th-degree cubature rules are provided in Algorithm 1 that are derived from  references \cite{genz2003fully,jia2013high,lu2004higher,mysovskikh1980approximation,mysovskikh1981interpolatory,stroudapproximate,arasaratnam2009cubature}.

The appropriate choice of  cubature rules for developing a nonlinear estimation filter essentially relies on the prior knowledge of the system and the user's requirements. First, the degree $d$ that is obtained from the \textit{a priori} knowledge of the system dynamics order (based on Eq. (\ref{eq:dx})) enables one to achieve approximations that are empowered with high accuracy and second, the prior knowledge of the order $n$ enables one to seek a more specific theorem for developing the cubature rules that provide the most  proper achievable approximation given the order range and third, the trade-offs to be made between user's computational efficiency requirements, degree of accuracy, and finally estimation error boundedness guarantees  (since higher $d$ and $n$ result in higher computational complexity and increase in likelihood of negative weights).

\begin{table}[h]
	\caption{The CNF  used for GTE System based on class of cubature rules.}
	\vspace{-3mm}	
	\label{tab:CNBFs}
	\centering
	\footnotesize
	\setlength{\tabcolsep}{1pt}
	\begin{tabular}{ll}
		\toprule
		{CNF} & {Description}\\ \cmidrule(lr){1-1}
		\cmidrule(lr){2-2}
		CNF-I   & \textit{3rd-degree} Genz-spherical rule \& \textit{3rd-degree} radial rule \\
		CNF-II  & \textit{5th-degree} Genz-spherical rule \& \textit{5th-degree} radial rule \\
		CNF-III & \textit{3rd-degree} Mysovskikh-spherical rule \& \textit{3rd-degree} radial rule \\
		CNF-IV  & \textit{5th-degree} Mysovskikh-spherical rule \& \textit{5th-degree} radial rule \\
		CNF-V   & \textit{3rd-degree} Mysovskikh-spherical rule \& \textit{5th-degree} radial rule \\
		CNF-VI   & \textit{5th-degree} Stroud-based cubature rule \\
		\bottomrule
	\end{tabular}   	
\end{table}

The procedure for developing CNF consists of prediction and update steps that are identified in conventional CKF using \textit{3rd-degree} cubature rules, whereas for our schemes   multiple cubature degree rules are employed with associated different projected cubature points and  weights. The six  CNF schemes constructed  for application to GTE are given in Table \ref{tab:CNBFs}. 

\section{Hybrid-Degree Dual Estimation-based Fault Diagnosis Methodology}
\label{sec4}

A novel hybrid-degree dual nonlinear filtering strategy is proposed in this section. To improve the robustness to unmodelled dynamics and uncertainties, a modified cubature point propagation is further incorporated into the hybrid solution. Finally, the fault diagnosis framework is formulated.

\subsection{Hybrid-Degree Dual Estimation Strategy}
Our proposed dual estimation scheme is developed by running two filters concurrently. At every time step, the first CNF-based state filter estimates the states by using the current available estimate of the parameters, $\hat{{\theta}}_{k-1|k-1}$, wheares the second CNF-based parameter filter estimates the health-related parameters by using the current estimate of the states, $\hat{{x}}_{k|k}$.
\\
\textbf{Remark 3.} The key feature  and novelty of our proposed hybrid strategy is that the degree of accuracy for state and parameter estimations are \textit{case-dependent} or determined based on certain performance metrics by using case-dependent cubature rules. Given that the system state process model is highly nonlinear, in general, higher-degree cubature rules are necessary for designing the state estimation filter, whereas for performing parameter estimation lower-degree cubature rules can be acceptable and sufficient. 

In the next subsections, design of concurrent state/parameter estimation filters are provided and the fault diagnosis methodology is introduced and formally specified.

\subsection{$d^x$th-degree Cubature-based State Estimation}
The goal pursued in this subsection is to approximate the objective function in $\mathbb{E}(\phi_1(x_k)|z_{1:k},{\theta}_{k-1})$ that is specified in Eq. (\ref{eq:obj_state}).
The parameter vector ${\theta}_{k-1}$ is assumed to be given and fixed at $\hat{\theta}_{k-1}$ during the state estimation process. Assume that the \textit{$d^x$th-degree} cubature rules are implemented for designing the state estimation filter given the cubature points and weights that are specified and set as $\{\xi_i^{d_x},w_i^{d_x}\}$, $i=1,\cdots,N_{d_x}$.

The class of CNF schemes that are considered for implementing the state estimation filter is introduced next. Given $\phi_1(x_k)=x_k$ and the distribution at time $k-1$ as $\hat{x}_{k-1|k-1} \sim \mathcal{N}(x_{k-1};\hat{x}_{k-1|k-1}, P_{k-1|k-1}^{xx})$, an approximation to predictive $\mathbb{E}(x_k|\theta_{k-1},z_{1:k-1})$ can be first obtained before $z_k$ arrives.
For simplicity, let us denote $\bar{U}_{k-1}=[\hat{\theta}_{k-1}^{T}, u_{k-1}^{T}]^T$. 

{\textit{\textbf{CNF-I}}}:
For the \textit{3rd-degree} nonlinear filter the predictive expectation $\mathbb{E}(x_k|z_{1:k-1},\theta_{k-1})$ is represented by:
\begin{equation}
\footnotesize
\begin{aligned}
&\mathbb{E}(\cdot)\!=w(n_x) \sum_{i=1}^{m(n_x)} \left[{f}\left((\sqrt{\kappa_{I}^x{P_{k-1}^{xx}}}[\textbf{e}]_i+\hat{x}_{k-1}),\bar{U}_{k-1}\right)+{f}\left((-\sqrt{\kappa_{I}^xP_{k-1}^{xx}}[\textbf{e}]_i+\hat{x}_{k-1}), \bar{U}_{k-1}\right)\right]\\
\end{aligned}
\end{equation}
where $w(n_x)={1}/{2{n_x}}$, $m(n_x)=n_x$, $[\textbf{e}]_i$ denotes the unit vector in $\mathbb{R}^{n_x}$ with the $i$th element being 1, and $\kappa_{I}^x=n_x$.  

{\textit{\textbf{CNF-II}}}: For the \textit{5th-degree} nonlinear filter the predictive expectation $\mathbb{E}(x_k|z_{1:k-1},\theta_{k-1})$ is approximated by:
\begin{equation}
\label{eq:CNF2}
\footnotesize
\begin{aligned}
&\mathbb{E}(\cdot)\!=\!w_0(n_x) {f}(\hat{x}_{k-1},\! \bar{U}_{k-1})\!+\!w_1(n_x)\!\!\! \!\sum_{i=1}^{m_1(n_x)}\! \!\left[{f}\left((\sqrt{\kappa_{II}^x {P_{k-1}^{xx}}}[\textbf{s}]_i^{+}\!\!+\!\hat{x}_{k-1}),\bar{U}_{k-1}\right)\!\!+\! {f}\left((\!-\!\sqrt{\kappa_{II}^x {P_{k-1}^{xx}}} [\textbf{s}]_i^{+}\!+\!\hat{x}_{k-1}), \bar{U}_{k-1}\right)\right]\!\! \\
&\!\!+\!w_1(n_x)\sum_{i=1}^{m_2(n_x)}\!\!\! \left[{f}\left((\sqrt{\kappa_{II}^x {P_{k-1}^{xx}}} [\textbf{s}]_i^{-}\!+\!\hat{x}_{k-1}),\! \bar{U}_{k-1}\right)\! +\! f\left((\!-\!\sqrt{\kappa_{II}^x {P_{k-1}^{xx}}} [\textbf{s}]_i^{-}\!+\!\hat{x}_{k-1}),\bar{U}_{k-1}\right)\right]\! \\
&+\!w_2(n_x)  \sum_{i=1}^{m_3(n_x)} \left[{f}\left((\sqrt{\kappa_{II}^x {P_{k-1}^{xx}}} [\textbf{e}]_i+\hat{x}_{k-1}),   \bar{U}_{k-1}\right) +f\left((-\sqrt{\kappa_{II}^x {P_{k-1}^{xx}}} [\textbf{e}]_i+\hat{x}_{k-1}),\bar{U}_{k-1}\right)\right]
\end{aligned}
\end{equation}
where $m_1(n_x)=m_2(n_x)={n_x}({n_x}-1)/2$, and $m_3(n_x)=n_x$. The weights are provided in Table \ref{tab:sum_points_weights_states}.
The points $[\textbf{s}]_i^{+}$ and $[\textbf{s}]_i^{-}$ are defined as $\sqrt{{1}/{2}}([\textbf{e}]_k+[\textbf{e}]_l):k<l,k,l=1,\cdots n_x$ and $\sqrt{{1}/{2}}([\textbf{e}]_k-[\textbf{e}]_l):k<l,k,l=1,\cdots n_x$, respectively,  and $\kappa_{II}^x=n_x+2$ is the scaling factor.

{\textit{\textbf{CNF-III}}}: For the \textit{3rd-degree} nonlinear filter the predictive expectation $\mathbb{E}(x_k|z_{1:k-1},\theta_{k-1})$ is represented by:
\begin{equation}
\label{eq:CNF3}
\footnotesize
\begin{aligned}
&\mathbb{E}(\cdot)=w(n_x) \sum_{i=1}^{m(n_x)} \left[{f}\left((\sqrt{\kappa_{III}^x {P_{k-1}^{xx}}}[\textbf{a}]_i+\hat{x}_{k-1}),\bar{U}_{k-1}\right)+ {f}\left((-\sqrt{\kappa_{III}^x {P_{k-1}^{xx}}}[\textbf{a}]_i+\hat{x}_{k-1}),\bar{U}_{k-1}\right)\right]\\
\end{aligned}
\end{equation}
with $w(n_x)={1}/{2{(n_x+1)}}$, $m(n_x)=n_x+1$, and $\kappa_{III}^x=n_x$.

{\textit{\textbf{CNF-IV}}}: For the \textit{5th-degree} nonlinear filter the predictive expectation $\mathbb{E}(x_k|z_{1:k-1},\theta_{k-1})$ is approximated by:
\begin{equation}
\label{eq:CNF4}
\footnotesize
\begin{aligned}
&\mathbb{E}(\cdot)\!=\!w_0(n_x) {f}\left(\hat{{x}}_{k-1},\bar{U}_{k-1}\right) \!+\!w_1(n_x)\!\!\! \sum_{i=1}^{m_1(n_x)}\!\!\! \left[ f\left((\sqrt{\kappa_{IV}^x {P_{k-1}^{xx}}}\!\!\times\![\textbf{a}]_i\!+\!\hat{x}_{k-1}),\bar{U}_{k-1}\right)\!+\!f\left((-\sqrt{\kappa_{IV}^x {P_{k-1}^{xx}}}[\textbf{a}]_i \!+\!\hat{x}_{k-1}),\bar{U}_{k-1}\right)\right]  \\
& \!+\!w_2(n_x)\sum_{i=1}^{m_2(n_x)}\left[f\left((\sqrt{\kappa_{IV}^x {P_{k-1}^{xx}}}[\textbf{b}]_i+\hat{x}_{k-1}), \bar{U}_{k-1}\right)\!+f\left((-\sqrt{\kappa_{IV}^x{P_{k-1}^{xx}}}[\textbf{b}]_i+\hat{x}_{k-1}),\bar{U}_{k-1}\right)\right]
\end{aligned}
\end{equation}
where $m_1(n_x)=n_x+1$ and $m_2(n_x)=n_x(n_x+1)/2$. The weights are defined in Table \ref{tab:sum_points_weights_states}, and  $\kappa_{IV}^x=n_x+2$.

{\textit{\textbf{CNF-V}}}: For the \textit{mixture-degree} nonlinear filter the predictive expectation $\mathbb{E}(x_k|z_{1:k-1},\theta_{k-1})$ is approximated by:
\begin{equation}
\label{eq:CNF5}
\footnotesize
\begin{aligned}
&\mathbb{E}(\cdot) \!=\!w_0(n_x) {f}(\hat{{x}}_{k-1},\bar{U}_{k-1})\!+\!w_1(n_x)\!\!\sum_{i=1}^{m(n_x)} \!\!\left[ f\left((\sqrt{\kappa_{V}^x {P_{k-1}^{xx}}}[\textbf{a}]_i+\hat{x}_{k-1}),\bar{U}_{k-1} \right)\!+\!f\left((\!-\!\sqrt{\kappa_{V}^x {P_{k-1}^{xx}}}[\textbf{a}]_i\!+\!\hat{x}_{k-1}),\bar{U}_{k-1}\right)\right]\\
\end{aligned}
\end{equation}
where $\kappa_{V}^x=n_x+2$, and the weights are defined in Table \ref{tab:sum_points_weights_states}.

{\textit{\textbf{CNF-VI}}}: For the re-defined \textit{5th-degree} nonlinear filter based on the Stroud's theorem \cite{stroudapproximate} the predictive expectation $\mathbb{E}(x_k|z_{1:k-1},\theta_{k-1})$ is represented by:
\begin{equation}
\label{eq:CNF6}
\footnotesize
\begin{aligned}
&\mathbb{E}(\cdot)=w_1(n_x) \sum_{i=1}^{m_1(n_x)} \left[ f\left((\sqrt{{P_{k-1}^{xx}}}[\bm{\nu_1}]_i+\hat{x}_{k-1}),\bar{U}_{k-1}\right)+f\left((-\sqrt{{P_{k-1}^{xx}}}[\bm{\nu_1}]_i+\hat{x}_{k-1}),\bar{U}_{k-1}\right)\right]+w_2(n_x)\sum_{i=1}^{m_2(n_x)}
\\
&\left[ f\left((\sqrt{{P_{k-1}^{xx}}}[\bm{\nu_2}]_i \!+\!\hat{x}_{k-1}),\bar{U}_{k-1}\right)f\left((\!-\!\sqrt{{P_{k-1}^{xx}}} [\bm{\nu_2}]_i+\hat{x}_{k-1}),+\bar{U}_{k-1} \right)\right]
+w_3(n_x) \sum_{i=1}^{m_3(n_x)}\left[ f\left((\sqrt{{P_{k-1}^{xx}}}[\bm{\nu_3}]_i+\hat{x}_{k-1}),\right.\right.\\
&\left.\left.\bar{U}_{k-1}\right)+f\left((-\sqrt{{P_{k-1}^{xx}}}[\bm{\nu_3}]_i+\hat{x}_{k-1}),\bar{U}_{k-1}\right)\right]
\end{aligned}
\end{equation}
where the weights $w_1(n_x)$, $w_2(n_x)$ and $w_3(n_x)$ are deterministic values with respect to the specific system order. For completeness, the cubature points and weights for all orders can be obtained from \cite{stroudapproximate}, and for $n>7$, some of the cubature points take on complex values.

\begin{table}[!h]
	\caption{Summary of CNF proposed for the state estimation.} \label{tab:sum_points_weights_states}
	\footnotesize
	\vspace{-4mm}
	\renewcommand{\arraystretch}{1.0}
	\setlength{\tabcolsep}{1em} 
	\begin{center}
		\begin{tabular}{l c l l l}
			\toprule
			{Filter}&{{$d^{x}$}}&$\bm{{\xi}}_i^{d_x}$ & $\bm{w}_i^{d_{x}}$ & CR\\
			\cmidrule(lr){1-1} \cmidrule(lr){2-2} \cmidrule(lr){3-4}
			\cmidrule(lr){5-5}
			\multirow{1}{4em}{CNF-I}&{\textit{3rd}}  & $\{[\bm{e}]_i\}$ & $ 1/(2n_x)$ & \cite{genz2003fully,arasaratnam2009cubature}  \\
			\cmidrule(lr){2-2} \cmidrule(lr){3-4}
			\cmidrule(lr){5-5}
			\multirow{4}{4em}{CNF-II}&  & 0  & $ {2}/{(n_x+2)}$    & \\
			&& $\{[\bm{s}]_i^{+}\}$    & $ {1}/{((n_x+2)^2)}$ &  \\
			&{\textit{5th}} & $\{[\bm{s}]_i^{-}\}$    & $ {1}/{((n_x+2)^2)}$ & \cite{jia2013high}
			\\
			&& $\{[\bm{e}]_i\}$  & $ {(4\!-\!n_x)}/{(2(n_x+2)^2)}$  &  \\
			\cmidrule(lr){2-2} \cmidrule(lr){3-4}
			\cmidrule(lr){5-5}
			\multirow{1}{4em}{CNF-III}&{\textit{3rd}}  & $\{[\bm{a}]_i\}$ & $ {1}/{(2(n_x+1))}$ & \cite{mysovskikh1981interpolatory,wang2013spherical}  \\  \cmidrule(lr){2-2} \cmidrule(lr){3-4}
			\cmidrule(lr){5-5}
			\multirow{4}{4em}{CNF-IV}&  & 0  & $ {2}/{(n_x+2)}$  & \\
			&{\textit{5th}}& $\{[\bm{a}]_i\}$   & $ {n_x^2(7\!-\!n_x)}/{(2(n_x\!+\!1)^2 (n_x\!+\!2)^2)}$   & \cite{lu2004higher} \\
			&& $\{[\bm{b}]_i\}$  & $ {2(n_x\!-\!1)^2}/{((n_x\!+\!1)^2 (n_x\!+\!2)^2)}$  &  \\
			\cmidrule(lr){2-2} \cmidrule(lr){3-4}
			\cmidrule(lr){5-5}
			\multirow{2}{4em}{CNF-V}&  & 0  & $ {2}/{(n_x+2)}$  &  \\
			&{\textit{mixture}} & $\{[\bm{a}]_i\}$   & $ {n_x}/{(2(n_x+1) (n_x+2))}$ & \cite{arasaratnam2009cubature,wang2017mixed} \\
			\cmidrule(lr){2-2} \cmidrule(lr){3-4}
			\cmidrule(lr){5-5}
			\multirow{3}{4em}{CNF-VI}&  & $\{\bm{[\nu_1]}\}_i$  &  $w_0 (n_x)$  &  \\
			&{\textit{5th}}& $\{[\bm{\nu_2}]_i\}$   & $ w_1 (n_x) $ & \cite{stroudapproximate}\\
			&& $\{[\bm{\nu_3}]_i\}$   & $ w_2 (n_x) $   &  \\
			\bottomrule
		\end{tabular}
	\end{center}
	Note: ``CR'' denotes the employed cubature rules in the corresponding filters.
\end{table}

Based on $\hat{x}_{k|k-1}=\mathbb{E}(x_k|z_{1:k-1},\theta_{k-1})$ and Table \ref{tab:sum_points_weights_states}, one can further implement the procedure that are identified in conventional CKF using \textit{3rd-degree} cubature rules, whereas for our schemes multiple cubature degree rules are employed with associated different projected cubature points and weights. The differences among the class of CNFs can be identified in terms of the cubature points and weights as determined by the cubature rules that are projected onto the integration domain, leading to different performance on computing the integration of the conditional expectation $\hat{x}_{k|k}=\mathbb{E}(x_k|z_{1:k},\theta_{k-1})$. \\
\textbf{Remark 4.} This subsection explicitly derives a class of CNFs using the compiled cubature rules in Section \ref{sec2}, which enables one in an accessible manner implementation of these methods. The class of CNFs exhibits different advantages and disadvantages in dealing with various nonlinear systems. The performance of the proposed schemes with respect to the approximation accuracy, estimation error boundedness, robustness to unmodelled dynamics and uncertainties, and computational cost when applied to the complex GTE system are quantitatively evaluated and compared in Section \ref{sec5}.

\subsection{$d^\theta$th-degree Cubature-based Parameter Estimation} \label{subsec:par_est}
\subsubsection{Modeling of Parameter Evolution}

In terms of long-term degradation, many works consider an exponential growth with respect to the operating time. For our proposed model-based fault parameter estimation module in this work, we consider simple linear model (Model I) and exponential model (Model II) for the short-term and long-term degradations, respectively.\\
\textbf{\textit{\underline{Model I:}}} For a linear fault or degradation evolution model with uniform time-step the parameter evolution is considered to be governed by
\begin{equation}
\label{eq:par_linear}
\theta_k=\theta_{k-1}+\alpha \Delta t+\tau_k
\end{equation}
where $\alpha$ denotes the growth coefficient and $\Delta t$ denotes a known time-step length.\\
\textbf{\textit{\underline{Model II:}}} For an exponential evolution of the parameter, the model takes the following form
\begin{equation}
\label{eq:par_exp}
\theta_k=e^{\alpha \Delta t} \theta_{k-1}+\beta (1-e^{\alpha \Delta t}) +\tau_k
\end{equation}
where $\alpha$ and $\beta$ denote model coefficients corresponding to the parameter evolution. The coefficient $\beta$ is a scaling factor that can in practice be tuned to better fit the measurement records.

For sake of generality, the state-space model corresponding to  parameters are represented to be  governed by Eq. (\ref{eq:nonlinear_sys_par}),
where $h(\cdot)$ can be  linear as in Model I or exponential as in Model II. The states are assumed to be fixed at $\hat{x}_{k|k}$ that is determined ad specified  from the state estimation filter module.

\subsubsection{CNF for Parameter Estimation}

The main goal here is to approximate the high-dimensional expectation integral $\mathbb{E}(\phi_2(\theta_k)|z_{1:k},{x}_{k})$ given by Eq. (\ref{eq:obj_par}) in the region $x \in \mathbb{R}^{n_\theta}$ on the premise of a Gaussian assumption. In this subsection, it is assumed that state variables $\hat{x}_{k|k}$ are available in order to design our proposed parameter estimation filter. Let us assume that the \textit{$d^\theta$th-degree} cubature rules are implemented for designing the parameter estimation filter, where the cubature points and weights set are given by $\{\xi_j^{d_\theta},w_j^{d_\theta}\}$, $j=1,\cdots,N_{d_\theta}$. The process is referred to as state estimation for obtaining the conditional expectation of $\mathbb{E}(\theta_k|z_{1:k},x_{k})$ and CNF for parameter estimation as summarized in Table \ref{tab:sum_points_weights_par}.
\begin{table}[!h]
	\caption{Summary of CNF proposed for parameter estimation.} \label{tab:sum_points_weights_par}
	\vspace{-4mm}
	\footnotesize
	\renewcommand{\arraystretch}{1.0}
	\setlength{\tabcolsep}{1em} 
	\begin{center}
		\begin{tabular}{l c l l l}
			\hline
			{Filter}&{$d^{\theta}$}&$\bm{{\xi}}_i^{d_\theta}$ & $\bm{w}_i^{d_{\theta}}$& CR \\
			\cmidrule(lr){1-1}
			\cmidrule(lr){2-2} \cmidrule(lr){3-4}
			\cmidrule(lr){5-5}
			CNF-I&\textit{3rd}  & $\{[\bm{e}]_j\}$ & $ 1/2n_\theta$ & \cite{genz2003fully,arasaratnam2009cubature} \\
			\cmidrule(lr){1-1}
			\cmidrule(lr){2-2} \cmidrule(lr){3-4}
			\cmidrule(lr){5-5}
			&  & 0  & $ {2}/{n_\theta+2}$ &  \\
			CNF-III&\textit{3rd}  & $\{[\bm{a}]_j\}$ & $w: {1}/{(2(n_\theta+1))}$ & \cite{mysovskikh1981interpolatory} \\
			\cmidrule(lr){1-1}
			\cmidrule(lr){2-2} \cmidrule(lr){3-4}
			\cmidrule(lr){5-5}
			&  & 0  & $ {2}/{(n_\theta+2)}$    & \\
			CNF-V&\textit{mixture}& $\{[\bm{a}]_j\}$  & $ {n_\theta}/{(2(n_\theta+1) (n_\theta+2))}$& \cite{jia2013high,wang2013spherical}\\
			\hline
		\end{tabular}
	\end{center}
\end{table}

Unlike the state estimation problem, the parameter estimation problem introduced in this subsection is addressed by using the \textit{3rd-degree} or the \textit{mixture-degree} cubature rules. This is justified based on observation that the health parameter dynamics in our engine application are modeled by linear Model I and exponential Model II that are as in general of lower complexity than that of the state dynamics \cite{hanachi2015framework}. The \textit{5th-degree} or higher-degree cubature rules theoretically can be used, however, from the computational efficiency perspective, the higher-degree cubature rules are not recommended as they could lead to substantial computational burden without yielding proportionally improved accuracy and performance.

Let us denote the following error matrices as follows
\begin{equation}
\label{eq:error_matrices}
\begin{split}
\Xi_{k|k-1}^1 &\!\!=\!\!\left[h(\tilde{\xi}_{k-1|k-1,j}^{d_\theta})\!\!-\!\hat{\theta}_{k|k-1},\! \cdots\! h(\tilde{\xi}_{k-1|k-1,N_{d_\theta}}^{d_\theta})\!\!-\!\hat{\theta}_{k|k-1}\right]\!^T\\
\Xi_{k|k-1}^2 &\!\!=\!\!\left[ {\tilde{\xi}}_{k|k-1,1}^{d_\theta}\!-\!\hat{\theta}_{k|k-1},\!\cdots\! {\tilde{\xi}}_{k|k-1,N_{d_\theta}}^{d_\theta} -\hat{\theta}_{k|k-1}\right]\!^T\\
\Xi_{k|k-1}^3 &\!\!=\!\!\left[ g({\tilde{\xi}}_{k|k-1,1}^{d_\theta})\!\!-\!\hat{z}_{k|k-1},\!\cdots\! g{({\tilde{\xi}}_{k|k-1,N_{d_\theta}}^{d_\theta})}\!-\!\hat{z}_{k|k-1}\right]\!^T\\
\end{split}
\end{equation}
\begin{table*}
	\caption{The proposed '$\text{Hybrid} \{i-j\}$' strategy for dual estimation  with $i,j \in \{\text{I,II,III,IV,V,VI}\}$ denoting the CNF as given in Table \ref{tab:CNBFs}.}	
	\label{tab:methods}
	\vspace{-2mm}
	\centering
	\footnotesize
	\renewcommand{\arraystretch}{1}
	\setlength{\tabcolsep}{10pt}
	\begin{tabular}{cllclc}
		\toprule	
		\multirow{2}{4em}{Group}& \multirow{2}{7em}{Methodology}& \multicolumn{2}{c}{State  Filter $\text{CNF}-i$ } & \multicolumn{2}{c}{Parameter Filter $\text{CNF}-j$} \\
		\cmidrule(lr){3-4} \cmidrule(lr){5-6}
		&& $i$ & Degree $d$ & $j$ & Degree $d$ \\ \hline
		\multirow{3}{3em}{G-I}& \multirow{3}{7em}{$\text{Hybrid} \{i-\text{I}\}$}  & $i \in \{\text{I}, \text{III}\}$ & {\textit{3rd-degree}} & \multirow{3}{5em}{$j \in \{\text{I}\}$}&\multirow{3}{6em}{\textit{3rd-degree}}\\
		&  & $i \in \{\text{II}, \text{IV}, \text{VI}\}$ & {\textit{5th-degree}}&&\\
		&  & $i \in \{\text{V}\}$ & {\textit{mixture-degree}}&&\\
		\midrule
		\multirow{3}{3em}{G-II}& \multirow{3}{7em}{$\text{Hybrid} \{i-\text{III}\}$}  & $i \in \{\text{I}, \text{III}\}$ & {\textit{3rd-degree}} & \multirow{3}{5em}{$j \in \{\text{III}\}$}&\multirow{3}{6em}{\textit{3rd-degree}}\\
		&  & $i \in \{\text{II}, \text{IV}, \text{VI}\}$ & {\textit{5th-degree}}&&\\
		&  & $i \in \{\text{V}\}$ & {\textit{mixture-degree}}&&\\
		\midrule
		\multirow{3}{3em}{G-III}& \multirow{3}{7em}{$\text{Hybrid} \{i-\text{V}\}$}  & $i \in \{\text{I}, \text{III}\}$ & {\textit{3rd-degree}} & \multirow{3}{4em}{$j \in \{\text{V}\}$}&\multirow{3}{6em}{\textit{mixture-degree}}\\
		&  & $i \in \{\text{II}, \text{IV}, \text{VI}\}$ & {\textit{5th-degree}}&&\\
		&  & $i \in \{\text{V}\}$ & {\textit{mixture-degree}}&&\\
		\midrule
		\multirow{2}{3em}{G-IV}& \multirow{1}{6em}{Dual-PF}  & PF & - & PF&-\\
		& \multirow{1}{6em}{Dual-UKF}  & UKF & - & UKF&-\\
		\bottomrule
	\end{tabular}   	
\end{table*}
We are now in a position to present the following algorithm.
\\
\textit{{\textbf{Algorithm 2:}} The procedure for our proposed square-root \textit{$d^\theta$th-degree} CNF for  parameter estimation is now introduced.}
\begin{itemize}
	\item[1)] Draw $N_{d_{\theta}}$ cubature points $\xi_j^{d_{\theta}}$ and weights $w_j^{d_\theta}$ based on the $d^{\theta}$th-degree cubature rule and previous distribution $\mathcal{N}(\hat{\theta}_{k-1|k-1},{P}_{k-1|k-1}^{\theta \theta})$, with $j=1, \cdots N_{d_{\theta}}$ and $P_{k-1|k-1}^{\theta \theta}=S_{k-1|k-1}^{\theta \theta} (S_{k-1|k-1}^{\theta \theta})^T$.
	\item[2)] Propagate the sampled cubature points ${\xi}_{j}^{d_\theta}$ as follows\\
	\begin{equation}
	\label{eq:sampling}
	\tilde{\xi}_{k-1|k-1}^{{d_\theta}}=S_{k-1|k-1}^{\theta \theta} {\xi}^{d_\theta}+\hat{\theta}_{k-1|k-1}
	\end{equation}
	\item[3)] Evaluate and predict the states by\\
	$\hat{\theta}_{k|k-1}=\sum_{j=1}^{N_{d_\theta}} w_j^{d_\theta} h\left(\tilde{\xi}_{k-1|k-1,j}^{d_\theta}\right)$ \\
	and obtain the square-root version of the prediction error covariance by\\
	$S_{k|k-1}^{\theta \theta}={qr}([{{\Xi_{k|k-1}^1}}/\sqrt{N_{d_\theta}}\quad S_{\Sigma_{\tau,k-1}}])$.
	\item[4)] Draw and re-propagate the cubature points with the predicted value by\\
	${\tilde{\xi}}_{k|k-1,j}^{d_\theta}=S_{k|k-1}^{\theta \theta} {\xi}_{j}^{d_\theta}+\hat{\theta}_{k|k-1}$.
	\item[5)] Estimate the predicted measurement by evaluating \\
	$\hat{z}_{k|k-1}^{\theta}=\sum_{j=1}^{N_{d_\theta}} w_j^{d_\theta} g\left({\tilde{\xi}}_{k|k-1,j}^{d_\theta},\hat{x}_{k|k},u_k\right)$ \\
	and obtain the square-root version of innovation covariance matrix as\\
	$S_{zz,k|k-1}={qr}([\Xi_{k|k-1}^3/\sqrt{N_{d_\theta}}\quad S_{\Sigma_{v,k}}])$.
	\item[6)] Compute the cross-covariance matrix by firstly consider the square-root version of
	$\mathcal{C}_{k|k-1}= \Xi_{k|k-1}^2//\sqrt{N_{d_\theta}}$, and secondly obtain the cross-covariance matrix as
	$P_{\theta z,k|k-1}=\Xi_{k|k-1}^2 (\Xi_{k|k-1}^3)^{T}/N_{d_\theta}$.
	\item[7)] Update the parameters by invoking	
	$\hat{\theta}_{k|k}=\hat{\theta}_{k|k-1}+K_k^\theta (z_k-\hat{z}_{k|k-1})$, with
	$K_k=P_{\theta z,k|k-1}(P_{zz,k|k-1})^{-1}$.
	The square-root error covariance matrix is now given by\\
	$S_{k|k}^{\theta \theta}={qr} ([\Xi_{k|k-1}^2/\sqrt{N_{d_\theta}}-K_k^\theta\Xi_{k|k-1}^3/\sqrt{N_{d_\theta}} \quad K_k^\theta S_{\Sigma_{v,k}}])$.	
\end{itemize}

\subsection{Modified Cubature Points Propagation}\label{subsec:robust_CP}

This subsection presents a modified cubature point propagation update strategy to enhance the robustness capability of our proposed dual cubature-based scheme to deal with modeling uncertainties. For the parameter estimation module, the $d^{\theta}$th-degree CNF can indeed capture the process dynamics by analyzing the known function $\phi_2(\cdot)$. However, in case of non-negligible uncertainties and unmodelled dynamics in the measurement model, the actual estimation can become compromised by only using the \textit{3rd-degree} cubature rules.

In order to enable cubature points to at least account for both the mean and covariance of the process functions approximate errors, the following conditions are now proposed to be employed in our methodology  as modified cubature point propagation update strategy  \cite{tian2013novel,cui2017improved}, namely consider
\begin{equation}
\label{eq:relation1}
\begin{split}
\Xi_{k|k-1}^1 w^{d_\theta}&=0\\
\Xi_{k|k-1}^1 W^{d_\theta} (\Xi_{k|k-1}^1)^T&=P_{k|k-1}^{\theta\theta}-\Sigma_{\tau,k}\\
\end{split}
\end{equation}
\begin{equation}
\label{eq:relation2}
\begin{split}
\Xi_{k|k}^1 w^{d_\theta}&=0\\
\Xi_{k|k}^1 W^{d_\theta} (\Xi_{k|k}^1)^T&=P_{k|k}^{\theta\theta}-\Delta E\\
\end{split}
\end{equation}
where $\Xi_{k|k}^1=\left[h(\tilde{\xi}_{k|k,1}^{d_\theta})-\hat{\theta}_{k|k}, \cdots, h(\tilde{\xi}_{k|k,N_{d_\theta}}^{d_\theta})-\hat{\theta}_{k|k}\right]^T$, and $W^{d_\theta}=diag([w_1^{d_\theta},\cdots,w_{N_{d_\theta}}^{d_\theta}])$. 	
Let us assume $\Xi_{k|k}^1=\Upsilon_k\Xi_{k|k-1}^1$, and substitute it into Eq. (\ref{eq:relation2}), to obtain $\Xi_{k|k}^1 W^{d_\theta} (\Xi_{k|k}^1)^T=\Upsilon_k(P_{k|k-1}^{\theta\theta}-\Sigma_{\tau,k})\Upsilon_k^T$. Given the following equations
\begin{equation}
\label{eq:Lk_minus}
L_k^{-}(L_k^{-})^T=P_{k|k-1}^{\theta \theta}-\Sigma_{\tau,k-1}
\end{equation}
\begin{equation}
\label{eq:Lk_plus}
L_k^{+}(L_k^{+})^T=P_{k|k}^{\theta \theta}-\Delta E_k
\end{equation}
one can then obtain $\Upsilon_k=L_k^{+}(L_k^{-})^{-1}$. The estimation error covariance is now expressed as
\begin{equation}
\label{eq:PKK}
\begin{split}
P_{k|k}^{\theta \theta}&=(I_{n_\theta}-K_k^\theta  B_k)P_{k|k-1}(I_{n_\theta}-K_k^\theta  B_k)^T+K_k^{\theta}(\mathbb{E}((\zeta_k+\psi_z) (\zeta_k+\psi_z)^T)+\Sigma_{v,k})(K_k^{\theta})^T
\end{split}
\end{equation}
where $\zeta_k$	and $\psi_z$ denote the modeling uncertainty and high order terms resulting from the Taylor series expansion, and $B_k=\partial g(\hat{x}_{k|k},\theta_{k},u_k)/\partial{\theta}$. Therefore, the term $\Delta E_k$ in Eq. (\ref{eq:Lk_plus}) is defined as $\Lambda_k K_k \Sigma_{v,k}K_k^T$, where  $\Lambda_k$ is selected as the largest eigenvalue of $P_{k|k}^{\theta\theta}$ at the time instant $k$. We are now in a position to present our Algorithm 3.

\textbf{Algorithm 3:} The procedure for the modified propagation of cubature points is constructed as follows:
\begin{itemize}
	\item[1)] Generate $\Xi_{k-1|k-1}$ by using $\mathcal{N}(\hat{\theta}_{k-1|k-1},P_{k-1|k-1}^{\theta \theta})$
	\item[2)] Modified cubature points are then generated according to
	\begin{equation}
	\label{eq:sampling_modified}
	\tilde{\xi}_{k-1|k-1}^{d_{\theta}}=\Xi_{k-1|k-1}+\hat{\theta}_{k-1|k-1}
	\end{equation}
	\item[3)] Run the Step 3) of Algorithm 1
	\item[4)] Obtain
	$\Xi_{k|k-1}^1$ from Eq. (\ref{eq:error_matrices}) and obtain $L_k^{-}$ by Eq. (\ref{eq:Lk_minus})
	\item[5)] Run the Steps  5) to 7) of Algorithm 1
	\item[6)] Compute $L_k^{+}$ as in  Eq. (\ref{eq:Lk_plus}) and compute $\Xi_{k|k}$. Then set
	$\tilde{\xi}_{k|k}=\Xi_{k|k}+\hat{\theta}_{k|k}$.
\end{itemize}
\textbf{Remark 5.} By comparing Eq. (\ref{eq:sampling}) with Eq. (\ref{eq:sampling_modified}), it follows that the normal cubature point propagation method depends on the Gaussian assumption of the posterior probability density function (pdf), whereas the modified method relaxes the limitation on the Gaussian
assumption of the posterior pdf.\\
\textbf{Remark 6.} The \textit{5th-degree} cubature rules are significantly more robust to non-Gaussian noise and uncertainties when compared to the \textit{3rd-degree} CKF, UKF and PF \cite{jia2013high}. In presence of measurement uncertainties, the modified cubature points propagation method can be utilized for the parameter estimation scheme given that the \textit{3rd-degree} cubature rules are employed. Due to the fact that this paper concentrates on our proposed dual estimation-based FD methodologies, further comparisons will be implemented on the hybrid-degree solutions. Nevertheless, the robustness analysis with respect to measurement uncertainties and unmodelled dynamics are conducted in  Section \ref{subsec:robust} to demonstrate and illustrate the capabilities and benefits of our accomplished solutions.

\subsection{The Proposed Fault Diagnosis (FD) Formulation}
Diagnosis of drifts in unmeasurable health-related component parameters requires prior knowledge of parameters under healthy condition. Our FD logic and decision making protocol is developed based on analysis of residuals by comparing  estimated parameters  obtained by CNF schemes with parameters that are estimated under the initial fault free operation of the system. The FD problem under consideration deals with the nonlinear system whose dynamics is now governed by
\begin{equation}
\label{eq:FD_system}
\begin{aligned}
x_{k+1}&=f(x_k,\theta_k^T\lambda_\theta(x_k),u_k)+w_k\\
z_k&=g(x_k,\theta_k^T\lambda_\theta(x_k),u_k)+\zeta(x_k,u_k)+v_k
\end{aligned}
\end{equation}
where $\lambda_\theta(x_k)$ denotes the system health parameters representing  a known differentiable function that determines the relationship between the system states and the fault parameters. The component FD problem is tackled and solved by considering that each health parameter is affected by an unknown and time-varying multiplicative fault parameter vector ${\theta}_k$.

Since true values of parameters are assumed  \textit{unknown}, the required residuals for determining the FD criteria are obtained through the so-called \textit{residual signals}. These signals are constructed as the difference between the estimated parameters under the fault-free operational mode (during the very start of the system operation) that is denoted by $\hat{\theta}_h$, and the estimated parameters subsequent to the initial start of the system operation under the possibly faulty  mode that is denoted by $\hat{\theta}_{k|k}$, that is
\begin{equation}
\label{eq:residual}
r_k=\hat{\theta}_h-\hat{\theta}_{k|k}
\end{equation}
where $r_k \in \mathbb{R}^{n_\theta}$. For implementation of our proposed FD strategy developed based on the hybrid-degree dual estimation scheme, the parameter estimates error will be considered as the main indicator for diagnosing faults in the system components. The decision-making logic for detecting, isolating, and identifying the faults are given as follows.

\textit{\textbf{Fault Detection Decision Logic:}} The decision on occurrence of a fault is made when  at least one element of the residual signal in Eq. (\ref{eq:residual}) exceeds its corresponding threshold, i.e.,
if $\forall m, \ \mathbb{E}(\|r_k^m\|)\le r_{max}^m$, the system is classified as healthy; otherwise if $\exists m, \mathbb{E}(\|r_k^m\|)> r_{max}^m$, the system is classified  as being in the faulty condition, where $m \in \{M1,\cdots, M8\}$ denotes the fault mode as explicitly defined in Section V.C.

\textit{\textbf{Fault Isolation Decision Logic:}} The $m$th fault mode is isolated if $\mathbb{E}(\|r_k^m\|) > r_{max}^m$. In case of multi-mode fault scenarios, multiple residuals will exceed their corresponding thresholds.

The variable $r_{max}^m$ denotes the upper bound threshold for the $m$th residual signal as given by Eq. (\ref{eq:residual}). The threshold for each residual signal is selected by conducting Monte Carlo simulation runs using the healthy system I/O data such that missed  alarms and false alarms are minimized corresponding to the healthy mode of the system operation.

\textit{\textbf{Fault Identification Logic:}}
Once the fault modes are detected and isolated, their severity levels through the parameter estimation module are identified based on the magnitude of the residual signals $r_k^m$.

Our proposed hybrid-degree dual estimation strategy features \textit{case-dependent} cubature rules and corresponding CNF for both state estimation and parameter estimation. The details are depicted in Table \ref{tab:methods}, where $i$ denotes the specific CNF for the state estimation module and $j$ refers to the specific parameter estimation module.

Four groups G-I to G-IV of hybrid-degree methodologies are compared and investigated for different purposes. G-I aims to evaluate and compare estimation and fault diagnosis performance under the fixed \textit{3rd-degree} CNF-I for parameter estimation but varying degree of cubature rules for state estimation. G-II and G-III replace the parameter estimation filter in G-I, attempting to evaluate and compare whether different cubature theorems affect the performance of parameter estimation schemes. Dual estimation performance are compared with the well-known PF and UKF that are included in the group G-IV in order to evaluate the accuracy and computational cost of our methodology with these state-of-the-art nonlinear estimation techniques.

\section{Fault Diagnosis of a Gas Turbine Engine (GTE)} \label{sec5}

\subsection{Modeling Overview of Gas Turbine Engines}

The capabilities, advantages, and benefits of our proposed hybrid-degree CNF-based dual estimation strategy are now investigated and demonstrated by applying it to the FD problem of a twin-spool GTE. The FD performance is verified when the GTE is subjected to degradations in its component health parameters by injecting various concurrent/simultaneous abrupt or slowly-varying faults. For a high fidelity representation of the GTE dynamical characteristics the volume dynamics and rotor dynamics are considered, as well as the heat transfer dynamics since they contribute to the nonlinear behavior of the twin-spool GTE \cite{meskin2010fault}. The mathematical model as constructed in \cite{meskin2011multiple} is a set of nonlinear equations of motion that are expressed by Eq. (\ref{eq:states}).

For the physical significance of the model parameters and details refer to \cite{meskin2010fault,meskin2011multiple}. The state variable for the GTE is given by $x=[T_{CC},N_{1},N_{2},P_{LT},P_{CC},P_{LC},P_{HT}]^T$ and the measurement is designated by $z=[N_1,N_2,P_{HC},T_{HC},T_{LC},P_{LC},T_{LT},T_{HT}]^T$, where $T_{CC}$, $T_{HC}$, $T_{LC}$, $T_{LT}$ and $T_{HT}$ represent the temperature variables in combustion chamber (CC), high pressure compressor (HPC), low pressure compressor (LPC), low pressure turbine (LPT) and high pressure turbine (HPT), respectively. $N_1$ and $N_2$ denote the rotational speeds of the spool connecting the HPC to HPT, and the spool connecting the LPC to LPT, respectively. $P_{LT}$, $P_{CC}$, $P_{LC}$, $P_{HT}$, $P_{HC}$, and $P_{LC}$ denote the pressure variables in the subscripted components. The input or the control signal of the twin-spool GTE is the power level angle (PLA) which is related to the fuel mass flow rate ($\dot{m}_f$) through a variable gain. We now have,
\begin{equation}
\label{eq:states}
\resizebox{0.6\textwidth}{!}
{$
	\begin{aligned}
	\dot{T}_{CC}&\!=\!\frac{1}{c_v m_{CC}}[(c_pT_{HC}\theta_{m_{HC}}\dot{m}_{HC\!}+\!\eta_{CC}H_u \dot{m}_f\!-\!c_pT_{CC}\theta_{m_{HT}}\dot{m}_{HT})\\
	&\!-\!c_vT_{CC}(\theta_{m_{HC}}\dot{m}_{HC}\!+\!\dot{m}_f\!-\!\theta_{m_{HT}}\dot{m}_{HT})]\\
	\dot{N}_1   &\!\!=\!\frac{\eta_{mech}^1 \theta_{m_{HT}}\dot{m}_{HT}c_p(T_{CC}\!-\!T_{HT})\!-\!\theta_{m_{HC}}\dot{m}_{HC}c_p(T_{HC}\!-\!T_{LC})}{J_1 N_1 (\frac{\pi}{30})^2}\\
	\dot{N}_2   &\!=\!\frac{\eta_{mech}^2 \theta_{m_{LT}} \dot{m}_{LT}c_p(T_{HT}\!-\!T_{LT})\!-\!\theta_{m_{LC}}\dot{m}_{LC}c_p(T_{LC}-T_{d})}{J_2 N_2 (\frac{\pi}{30})^2}\\
	\dot{P}_{LT}&=\frac{RT_M}{V_M}(\theta_{m_{LT}}\dot{m}_{LT}+\frac{\beta}{1+ \beta}\theta_{m_{LC}}\dot{m}_{LC}-\dot{m}_n)\\
	\dot{P}_{CC}&=\frac{P_{CC}}{T_{CC}}\dot{T}_{CC}+\frac{\gamma R T_{CC}}{V_{CC}}(\theta_{m_{HC}}\dot{m}_{HC}+\dot{m}_f-\theta_{m_{HT}}\dot{m}_{HT})\\
	\dot{P}_{LC}&=\frac{RT_{LC}}{V_{LC}}(\frac{1}{1+\beta}\theta_{m_{LC}}\dot{m}_{LC}-\theta_{m_{HC}}\dot{m}_{HC})\\
	\dot{P}_{HT}&=\frac{RT_{HT}}{V_{HT}}(\theta_{m_{HT}}\dot{m}_{HT}-\theta_{m_{LT}}\dot{m}_{LT})\\
	\end{aligned}
	$}
\end{equation}

During the engine lifetime the compressor and turbine undergo degradations that can originate from various sources, such as fouling, erosion and corrosion that are aerodynamic or performance-related challenges and derivations. These performance-related anomalies can affect the component behavior and eventually the overall behavior of the GTE system. Component faults that are of concern in this paper are caused by fouling and erosion degradations since they contribute to significant deterioration in the engine life cycle \cite{pourbabaee2015sensor}.

Fouling always occur in compressors (both at the HPC and LPC segments), which cause changes in compressor mass flow rate and efficiency. Erosion phenomena exert effects on reduction of efficiency and increase of  mass flow rate in  HPT or LPT segments. Consequently, the health parameters that are considered in this paper relate to efficiency and mass flow rates in compressor  caused by fouling, as well as in turbine  caused by erosion. A fault vector  $$[\theta_{\eta_{LC}},\theta_{\eta_{HC}},\theta_{\eta_{LT}},\theta_{\eta_{HT}},\theta_{\dot{m}_{LC}},\theta_{\dot{m}_{HC}},\theta_{\dot{m}_{LT}},\theta_{\dot{m}_{HT}}]^T$$ is incorporated into the mathematical model (\ref{eq:states}) to manifest impacts of health parameters in corresponding components. The subscript $\eta$ implies the change of efficiency and the subscript ${\dot{m}}$ implies the change of mass flow rate.

\subsection{Verification and Validation of the Model Subject to Uncertainties}
To verify and validate the effectiveness of our proposed  strategy the design of our nonlinear filters is based on a \textit{simplified} mathematical model as provided in Eq. (\ref{eq:states}), however all the simulations shown subsequently have been applied to a more detailed, complex, and accurate model of the GTE that is obtained from GSP10 \cite{meskin2010fault,meskin2011multiple,tsoutsanis2014component}.

The differences between the simplified model (\ref{eq:states}) and the high fidelity representation of the GTE that is obtained from  GSP10 \cite{meskin2010fault,meskin2011multiple,tsoutsanis2014component} capture uncertainties and unmodeled dynamics. These  are attributed  to the manner performance maps are constructed that can express  relationships between the health parameters and the system states as denoted by    $\zeta(x_k,u_k)$ in Eq. (\ref{eq:FD_system}). Specifically, the performance maps for efficiencies and mass flow rates of the compressors (including both the HPC and LPC segments)  that correspond to $\dot{m}_{HC}$, ${\eta}_{HC}$, $\dot{m}_{LC}$ and  ${\eta}_{LC}$ in the model (\ref{eq:states}), as well as the performance maps for efficiencies and mass flow rates of the turbines (including both the HPT and LPT segments) that correspond to $\dot{m}_{HT}$, ${\eta}_{HT}$, $\dot{m}_{LT}$ and  ${\eta}_{LT}$ in the model (\ref{eq:states}) need to be estimated and identified.

Performance maps  used in the GTE thermodynamic model are generated through various methodologies in the literature, such as \cite{tsoutsanis2014component}. In this paper, the methodology that is used for generating performance maps for the compressors and turbines is through twelve multi-layer feed-forward neural networks. The networks are used for identifying the relationships between the concerned health parameters and the pressure ratio, as well as the states.

\begin{figure*}
	\centering
	\resizebox{1\columnwidth}{!}{\includegraphics{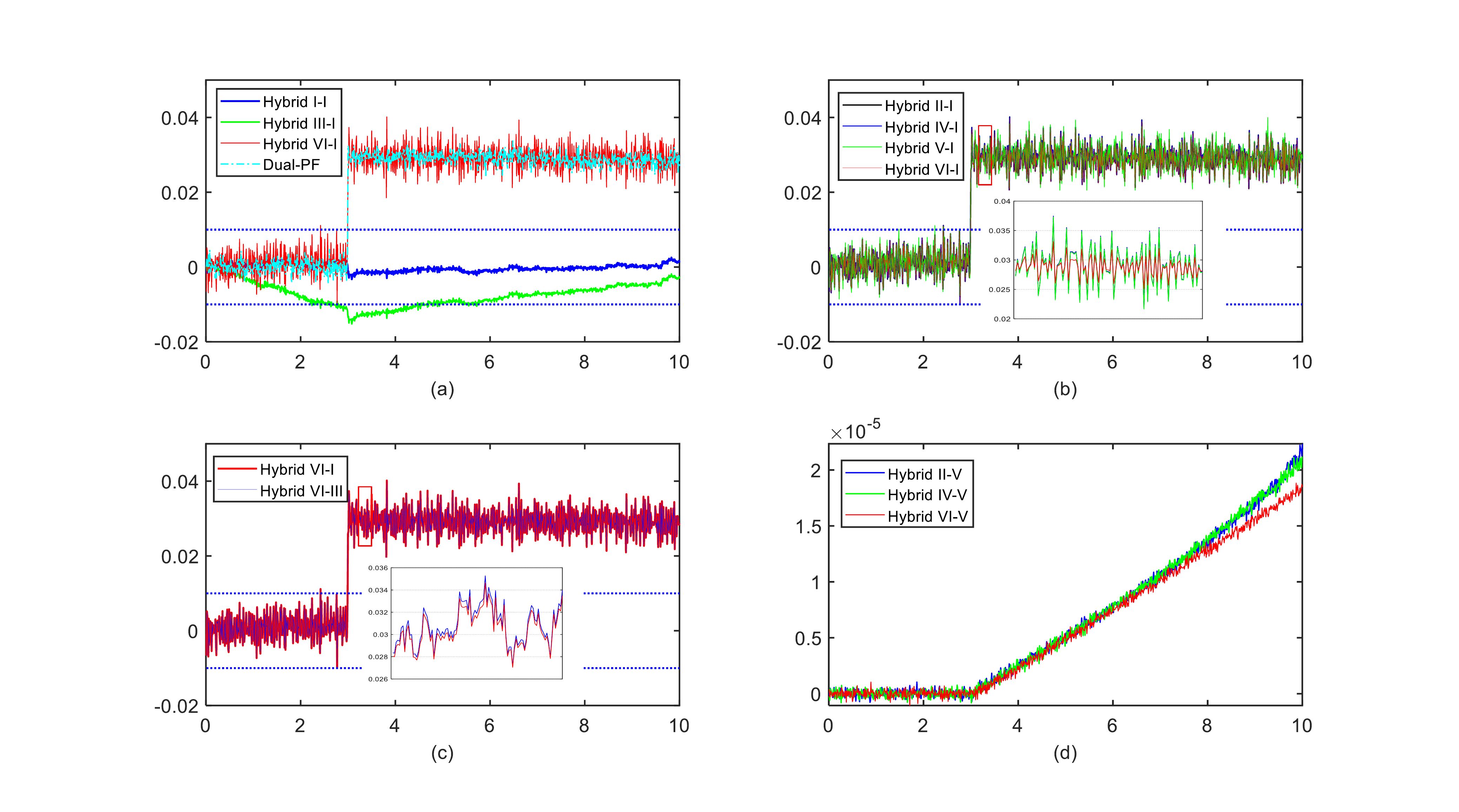}}\vspace{-6mm}	
	\caption{{Residuals $r_k$ ($y$-axis) corresponding to the abrupt 3\% mass flow rate fault scenario in the HPC injected at the time $t=3s$.} }
	\label{fig:foul_fmHC}
\end{figure*}

An extensive set of simulation studies are conducted to ensure that the simplified model used in Eq. (\ref{eq:states}) is sufficiently reliable with respect to the more detailed, accurate, and high fidelity model of the GTE for further conducting our case studies robustness to uncertainties and unmodelled dynamics. These details are provided in Section \ref{subsec:robust}.

\subsection{Hybrid-Degree Fault Diagnosis Performance Analysis}
The goal of this subsection is to justify and verify the rationalization and effectiveness of our proposed hybrid-degree dual CNF schemes through simulations  under various fault scenarios. All the simulation scenarios correspond to the cruise flight mode of the GTE, and the process and measurement noise levels correspond to the same values as provided  in \cite{meskin2011multiple}, where  standard deviations are given as percentage of the nominal values at typical cruise operating conditions. The PLA is assumed to be at 0.9, the Mach number is 0.74, and the ambient conditions are set to standard conditions. Importantly, since our goal is to compare capabilities of our proposed nonlinear filters it is justifiable that all comparative studies associated with the considered methodologies are implemented on the basis of the same process and measurement noise distributions for both state and parameter estimation problems.

Our main objective is focused on FD performance  of the GTE system. The hybrid-degree combinations are provided in Table \ref{tab:methods}, where the Dual-UKF and Dual-PF, and Hybrid {I-I} are effectively three concurrently running UKF, two PF and two CKF. For implementing the Dual-PF, the number of particles is selected through a quantitative analysis that is derived based on the mean absolute error (MAE\%) accuracy criterion with respect to the estimation process steady state values. The number of particles is chosen as $500$ corresponding to both the state and parameter estimation filters for the GTE. The number of cubature points for the CNF and the unscented points for the UKF are deterministic values and are provided in Table \ref{tab:no_points}. 
Below we provide details on our considered three (3) distinct case studies where the fault modes are explicitly defined in Table \ref{tab:fault_mode}:

\begin{table}[!th]
	\caption{Degradation Modes Considered in the Gas Turbine Engine.}
	\vspace{-4mm}
	\label{tab:fault_mode}
	\footnotesize
	\renewcommand{\arraystretch}{0.9}
	\setlength{\tabcolsep}{10pt}	
	\begin{center}
		\begin{tabular}{cllc}
			\toprule
			Component & HP & Description  & Mode\\
			\cmidrule(lr){1-1} \cmidrule(lr){2-3} \cmidrule(lr){4-4}
			\multirow{2}{3em}{HPC}&\multirow{1}{3em}{${\theta}_{\eta_{HC}}$} & Changes in efficiency of HPC&M1\\
			&\multirow{1}{2em}{${\theta}_{\dot{m}_{HC}}$} & Changes in mass flow rate of HPC&M2\\
			\multirow{2}{3em}{HPT}&\multirow{1}{3em}{${\theta}_{\eta_{HT}}$} & Changes in efficiency of HPT&M3\\
			&\multirow{1}{2em}{${\theta}_{\dot{m}_{HT}}$} & Changes in mass flow rate of HPT&M4\\
			\multirow{2}{3em}{LPC}&\multirow{1}{3em}{${\theta}_{\eta_{LC}}$} & Changes in efficiency of LPC&M5\\
			&\multirow{1}{2em}{${\theta}_{\dot{m}_{LC}}$} & Changes in mass flow rate of LPC&M6\\
			\multirow{2}{3em}{LPT}&\multirow{1}{3em}{${\theta}_{\eta_{LT}}$} & Changes in efficiency of LPT&M7\\
			&\multirow{1}{2em}{${\theta}_{\dot{m}_{LT}}$} & Changes in mass flow rate of LPT&M8\\
			\bottomrule
		\end{tabular}
	\end{center}
\end{table}

\begin{table*}
	\caption{State/parameter estimation accuracy (MAE\%) for Case I corresponding to abrupt faults in the HPC.}	 
	\label{tab:estimation_results1}
	\centering
	\footnotesize
	\renewcommand{\arraystretch}{1.0}
	\setlength{\tabcolsep}{1.6pt}	
	\begin{tabular}{ccccccccccccccc}
		\toprule
		\multirow{3}{2em}{Var.}&\multicolumn{2}{c}{{Hybrid \{I-I\}}} &\multicolumn{2}{c}{Hybrid \{II-I\}} &\multicolumn{2}{c}{Hybrid \{IV-I\}}  &\multicolumn{2}{c}{{Hybrid \{VI-I\}}} &\multicolumn{2}{c}{{Dual-PF}} &\multicolumn{2}{c}{Hybrid \{II-III\}} &\multicolumn{2}{c}{Hybrid \{VI-III\}} \\
		\cmidrule(lr){2-3} \cmidrule(lr){4-5} \cmidrule(lr){6-7} \cmidrule(lr){8-9}  \cmidrule(lr){10-11} \cmidrule(lr){12-13} \cmidrule(lr){14-15}
		& {B-F} & {C-I} & {B-F} & {C-I}  & {B-F} & {C-I}   & {B-F} & {C-I} 	& {B-F} & {C-I} & {B-F} & {C-I}  & {B-F} & {C-I}   \\ \midrule
		$TCC$ &1.2000  &1.1200 &0.2000  &0.1200 &0.2004 &0.2028   &0.2003 &0.2027 &0.1903  &0.1310  &0.2010  &0.1211 &0.2011 &0.2035\\
		$N1$  &16.446  &18.857 &3.6905  &3.2481 &4.6905 &3.7379   &4.6904 &3.7379 &3.6805  &3.2235  &3.6902  &3.2485 &4.6911 &3.7392\\	
		$N2$  &5.4174  &6.3313 &1.4090  &1.2491 &1.2864 &1.2469   &1.2864 &1.2469 &1.2345  &1.1043  &1.4088  &1.2498 &1.2875 &1.2512\\	
		$PLT$ &3.0034  &3.0023 &0.0034  &0.0011 &0.0017 &0.0011   &0.0016 &0.0011 &0.0032  &0.0012  &0.0036  &0.0012 &0.0019 &0.0101\\	
		$PCC$ &1.0632  &1.0387 &0.0623  &0.0076 &0.0113 &0.0076   &0.0113 &0.0075 &0.0701  &0.0083  &0.0627  &0.0077 &0.0121 &0.0127\\	
		$PLC$ &3.0219  &3.0131 &0.0214  &0.0036 &0.0054 &0.0035   &0.0054 &0.0034 &0.0026  &0.0123  &0.0232  &0.0042 &0.0062 &0.0066\\	
		$PHT$ &4.0061  &3.4341 &0.0062  &0.0016 &0.0022 &0.0015   &0.0021 &0.0015 &0.0015 &0.0026   &0.0064  &0.0018 &0.0029 &0.0035\\
		$\theta_{m_{HC}}$ &1.0109 &3.1212 &0.0011 &0.0010 &0.0018 &0.0016 &0.0012 &0.0011 &0.0009   &0.0011  &0.0011 &0.0011 &0.0012 &0.0012\\				
		\bottomrule
	\end{tabular}	
	\begin{tablenotes}
		\item[a]  Note: B-F and C-I denote 'Before-Fault' and 'after Case-I', respectively.
	\end{tablenotes}	
\end{table*}

\subsubsection{\textbf{Case I: Abrupt Faults in the HPC}}
In this scenario,  effects of abrupt faults are studied by injecting a $3\%$ mass flow rate loss (representing the fault severity) affecting the HPC component at the instant $t = 3 s$. The residual signals with respect to the mass flow rate in the HPC are shown in Fig. \ref{fig:foul_fmHC} corresponding to the groups G-I to G-IV. The blue dotted lines depict the confidence bounds for residuals that are determined based on $50$ independent Monte Carlo simulation runs under various healthy scenarios. By analyzing the residuals, the fault can be clearly detected and diagnosed.

Fig. \ref{fig:foul_fmHC} (a) and Fig. 2 (b) depict the comparative results with respect to group G-I where they share the same filter for the parameter estimation (CNF-I), and group G-IV that involves Dual-PF and Dual-UKF. It follows from these results that residuals corresponding to our proposed combinations of the ``\textit{5th-degree cubature rules for state estimation and 3rd-degree for parameter estimation}", including Hybrid \{II-I\}, Hybrid \{IV-I\} and Hybrid \{VI-I\}, as well as the Dual-PF schemes can detect  changes after the fault occurrence and converge to the injected fault severity. However, the Dual-CKF (i.e., Hybrid \{I-I\}) fails to detect the fault occurrence and the Hybrid \{III-I\} shows both false positive and false negative alarms during the indicated time window. What is in common for Hybrid \{I-I\} and Hybrid \{III-I\} methodologies is that both are using the \textit{3rd-degree} cubature rule for designing the state estimation nonlinear filter.  In this case, the \textit{3rd-degree} cubature rules are \textit{not appropriate/suitable} for designing the state estimator under the given noise levels.

Moreover, the Dual-UKF scheme is not capable of detecting the fault, whereas the Dual-PF scheme performs well in terms of detection and residual change tracking. The differences among the \textit{5th-degree} filters and the Dual-PF are not visibly distinguished, therefore the quantified MAE\% is provided in Table \ref{tab:estimation_results1}. Observations from this table indicate that the approximation accuracy of the \textit{5th-degree} filters are quite close to that of the Dual-PF, where the combination of Hybrid \{II-I\} and Hybrid \{VI-I\} are slightly more accurate than others.

Fig. \ref{fig:foul_fmHC} (c) compares the residuals corresponding to the \textit{3rd-degree} Mysovskikh-based CNF (CNF-III) for the parameter estimation purpose. The goal is to analyze whether the theorems affect the performance of dual estimation schemes and  generated residuals. Effectively,  modification of the \textit{3rd-degree} cubature rule for the parameter estimation does not provide obvious influence on the resulting residuals. The quantified estimation accuracy through MAE\% is provided in Table \ref{tab:estimation_results1} by comparing the Hybrid \{VI-I\} with the Hybrid \{VI-III\}, and the Hybrid \{II-I\} with the Hybrid \{II-III\}. The observation from this table is that the \textit{3rd-degree} cubature rule for parameter estimation based on the Genz's theorem yields relatively a higher precision than that is based on the Mysovskikh's theorem for our GTE application.

Fig. \ref{fig:foul_fmHC} (d) aims to analyze the performance of the \textit{mixture-degree} filter for parameter estimation. It follows that the hybrid combination constituted by a \textit{mixture-degree} filter (CNF-V) for parameter estimation cannot detect the fault and converge to the fault severity within the selected time window. \\
\textbf{Remark 7.} The \textit{3rd-degree} cubature rules (i.e., CR-I and CR-III) obtain poor engine state approximation of the statistical moments. Therefore, the corresponding CNF significantly deteriorates the dual estimation and fault diagnosis results (refer to e.g., Hybrid {I-I} or Hybrid {III-I}). Consequently, the importance of a a vlaid method for approximating the statistical moments is demonstrated and emphasized.
\begin{figure*}
	\centering
	\resizebox{1\columnwidth}{!}{\includegraphics{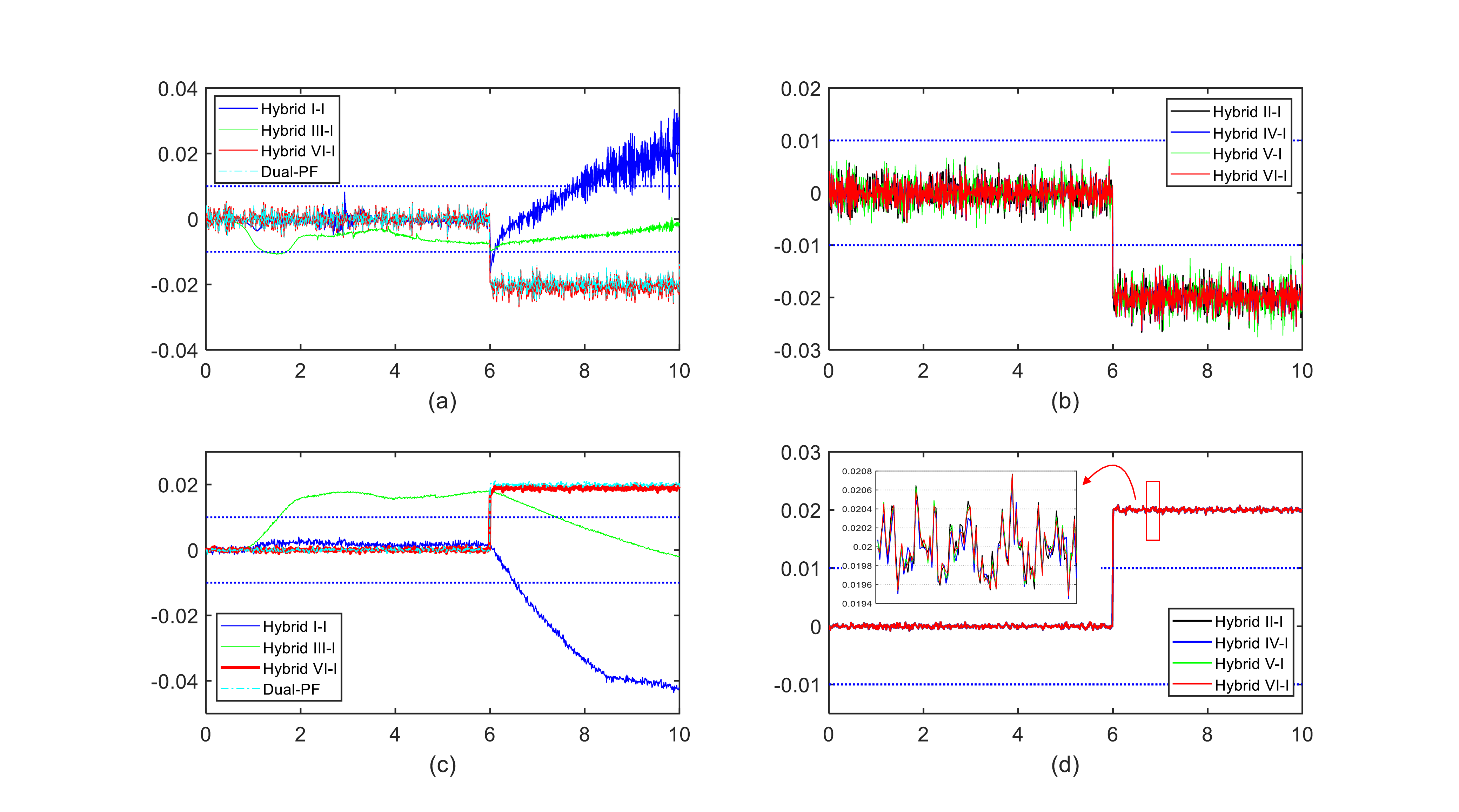}}
	\vspace{-4mm}
	\caption{Residuals $r_k$ ($y$-axis) (\%) for (a) mass flow rate; (b) mass flow rate; (c) efficiency; and (d) efficiency,   simultaneous fault scenario: Case II. }
	\label{fig:foul_fmHT_feHT}
\end{figure*}
\subsubsection{\textbf{Case II: Simultaneous Abrupt Faults in the HPT}}
The effects of simultaneous abrupt faults are investigated by injecting a $2\%$ mass flow rate increase and a $2\%$ efficiency decrease affecting the HPT segment at the instant $t = 6 s$. The residual signals resulting from the class of hybrid-degree combinations for the HPT mass flow rate and efficiency are shown in Fig. \ref{fig:foul_fmHT_feHT}. Results for G-II, G-III and G-IV are not provided since these methodologies \textit{cannot detect faults}, and  methodologies in G-II are not listed as well since their performance have \textit{not been improved}.

The observations from Fig. \ref{fig:foul_fmHT_feHT} can be summarized as follows: \textit{(i)} The \textit{3rd-degree} cubature rules used for both state and parameter estimation \textit{cannot} achieve the FD objectives, where the Hybrid \{I-I\} yields false alarms and \textit{cannot} converge to the correct fault severity. The Hybrid \{III-I\} is \textit{not} capable of detecting the fault occurrence in the mass flow rate (Fig. \ref{fig:foul_fmHT_feHT} (a)) and yields both false positive and negative alarms in efficiency (Fig. \ref{fig:foul_fmHT_feHT} (b)). \textit{(ii)} The proposed hybrid combinations, i.e. Hybrid \{II-I\} and Hybrid \{IV-I\}  \textit{can detect} the fault immediately after its occurrence and do ultimately converge to the \textit{correct} fault severity. Besides, the Dual-UKF scheme \textit{cannot} react to the fault occurrence, while the Dual-PF scheme achieves accurate estimation that are close to our proposed hybrid combinations. Quantitative estimation accuracy results for the Case II using the MAE\% metric is shown in Table \ref{tab:estimation_results2}.\\

\noindent
\textbf{Discussions on FD Performance for Abrupt Fault Cases:}

The purpose of this subsection is to provide comparison on the FD performance of all the methodologies provided in Table \ref{tab:CNBFs} before proceeding to more case studies. The metrics for evaluating the reliability of FD schemes consist of the estimation accuracy,  computational cost and  numerical stability factor (SF). The estimation accuracy is quantitatively measured through MAE\% corresponding to the last two seconds of  simulations after convergence of the filters. The computational cost is evaluated as the number of points or particles, and the numerical stability factor is quantified by $\text{SF}={\sum_{i=1}^{N_d}|w_i^d|}/{\int_{\mathbb{R}^n}{w_d(x)dx}}={\sum_{i=1}^{N_d}|w_i^d|}/{\sum_{i=1}^{N_d}w_i^d}$.

The metric SF manifests the numerical stability capability of the cubature rules, where $\text{SF}=1$ denotes an optimal value, since it implies that the cubature rule holds the weights all-positive. The estimation accuracy based on MAE\% is provided in Table \ref{tab:estimation_results1} and Table \ref{tab:estimation_results2} for the two fault cases. The computational cost as judged by the number of points/particles and SF values for various methods are shown in Table \ref{tab:no_points}. Comparisons lead to the following observations and conclusions:
\begin{itemize}
	\item In view of estimation accuracy, our proposed hybrid schemes with combination of \textit{\textit{5th-degree} CNF for the state estimation and the \textit{3rd-degree} CNF for the parameter estimation} can reach a high accuracy level with respect to MAE\% using the Dual-PF method. The  downside of the other hybrid choices are that some parameter estimates cannot converge to the actual fault severity  and they provide a large number of false alarms (combinations based on the \textit{3rd-degree} for both the state and parameter estimations), and some cannot even detect the fault after its occurrence (combinations based on the \textit{5th-degree} or \textit{mixture-degree} for parameter estimation). The specific theorems affect the performance slightly among the \textit{5th-degree} cubature rules, but generally the accuracy improves significantly by using the \textit{3rd-degree} cubature rules for the state estimation module.
	\item From the perspective of computational cost, to achieve the expected estimation accuracy using the Dual-PF one should employ $500$ particles to perform either state estimation or parameter estimation. This is by far higher than our proposed hybrid-degree schemes. Particularly, the Hybrid \{VI-I\} is the most computationally efficient combination within the approaches that can simultaneously detect, isolate and identify the faults (Table \ref{tab:no_points}).
	\item In view of numerical stability, the \textit{5th-degree} cubature rules (either the Genz's  or  Mysovskikh's theorems) risk of having higher probabilities of instabilities, although the Mysovskikh's theorem is more robust since the negative weights occur when the system order is greater than 7, while  Genz's theorem experiences negative weights when $n \ge 4$. The UKF that is utilized in our GTE application suffer from higher risk of numerical instability for both state and parameter estimation scheme (as shown in Table \ref{tab:no_points}). Importantly, our proposed efficient \textit{5th-degree} rule based on the Stroud's theorem and the CNF-VI filter maintain positive weights for our GTE system which enables them to guarantee their numerical stability.
\end{itemize}
In the following discussions on FD capabilities for the GTE system, we concentrate on comparisons and evaluations of Hybrid \{I-I\} (two concurrently running CKF), Hybrid \{VI-I\} and Dual-PF schemes.
The simulation scenarios consist of multi-mode \underline{concurrent} fault cases and \underline{simultaneous}   fouling  and erosion degradation scenarios. For the fault parameter estimation module dealing with the compressors fouling degradation, the linear model in Eq. (\ref{eq:par_linear}) is selected, whereas, the exponential model in Eq. (\ref{eq:par_exp}) is utilized for the long-term turbine degradation prediction.

\subsubsection{\textbf{Case III: Multi-Mode Concurrent Faults}}
Effects of concurrent faults are investigated by injecting sequential fault patterns into the GTE system first at time $t = 30 s$ where the mass flow rate and efficiency in the LPC segment simultaneously decrease by 3\%; second at the time instance $t = 80 s$ the mass flow rate in the LPT is increased by 2\% and the efficiency is decreased by 2\%; third the HPC segment experiences a 1\% mass flow rate loss and a 4\% efficiency loss at the time instant $t = 120 s$; and finally at $t = 160 s$, the mass flow rate in  HPT is increased by 2\% and the efficiency is decreased by 2\%.

The resulting residual signals are shown in Fig. \ref{fig:concurrent_all}, where the Dual-UKF scheme is not shown since it \underline{cannot} detect fault occurrences in this case. It can be observed that our proposed Hybrid \{VI-I\} demonstrates the best performance as compared to the Dual-PF and Hybrid \{I-I\}, since it can detect and isolate  multi-mode faults at instances of  fault occurrences. Moreover,  estimated fault severities converge to their corresponding true injected fault values. Although the Dual-PF can also achieve the FD objectives for majority of generated residuals, however for the mass flow rate change in HPT at  $t = 160 s$ it fails to converge to the expected 2\% mass flow rate increase. The Hybrid \{I-I\} generates false negative in LPC mass flow rate fault, and the convergence rate of estimated parameters is much slower than the other two methodologies. Therefore,  residuals cannot converge to actual fault severities in the selected time windows.

\begin{table*}
	\caption{State/parameter estimation accuracy (MAE\%) for Case II corresponding to simultaneous abrupt faults in the HPT.}	
	\label{tab:estimation_results2}
	\centering
	\footnotesize
	\renewcommand{\arraystretch}{1.0}
	\setlength{\tabcolsep}{4pt}	
	\begin{tabular}{ccccccccccccc}
		\toprule
		\multirow{3}{2em}{Var.}&\multicolumn{2}{c}{{Hybrid \{I-I\}}} &\multicolumn{2}{c}{Hybrid \{II-I\}} &\multicolumn{2}{c}{Hybrid \{IV-I\}}  &\multicolumn{2}{c}{{Hybrid \{VI-I\}}} &\multicolumn{2}{c}{{Dual-PF}}  &\multicolumn{2}{c}{Hybrid \{VI-III\}} \\
		\cmidrule(lr){2-3} \cmidrule(lr){4-5} \cmidrule(lr){6-7} \cmidrule(lr){8-9}  \cmidrule(lr){10-11} \cmidrule(lr){12-13}
		& {B-F} & {C-II} & {B-F} & {C-II}  & {B-F} & {C-II}  & {B-F} & {C-II}	& {B-F} & {C-II} & {B-F} & {C-II}     \\ \midrule
		$TCC$ &2.1091  &1.9800 &0.2020  &0.1200 &0.3014 &0.2028   &0.2003 &0.2027 &0.1903  &0.1340   &0.2011 &0.3012\\
		$N1$  &2.4461  &20.342 &2.9605  &3.2841 &4.7205 &3.7379   &4.6904 &3.7349 &3.6205  &3.5735   &4.6911 &3.9345\\	
		$N2$  &1.2454  &7.2343 &2.3578  &1.2491 &2.9874 &1.2469   &2.2864 &2.2469 &2.2345  &2.1043   &1.2875 &2.8433\\	
		$PLT$ &0.9244  &3.0623 &0.0026  &0.0011 &0.0017 &0.0011   &0.0016 &0.0011 &0.0036  &0.0025   &0.0019 &0.0111\\	
		$PCC$ &1.0632  &4.0284 &0.0795  &0.0623 &0.0113 &0.0076   &0.0133 &0.0055 &0.0611  &0.0067   &0.0121 &0.0184\\	
		$PLC$ &0.9219  &3.9305 &0.0197  &0.0086 &0.0054 &0.0035   &0.0058 &0.0034 &0.0025  &0.0094   &0.0062 &0.0093\\	
		$PHT$ &1.0061  &3.2463 &0.0102  &0.0099 &0.0022 &0.0015   &0.0021 &0.0015 &0.0015 &0.0026    &0.0029 &0.0124\\
		$\theta_{m_{HT}}$ &0.0009 &2.3241 &0.0010 &0.0012 &0.0020 &0.0016 &0.0012 &0.0011 &0.0008   &0.0012   &0.0010 &0.0009\\		
		$\theta_{\eta_{HT}}$ &0.0102 &4.4216 &0.0014 &0.0012 &0.0020 &0.0016 &0.0012 &0.0011 &0.0009   &0.0013   &0.0010 &0.0126\\				
		\bottomrule
	\end{tabular}	
	\begin{tablenotes}
		\item[a]  Note: B-F and C-II denote 'Before-Fault' and 'after Case-II', respectively.
	\end{tablenotes}	
\end{table*}

Moreover, a given parameter change can cause other parameter changes and slightly affect  estimation of other fault severities. For instance, in  Fig. \ref{fig:concurrent_all} (h),  residuals within the expected fault-free time window (before $t = 160 s$) change slightly when the other faults occur, but  residuals do not exceed their thresholds. This behavior can be explained as a result of uncertainties and discrepancies between the actual  engine model and the simplified mathematical model that was used for the filter design, as well as unavoidable coupling effects among  components that generate the actual engine data. Nevertheless, our proposed methodologies are still able to detect, isolate and identify  fault scenarios and their severities.
\begin{figure*}
	\centering
	\resizebox{1.0\columnwidth}{!}{\includegraphics{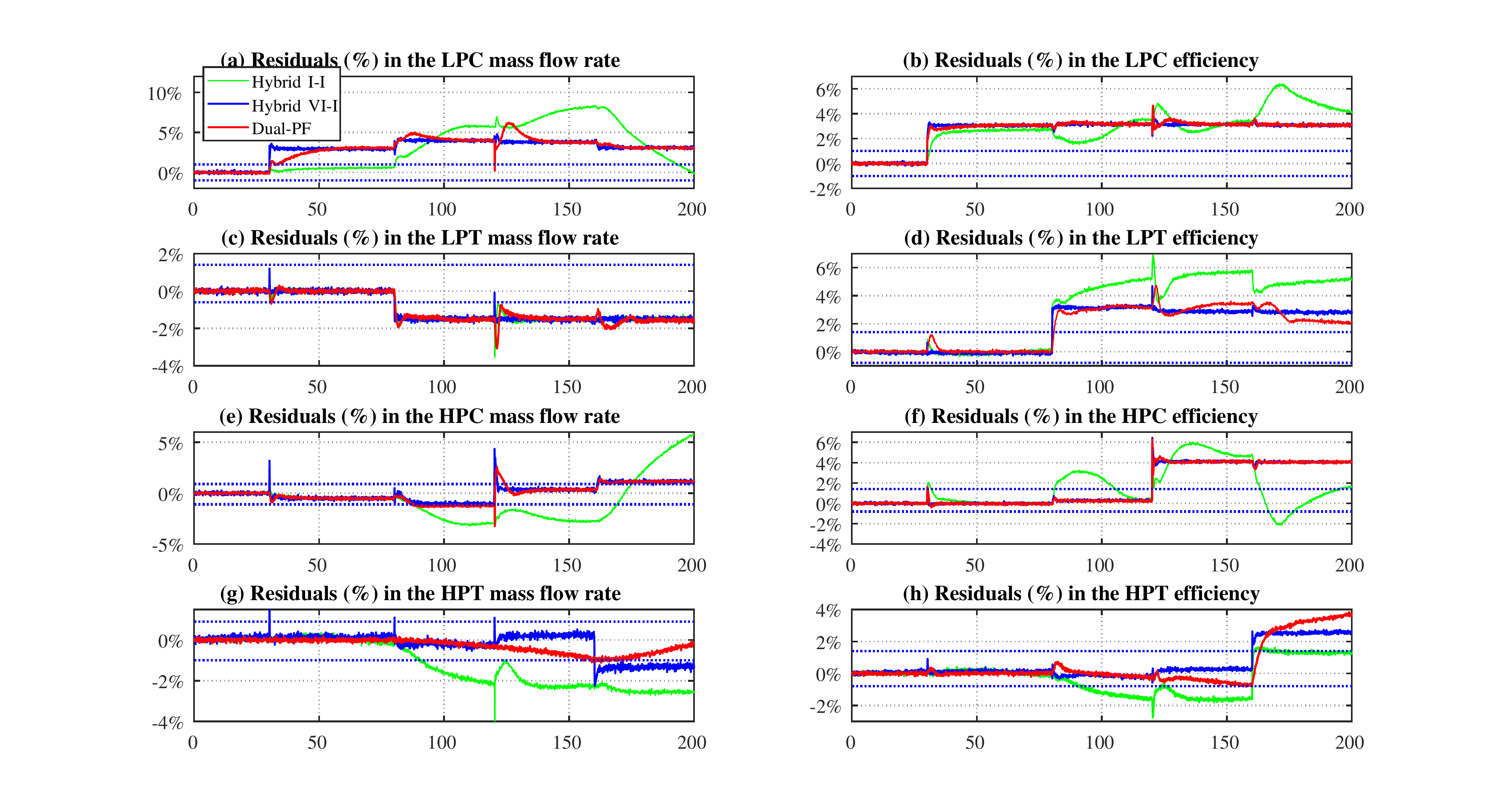}}
	\vspace{-8mm}
	\caption{Residuals $r_k$ (the $y$-axis) (\%) corresponding to  concurrent and simultaneous fault scenarios in the component subsystems LPC, LPT, HPC and HPT. }
	\label{fig:concurrent_all}
\end{figure*}

\begin{table*}[!h]
	\caption{Confusion matrix analysis (\%).} \label{tab:confusion_matrix_1}
	\vspace{-4mm}
	\footnotesize
	\setlength{\tabcolsep}{0.8em} 
	\begin{center}
		\begin{tabular}{ccccccccccc}
			\toprule
			& $ACC$ & $FP$ & $P_{\theta_{mHC}}$ & $P_{\theta_{\eta_{HC}}}$ & $P_{\theta_{mLC}}$ & $P_{\theta_{\eta_{LC}}}$ & $P_{\theta_{mHT}}$ & $P_{\theta_{\eta_{HT}}}$ & $P_{\theta_{mLT}}$ & $P_{\theta_{\eta_{LT}}}$  \\
			\cmidrule(lr){1-1}
			\cmidrule(lr){2-11}
			Hybrid \{I-I\}     & 59.67 & 33.40	 &65.22 &69.57  &62,96 &76.47  &60.00 &47,62 &47.83&58.33\\
			Hybrid \{VI-I\}    & 86.42 & 8.61	 &92.86 &90.00  &93.33 &88.46  &92.59 &80.77 &77.77&67.74\\						 
			Dual-PF           & 87.01 & 7.687	 &91.25 &88.37  &94.56 &89.87  &91.74 &88.93 &77.85&67.83\\
			\bottomrule
		\end{tabular}	
	\end{center}
\end{table*}

\subsection{Fault Diagnosis Comparative Results}
In this subsection, a quantitative study is conducted by utilizing the confusion matrix analysis to evaluate the reliability, accuracy, precision, false alarm and/or misclassification rates corresponding to  methodologies that are proposed in this work.
For each  algorithm (i.e., Hybrid \{I-I\}, Hybrid \{VI-I\} and Dual-PF), the confusion matrices are obtained by performing 100 independent Monte Carlo simulation runs.
Fault scenarios are generated by considering severities that range from $1\%$ to $10\%$ of loss of effectiveness.

\begin{table*}
	\caption{Computational cost with respect to number of points and stability factor (SF) for dual estimation methodologies.}	
	\vspace{-2mm}
	\label{tab:no_points}
	\centering
	\footnotesize
	\renewcommand{\arraystretch}{1.2}
	\setlength{\tabcolsep}{1pt}	
	\begin{tabular}{ccccccccccc}
		\hline 
		&&Hy \{I-I\}&Hy \{II-I\}&Hy \{III-I\}&Hy \{IV-I\}&Hy \{VI-I\}&Hy \{II-III\}&Hy \{VI-III\}&Dual-PF & Dual-UKF\\\hline
		\multirow{2}{5em}{State Estimation}&Ref&$2n_x$&$2n_x^2+1$&$2n_x+2$&$n_x^2+3n_x+3$&$n_x^2+n_x+2$&$2n_x^2+1$&$n_x^2+n_x+2$&-&$2n_x+1$\\
		&GTE &14&99&16&73&58&99&73&500&15\\
		&SF &1&1.23&1&1&1&1&1&-&3.67\\
		\multirow{2}{5em}{Parameter Estimation} &Ref &$2n_x$&$2n_x$&$2n_x$&$2n_x$&$2n_x$&$2n_x+2$&$2n_x+2$&-&$2n_\theta+1$\\
		&GTE &16&16&16&16&16&18&18&500&17\\
		&SF &1&1&1&1&1&1&1&-&4.33\\
		\hline
	\end{tabular}	
	\begin{tablenotes}
		\item[a]  Note: ``Hy'' indicates Hybrid, and ``Ref'' indicates the general reference number of points; 'GTE' indicates the specific number of points for the gas turbine engine.
	\end{tablenotes}
\end{table*}

The rows in  confusion matrices show the actual number of fault scenarios  applied to the GTE system and the columns represent the number of estimated fault categories. The diagonal elements represent the true positive rate ($TP$) for each fault occurrence. The evaluation metrics of the \underline{accuracy}  ($ACC=\sum_{j=1}^{9}c_{jj}/(\sum_{i=1}^{9} \sum_{j=1}^{9} c_{ij})$), \underline{precision} ($P_j=c_{jj}/\sum_{i=1}^{9}c_{ij}$) and \underline{false positive} ($FP$) ($FP=\sum_{j=1}^{8} c_{9j}/\sum_{j=1}^{9}c_{9j}$) are also provided in Table \ref{tab:confusion_matrix_1},
where $c_{ij}$ with $i,j=1,\cdots 9$ denote the value of rows and columns of the confusion matrix.

The results are summarized in Table \ref{tab:confusion_matrix_1} which demonstrate that the FD accuracy of our Hybrid \{VI-I\}  estimation ($86.42\%$) outperforms that of  the Hybrid \{I-I\} approach ($59.67\%$), and the false positive alarm rate of our proposed method ($8.61\%$) is much lower than that of the Hybrid \{I-I\} method ($33.40\%$). The precision of our scheme for all the eight fault parameters is higher than that of Hybrid \{I-I\} approach. The performance of Dual-PF scheme is quite close to that of our designed Hybrid \{VI-I\} approach in terms of ACC, TP and precision, however the former approach needs a much higher computational cost  to achieve the same performance estimation levels.

\subsection{Robustness Analysis in  Presence of Uncertainties}\label{subsec:robust}
The purpose of this section is to evaluate  robustness of the designed hybrid-degree dual cubature-based nonlinear filtering schemes with respect to parametric uncertainties and unmodelled dynamics that arise from modeling. To verify the robustness of our proposed FD framework, the following uncertainties are first considered:
\begin{equation}
\nonumber
\small
\begin{split}
&\zeta_1(x_k,u_k)={(k_1(T_{CC}-T_{HT})-k_2(T_{HC}-T_{LC}))}/({N_2 (\pi/30)^2}\Delta J_1)\\
&\zeta_2(x_k,u_k)={(k_3(T_{HT}-T_{LT})-k_4(T_{LC}-T_{d}))}/({N_1 (\pi/30)^2}\Delta J_2)\\
&\zeta_3(x_k,u_k)=\Delta \gamma RT_{CC}(\dot{m}_{HC}+\dot{m}_{f}-\dot{m}_{HT})/V_{CC}\\
\end{split}
\end{equation}
where $k_1=\eta_{mech}^1\dot{m}_{HT}c_p$, $k_2=\dot{m}_{HC}c_p$, $k_3=\eta_{mech}^2\dot{m}_{LT}c_p$, and $k_4=\dot{m}_{LC}c_p$. The parameters $\Delta \gamma$,  $\Delta J_{1}$ and  $\Delta J_{2}$ correspond to inaccuracies in the ratio,  the inertia of the high and low spool shafts, respectively.
Therefore, the modeling uncertainty is represented by\\ $\zeta(x_k,u_k)=\left[\zeta_1(x_k,u_k),\zeta_2(x_k,u_k),\zeta_3(x_k,u_k),0,0,0,0,0\right]^T$

At time $t = 6 s$, the fault to the LPT component is injected with a  2\% increase in the mass flow rate. The parametric uncertainties of $\Delta \gamma$,  $\Delta J_{1}$ and  $\Delta J_{2}$ in $\zeta(\hat{x}_k,u_k)$ are first assumed to be present at time $t=7s$ with an error of 3\%. The FD performance in terms of  residuals in
presence of  the above modeling uncertainties are shown in Fig. \ref{fig:robust}. It can be observed that our proposed Hybrid \{VI-I\}, Dual-PF and Hybrid \{VI-I\} with modified cubature points propagation can still detect, isolate and identify the faults having different levels of fluctuations, where the Hybrid \{VI-I\} with modified cubature points propagation method can be more robust to  uncertainties. Several false alarms have occurred by using the Dual-PF. In contrast, higher false alarms are generated by the considered Dual-CKF and Dual-UKF.

Table \ref{tab:robustness} shows the robustness analysis when a 6\% increase of the LPT mass flow rate is injected in  presence of different levels of uncertainties. It follows from this table that the Hybrid \{VI-I\} with modified cubature points propagation exhibits the lowest false alarm rates and the best accuracy in terms of $MAE(\%)$, whereas the fault detection time is longer than the Hybrid \{VI-I\} and Dual-PF. As compared to Dual-PF, Dual-CKF, and Dual-UKF schemes, the Hybrid \{VI-I\} scheme can detect occurrence of a fault most quickly and shows a more robust capabilities with respect to  false alarm rates. The Dual-CKF and Dual-UKF are more sensitive to parametric  uncertainties. One can observe that if level of uncertainties is increased e.g. to 7\% inaccuracy, then all methodologies produce false alarms. Therefore, this testing case study scenario can be regarded as a reference benchmark on  limits of our proposed strategy when handling significant levels of simultaneous severe faults and modelling uncertainties.

\begin{figure}[t!]
	\centering
	\resizebox{0.5\columnwidth}{!}{\includegraphics{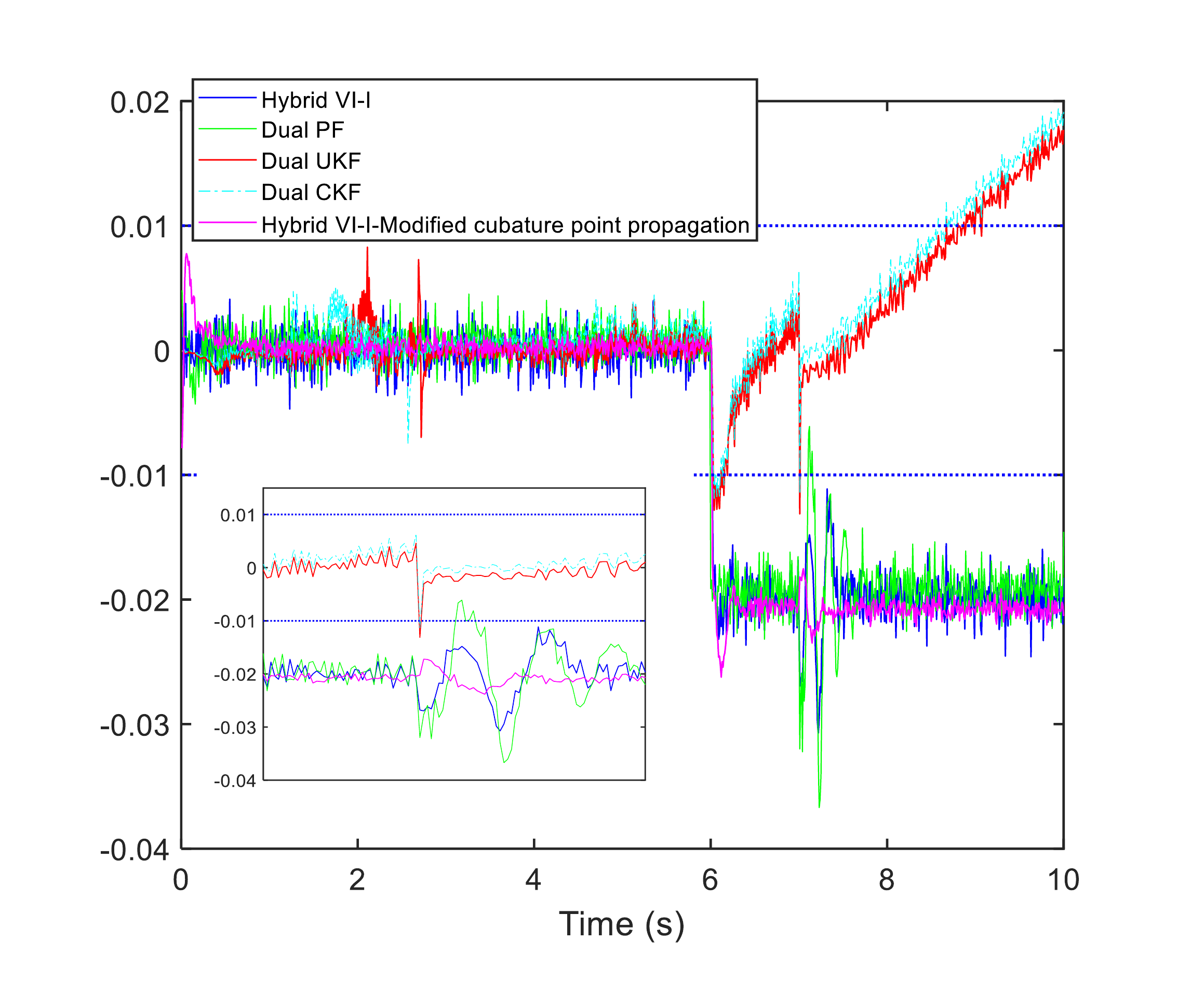}}
	\vspace{-6mm}
	\caption{{Residuals $r_{k}$ (the $y$-axis) (\%) corresponding to  degradation in the LPT mass flow rate scenarios in  presence of modeling uncertainties.} }
	\label{fig:robust}
\end{figure}

\begin{table*}
	\caption{{Robustness analysis for 6\% increase in the LPT mass flow rate corresponding to various uncertainty levels.}}	
	\vspace{-2mm}
	\label{tab:robustness}
	\centering
	\footnotesize
	\renewcommand{\arraystretch}{1.0}
	\setlength{\tabcolsep}{1.5pt}	
	\begin{tabular}{cccccccccccccc}
		\toprule
		\multirow{3}{2em}{Method}&\multirow{3}{2em}{FDT}
		&\multicolumn{2}{c}{{2\%}} &\multicolumn{2}{c}{3\%} &\multicolumn{2}{c}{4\%}  &\multicolumn{2}{c}{{5\%}} &\multicolumn{2}{c}{{6\%}}  &\multicolumn{2}{c}{7\%} \\ \cmidrule(lr){3-4} \cmidrule(lr){5-6} \cmidrule(lr){7-8} \cmidrule(lr){9-10}  \cmidrule(lr){11-12} \cmidrule(lr){13-14}
		&& {MAE\%} & {FAR\%}  & {MAE\%} & {FAR\%} & {MAE\%} & {FAR} & {MAE\%} & {FAR\%} & {MAE\%} & {FAR\%} & {MAE\%} & {FAR\%}    \\ \midrule
		Hybrid \{VI-I\} &0.35&0.08&0&0.12&0&0.19&0.01&0.21&0.03&0.35&0.08&0.51&0.11\\
		Dual-PF  &0.50&0.09&0&0.11&0.01&0.18&0.01&0.22&0.03&0.32&0.07&0.50&0.10\\	
		Dual-CKF  &0.85&0.31&0.02&0.61&0.03&0.78&0.04&1.59&0.07&1.79&0.12&2.01&0.22\\	
		Dual-UKF &0.80&0.33&0.02&0.65&0.03&0.81&0.05&1.78&0.08&2.03&0.26&2.24&0.30\\	
		M-Hybrid \{VI-I\} &0.65&0.05&0&0.07&0&0.10&0&0.14&0&0.18&0&0.22&0.03\\						
		\bottomrule
	\end{tabular}	
	\begin{tablenotes}
		\item[a]  Note: FDT denotes 'Fault Detection Time', and FAR denote 'False Alarm Rate'.
	\end{tablenotes}	
\end{table*}

In terms of the robustness capability against  unmodelled dynamics, additive nonlinearities are added to the measurement model with respect to the spool speed $N_1$. It was shown that (figures  not shown due to space limitations) if the magnitude of uncertainty exceeds beyond a certain range our proposed FD framework could produce erroneous decisions. Nevertheless, it generates significantly improved FD performance as compared to  available methodologies in the literature that we have considered in this work, and it enables one to deal with unknown dynamics within a given bounded range.

\section{Boundedness Analysis of Parameter Estimation Error}\label{sec6}

The following lemmas are essential in establishing our main technical analysis and results.

\begin{lem}[\cite{theodor1996robust}]
	\label{lemma1}
	For $0 \leq k \leq N$, suppose that $X = X^T \geq 0$, $\mathcal{S}_k(X)=\mathcal{S}_k^T(X) \in \mathbb{R}^{n \times n}$ and $\mathcal{H}_k(X)=\mathcal{H}_k^T (X) \in \mathbb{R}^{n \times n}$. If
	\begin{equation}
	\label{eq:lemma11}
	\mathcal{S}_k(Y) \geq \mathcal{S}_k(X), \ \forall X \leq Y = Y^T
	\end{equation}
	and
	\begin{equation}
	\label{eq:lemma12}
	\mathcal{H}_k(Y) \geq \mathcal{S}_k(X)
	\end{equation}
	Then the solutions $\mathcal{M}_k$ and $\mathcal{N}_k$ to the following difference equations
	\begin{equation}
	\label{eq:lemma13}
	\mathcal{M}_{k+1} = \mathcal{S}_k(\mathcal{M}_{k}), \  \mathcal{N}_{k+1} = \mathcal{H}_k(\mathcal{N}_{k}), \ \mathcal{M}_{0} = \mathcal{N}_0 > 0
	\end{equation}
	satisfy
	\begin{equation}
	\label{eq:lemma14}
	\mathcal{M}_{k} \leq \mathcal{N}_k
	\end{equation}
\end{lem}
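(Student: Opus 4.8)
The plan is to establish the matrix inequality $\mathcal{M}_k \leq \mathcal{N}_k$ by a finite induction on the time index $k$, using only the monotonicity property (\ref{eq:lemma11}) of $\mathcal{S}_k$ and the domination property (\ref{eq:lemma12}) that bounds $\mathcal{H}_k$ below by $\mathcal{S}_k$. The induction hypothesis I would carry is the slightly strengthened statement that $\mathcal{M}_k = \mathcal{M}_k^T \geq 0$, $\mathcal{N}_k = \mathcal{N}_k^T \geq 0$, and $\mathcal{M}_k \leq \mathcal{N}_k$, so that the comparison hypotheses remain applicable at the next step.

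For the base case $k=0$, the assumption $\mathcal{M}_0 = \mathcal{N}_0 > 0$ immediately gives $\mathcal{M}_0 \leq \mathcal{N}_0$, with both matrices symmetric and positive (semi)definite. For the inductive step, assume the hypothesis holds at level $k$ with $k \leq N-1$. Applying (\ref{eq:lemma11}) with $X = \mathcal{M}_k$ and $Y = \mathcal{N}_k$ --- admissible because $\mathcal{N}_k = \mathcal{N}_k^T$ and $\mathcal{M}_k \leq \mathcal{N}_k$ --- yields $\mathcal{S}_k(\mathcal{M}_k) \leq \mathcal{S}_k(\mathcal{N}_k)$. Invoking (\ref{eq:lemma12}) at the point $\mathcal{N}_k$ yields $\mathcal{S}_k(\mathcal{N}_k) \leq \mathcal{H}_k(\mathcal{N}_k)$. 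Combining these with the recursions (\ref{eq:lemma13}) gives
\[
\mathcal{M}_{k+1} = \mathcal{S}_k(\mathcal{M}_k) \leq \mathcal{S}_k(\mathcal{N}_k) \leq \mathcal{H}_k(\mathcal{N}_k) = \mathcal{N}_{k+1},
\]
which is the inequality at level $k+1$. Symmetry of $\mathcal{M}_{k+1}$ and $\mathcal{N}_{k+1}$ is inherited from the stated symmetry of the maps $\mathcal{S}_k(\cdot)$ and $\mathcal{H}_k(\cdot)$, and positive semidefiniteness propagates from the standing hypotheses on these maps (equivalently, in the filtering application, from the Riccati-type structure of $\mathcal{S}_k$ and $\mathcal{H}_k$ evaluated on positive semidefinite arguments). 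Iterating from $k=0$ up to $N$ then yields $\mathcal{M}_k \leq \mathcal{N}_k$ for all $0 \leq k \leq N$, which is (\ref{eq:lemma14}).

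The argument is essentially a two-line monotone-iteration estimate, so there is no deep obstacle; the one point needing genuine care is the bookkeeping that keeps the iterates inside the cone of symmetric positive semidefinite matrices at every step, since the hypotheses (\ref{eq:lemma11})--(\ref{eq:lemma12}) are only postulated for symmetric arguments ordered by the Loewner order. Once that invariance is folded into the induction alongside the inequality, the chaining $\mathcal{S}_k(\mathcal{M}_k) \leq \mathcal{S}_k(\mathcal{N}_k) \leq \mathcal{H}_k(\mathcal{N}_k)$ does all the work.
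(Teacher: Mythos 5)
Your induction is correct and is exactly the standard comparison argument: the chain $\mathcal{M}_{k+1}=\mathcal{S}_k(\mathcal{M}_k)\leq\mathcal{S}_k(\mathcal{N}_k)\leq\mathcal{H}_k(\mathcal{N}_k)=\mathcal{N}_{k+1}$, with the base case $\mathcal{M}_0=\mathcal{N}_0$, is all that is needed, and your remark about keeping the iterates symmetric positive semidefinite (so that the hypotheses remain applicable) is the right piece of bookkeeping. Note that the paper itself gives no proof of this lemma --- it is imported verbatim from the cited reference of Theodor and Shaked --- and your argument coincides with the standard proof given there, so there is nothing to reconcile.
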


\begin{lem}[\cite{xie1994robust}]
	\label{lemma2}
	Given matrices $A$, $H$, $E$ and $F$ with appropriate dimensions such that $F F^T \leq I$. Let $X$ be a symmetric positive definite matrix and $\gamma$ be an arbitrary positive constant such that $\gamma^{-1}I-EXE^T>0$. Then the following inequality holds
	\begin{equation}
	\label{eq:lemma21}
	\begin{split}
	(A+HFE)&X(A+HFE)^T \\
	&\leq A(X^{-1}-\gamma E^T E)^{-1} A^T+\gamma^{-1}HH^T
	\end{split}
	\end{equation}
\end{lem}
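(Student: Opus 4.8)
The plan is to establish the inequality by showing that
\begin{equation*}
\Delta \triangleq A(X^{-1}-\gamma E^{T}E)^{-1}A^{T} + \gamma^{-1}HH^{T} - (A+HFE)X(A+HFE)^{T} \geq 0 .
\end{equation*}
First I would dispose of well-posedness. Since $X>0$ and $\gamma>0$, applying the Schur complement to the symmetric block matrix with diagonal blocks $X^{-1}$ and $\gamma^{-1}I$ and off-diagonal block $E$ shows that the hypothesis $\gamma^{-1}I-EXE^{T}>0$ is equivalent to $X^{-1}-\gamma E^{T}E>0$; hence the inverse appearing on the right-hand side exists and is positive definite. The key step is then the Sherman--Morrison--Woodbury identity, which gives
\begin{equation*}
(X^{-1}-\gamma E^{T}E)^{-1} = X + XE^{T}N^{-1}EX , \qquad N \triangleq \gamma^{-1}I - EXE^{T} > 0 .
\end{equation*}
This peels the nominal term $AXA^{T}$ off the right-hand side and expresses the correction through the positive definite matrix $N$.

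Next I would expand $(A+HFE)X(A+HFE)^{T}=AXA^{T}+AXE^{T}F^{T}H^{T}+HFEXA^{T}+HFEXE^{T}F^{T}H^{T}$, cancel the common $AXA^{T}$, and in the quartic-looking term substitute $EXE^{T}=\gamma^{-1}I-N$. Collecting terms and writing $G\triangleq EXA^{T}$ gives
\begin{equation*}
\Delta = \gamma^{-1}H\left(I-FF^{T}\right)H^{T} + \left(G^{T}N^{-1}G - G^{T}F^{T}H^{T} - HFG + HFNF^{T}H^{T}\right) .
\end{equation*}
The first summand is positive semidefinite straight from the norm bound $FF^{T}\leq I$. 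For the bracketed term I would complete the square: because $N>0$, set $a\triangleq N^{-1/2}G$ and $b\triangleq N^{1/2}F^{T}H^{T}$; then $a^{T}a=G^{T}N^{-1}G$, $b^{T}b=HFNF^{T}H^{T}$ and $a^{T}b=G^{T}F^{T}H^{T}$, so the bracket equals $(a-b)^{T}(a-b)\geq 0$. Adding the two nonnegative pieces yields $\Delta\geq 0$, which is precisely the claimed inequality.

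I expect the main obstacle to be organizational rather than conceptual. One has to track the (generally rectangular) dimensions of $A,H,E,F$ so that each completed square is a legitimate matrix product, and --- the one genuinely non-obvious move --- one has to adopt the split $EXE^{T}=\gamma^{-1}I-N$, which is what lets the mixed $F$-dependent terms reassemble into a perfect square. A point worth stating carefully is that $FF^{T}\leq I$ is exactly the hypothesis needed: it is equivalent to $\|F\|\leq 1$ and hence to $F^{T}F\leq I$ (since $FF^{T}$ and $F^{T}F$ share the same nonzero spectrum), and it is used both to sign the term $\gamma^{-1}H(I-FF^{T})H^{T}$ and, in the equivalent Schur-complement reading of the bracket, to guarantee $N^{-1}\geq\gamma I\geq\gamma F^{T}F$. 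As a cross-check one may observe that the bracketed term is a congruence transform of the positive semidefinite block matrix with blocks $N^{-1},-I,-I,N$, which is simply the perfect square written in block form.
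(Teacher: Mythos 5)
Your proof is correct. The paper itself does not prove this statement---it is quoted verbatim from the cited reference \cite{xie1994robust} and used as an auxiliary tool in the boundedness analysis---so there is no in-paper argument to compare against. Your route (Schur complement to get well-posedness of $X^{-1}-\gamma E^TE$, the Woodbury identity $(X^{-1}-\gamma E^TE)^{-1}=X+XE^TN^{-1}EX$ with $N=\gamma^{-1}I-EXE^T$, then the substitution $EXE^T=\gamma^{-1}I-N$ and a completion of squares with $a=N^{-1/2}EXA^T$, $b=N^{1/2}F^TH^T$) is exactly the standard completion-of-squares argument used in the robust filtering literature from which the lemma originates, and every step checks out: the dimensions of $a$ and $b$ match, the cross terms reproduce $G^TF^TH^T$ and its transpose, and $FF^T\le I$ is indeed used only to sign $\gamma^{-1}H(I-FF^T)H^T$ (your remark about $N^{-1}\ge\gamma I\ge\gamma F^TF$ is a harmless redundancy, since the bracketed term is a perfect square irrespective of the norm bound on $F$). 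The block-matrix congruence cross-check at the end is also correct.
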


\begin{lem}[\cite{kluge2010stochastic}]
	\label{lemma3}
	If both $A$ and $B$ are symmetric positive definite matrices, then
	\begin{equation}
	\label{eq:lemma31}
	(A+B)^{-1}>A^{-1}-A^{-1}BA^{-1}
	\end{equation}
\end{lem}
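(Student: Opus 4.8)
The plan is to prove the equivalent statement that the symmetric matrix $\Delta \triangleq (A+B)^{-1} - A^{-1} + A^{-1}BA^{-1}$ is positive definite. Since $A$ is symmetric and invertible, congruence by $A$ preserves the L\"owner order in both directions, so $\Delta > 0$ holds if and only if $A\Delta A > 0$. Expanding, $A\Delta A = A(A+B)^{-1}A - A + B$, so the whole task reduces to showing this last expression is positive definite.

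The second step is a short algebraic simplification. Writing $A = (A+B) - B$ and substituting into $A(A+B)^{-1}A$,
\begin{equation}
\nonumber
\begin{split}
A(A+B)^{-1}A &= \big[(A+B)-B\big](A+B)^{-1}\big[(A+B)-B\big] \\
&= (A+B) - 2B + B(A+B)^{-1}B .
\end{split}
\end{equation}
Inserting this into $A\Delta A = A(A+B)^{-1}A - A + B$, the terms $(A+B) - A - B$ cancel and one is left with the compact identity $A\Delta A = B(A+B)^{-1}B$.

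The final step is immediate: $A+B$ is symmetric positive definite by hypothesis, hence so is $(A+B)^{-1}$; and $B$ is invertible, so $B(A+B)^{-1}B$ is a congruence of an SPD matrix and is therefore SPD. Thus $A\Delta A > 0$, and applying the congruence by $A^{-1}$ yields $\Delta = A^{-1}\big[B(A+B)^{-1}B\big]A^{-1} > 0$, which is precisely the claimed strict inequality $(A+B)^{-1} > A^{-1} - A^{-1}BA^{-1}$.

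There is no genuine obstacle here; the only non-mechanical choice is recognizing the reformulation. An equivalent route would be to apply the matrix inversion (Woodbury) identity $(A+B)^{-1} = A^{-1} - A^{-1}(A^{-1}+B^{-1})^{-1}A^{-1}$, whence $\Delta = A^{-1}\big[B - (A^{-1}+B^{-1})^{-1}\big]A^{-1}$, and then note that $A^{-1}+B^{-1} > B^{-1}$ (since $A^{-1} > 0$) inverts to $(A^{-1}+B^{-1})^{-1} < B$, so the bracketed factor is positive definite. Either way, the argument only invokes two standard facts that I would state explicitly: congruence by an invertible matrix preserves the L\"owner order, and inversion reverses it on the cone of symmetric positive definite matrices.
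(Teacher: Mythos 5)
Your proof is correct: the congruence reduction $A\Delta A = B(A+B)^{-1}B$ is verified by direct expansion, and the positivity conclusion follows from the two standard L\"owner-order facts you state. Note that the paper itself gives no proof of this lemma (it is quoted from the cited reference), so your argument simply supplies the omitted justification; the identity you derive, $\Delta = A^{-1}B(A+B)^{-1}BA^{-1} > 0$, is exactly the standard one underlying the cited result, and your Woodbury-based alternative is an equally valid variant.
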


This paper concentrates on the boundedness analysis of parameter estimation error which is of utmost importance to accomplish the fault diagnosis objective, rather than the joint convergence analysis. The conventional strategy for dual state/parameter estimation scheme is to  first optimize one with the other one fixed, and then alternate. Different from the other direct decoupling approaches, the error-coupling effects between states and parameters will be considered for the boundedness analysis of parameter estimation error in the next subsection.

Before proceeding to the boundedness analysis, the following assumption is made regarding the dynamical system (\ref{eq:nonlinear_sys_state}) and (\ref{eq:nonlinear_sys_measure}).
\\
\textbf{Assumption 2:} The variable $\{x_k, \theta_k\}$ satisfies the range over a compact set, for which the functions $f(x_k, \theta_k, u_k)$ in (\ref{eq:nonlinear_sys_state}) and $g(x_k, \theta_k, u_k)$ in  (\ref{eq:nonlinear_sys_measure}) are continuously differentiable with respect to the state $x_k$ and the parameter $\theta_k$, respectively.

The parameter estimation methodology that is developed in this work is based on  \textit{3rd-degree} cubature rules. Our goal is to investigate boundedness properties of the estimated parameters in presence of both approximation errors of the cubature rules, modeling uncertainties and error from the state estimation. Let us consider the following reformulated system ${\Omega}_\theta$ for the parameter estimation problem
\begin{equation}
\label{eq:nonlinear_sys_par_1}
{\Omega}_\theta:  
\left\{ 
\begin{array}{l}
\theta_k=h(\theta_{k-1})+\tau_{k-1}\\
z_k={g}(\hat{x}_{k|k},\theta_k,u_k)+\tilde{g}_{x,k}+\zeta(\hat{x}_{k|k},u_k)+v_k\\
\end{array}
\right.
\end{equation}
where $\tilde{g}_{x,k} \triangleq {g}(x_{k},{\theta}_{k},u_{k})-{g}(\hat{x}_{k|k},{\theta}_{k},u_{k})$ denotes the nonlinear interactive error which is introduced to account for the bias of state estimate $\hat{x}_{k|k}$. 

\subsection{Bounded Parameter Estimation Error Covariance}
The goal here is to verify the boundedness of the parameter estimation error covariance. 
\\
\begin{thm}
	\label{thm:Pkk_theta}
	Consider the nonlinear system (\ref{eq:nonlinear_sys_par_1}), and let the following conditions hold:\\
	(1) There exist positive constants $b_{min}$, $b_{max}$, $\delta_{v,max}$, such that the following bounds are satisfied for $k\geq 0$:
	\begin{equation}
	\label{eq:thm:Pkk_theta1}
	\begin{split}
	&b_{min}^2I \le B_kB_k^T  \le b_{max}^2I, \  \Sigma_{v,k} \le \delta_{v,max}I
	\end{split}	
	\end{equation}
	(2) Taking the high-order terms of the Taylor series expansion into consideration, there exist positive constants, $\gamma_{min}$, $\gamma_{max}$, $d_{i,min}$, $d_{i,max}$, $l_{min}$, $l_{max}$, such that the following bounds can be fulfilled:
	\begin{equation}
	\label{eq:thm:Pkk_theta2}
	\begin{split}
	&\gamma_{min}^2I \le  \Gamma_{k}\Gamma_{k}^T  \le \gamma_{max}^2I\\
	&d_{i,min}^2I \le  \mathcal{D}_{i,k}\mathcal{D}_{i,k}^T  \le d_{i,max}^2I, i=1,2\\
	&l_{min}^2I \le  \mathcal{L}_k\mathcal{L}_k^T  \le l_{max}^2I,
	\end{split}	
	\end{equation}	
	Then, the parameter estimation error covariance matrix can be bounded by
	\begin{equation}
	\label{eq:thm:Pkk_theta3}
	P_{k|k}^{\theta \theta}\leq \lambda_{max}^{\theta} I
	\end{equation}
\end{thm}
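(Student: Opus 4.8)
The idea is to reduce the covariance recursion~(\ref{eq:PKK}) to a monotone Riccati-type inequality and then close the argument with a fixed-point/comparison step based on Lemma~\ref{lemma1}. First I would linearize the parameter dynamics $h(\cdot)$ and the measurement map $g(\hat{x}_{k|k},\cdot,u_k)$ \emph{exactly} by inserting instrumental diagonal matrices $\alpha_k,\beta_k$ with $\alpha_k\alpha_k^{T}\le I$, $\beta_k\beta_k^{T}\le I$; this is legitimate because Assumption~2 makes $h$ and $g$ continuously differentiable on the compact set, so the Jacobians $\partial h/\partial\theta$ and $B_k=\partial g/\partial\theta$ exist and are bounded, and the cubature-propagated mean and spread of a \textit{3rd-degree} point set agree with such a linear-plus-remainder model. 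This produces a prediction recursion
\begin{equation}
\nonumber
P_{k|k-1}^{\theta\theta}=\Gamma_k P_{k-1|k-1}^{\theta\theta}\Gamma_k^{T}+\mathcal{D}_{1,k}\mathcal{D}_{1,k}^{T}+\Sigma_{\tau,k-1},
\end{equation}
with $\Gamma_k$ bundling $\partial h/\partial\theta$ and $\alpha_k$, and an innovation/gain description in which the effective measurement noise $R_k$ collects $\Sigma_{v,k}$ together with the bounded uncertainty $\zeta(\hat{x}_{k|k},u_k)$ (Assumption~1), the bounded state-coupling error $\tilde{g}_{x,k}$ of~(\ref{eq:nonlinear_sys_par_1}) (bounded via the Lipschitz property from Assumption~2 on the compact set), and the update cubature remainder, all captured uniformly by $\mathcal{D}_{2,k}$ and $\mathcal{L}_k$.

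Next I would substitute $K_k^{\theta}=P_{k|k-1}^{\theta\theta}B_k^{T}(B_kP_{k|k-1}^{\theta\theta}B_k^{T}+\Sigma_{v,k})^{-1}$ into~(\ref{eq:PKK}), use the matrix inversion lemma to rewrite the Joseph part as $[(P_{k|k-1}^{\theta\theta})^{-1}+B_k^{T}\Sigma_{v,k}^{-1}B_k]^{-1}$, and bound the residual term $K_k^{\theta}\,\mathbb{E}((\zeta_k+\psi_z)(\zeta_k+\psi_z)^{T})(K_k^{\theta})^{T}$ by a uniformly bounded positive semidefinite matrix, using Lemma~\ref{lemma3} to split cross contributions and the hypothesis bounds to control $\|K_k^{\theta}\|$. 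Lemma~\ref{lemma2} is applied to absorb the instrumental perturbations $\alpha_k,\beta_k$ into $\Gamma_k$ and $B_k$ (its form $(A+HFE)X(A+HFE)^{T}\le A(X^{-1}-\gamma E^{T}E)^{-1}A^{T}+\gamma^{-1}HH^{T}$ is exactly what is needed with $F$ the contractive instrumental factor). Combining the pieces yields $P_{k|k}^{\theta\theta}\le\mathcal{S}_k(P_{k-1|k-1}^{\theta\theta})$ for a map of the schematic form
\begin{equation}
\nonumber
\mathcal{S}_k(X)=\bigl[(\Gamma_kX\Gamma_k^{T}+Q_k)^{-1}+B_k^{T}\Sigma_{v,k}^{-1}B_k\bigr]^{-1}+\Pi_k,
\end{equation}
with $Q_k=\mathcal{D}_{1,k}\mathcal{D}_{1,k}^{T}+\Sigma_{\tau,k-1}$ and $\Pi_k$ a uniformly bounded positive semidefinite matrix. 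One then checks that $X\mapsto\mathcal{S}_k(X)$ is monotone nondecreasing on $\{X=X^{T}\ge 0\}$, being a composition of the affine monotone map $X\mapsto\Gamma_kX\Gamma_k^{T}+Q_k$, the operator-antitone inversion, and the operator-monotone map $Y\mapsto(Y^{-1}+C)^{-1}$, shifted by a constant.

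Finally, using conditions~(1)--(2)---in particular an observability-type lower bound $B_k^{T}\Sigma_{v,k}^{-1}B_k\ge c_{min}I$ derived from $b_{min}^{2}I\le B_kB_k^{T}$ and $\Sigma_{v,k}\le\delta_{v,max}I$, together with the upper bounds $\gamma_{max}$, $d_{1,max}$, $d_{2,max}$, $l_{max}$, $b_{max}$---evaluating $\mathcal{S}_k$ at $\lambda I$ gives $\mathcal{S}_k(\lambda I)\le\bigl(\tfrac{\gamma_{max}^{2}\lambda+q_{max}}{1+c_{min}(\gamma_{max}^{2}\lambda+q_{max})}+\pi_{max}\bigr)I$, whose right-hand side remains bounded as $\lambda\to\infty$. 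Hence there exists a finite $\lambda_{max}^{\theta}>0$, enlarged if necessary so that also $P_{0|0}^{\theta\theta}\le\lambda_{max}^{\theta}I$, with $\mathcal{S}_k(\lambda_{max}^{\theta}I)\le\lambda_{max}^{\theta}I$ for every $k$. Taking the constant comparison system $\mathcal{N}_k\equiv\lambda_{max}^{\theta}I$, i.e. $\mathcal{H}_k(\cdot)\equiv\lambda_{max}^{\theta}I$, against $\mathcal{M}_k=P_{k|k}^{\theta\theta}$ fulfils the hypotheses of Lemma~\ref{lemma1}, and therefore $P_{k|k}^{\theta\theta}=\mathcal{M}_k\le\mathcal{N}_k=\lambda_{max}^{\theta}I$ for all $k\ge 0$, which is~(\ref{eq:thm:Pkk_theta3}).

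The principal obstacle is the first paragraph: making the exact instrumental-matrix linearization rigorous for the \textit{3rd-degree} cubature rule (the filter never forms a Jacobian, so one must show that the mean and dispersion of the propagated cubature set coincide with a linear model up to a contractive diagonal factor plus a higher-order remainder), and, relatedly, dominating the state-induced coupling term $\tilde{g}_{x,k}$ purely through compactness/Lipschitz arguments rather than through a separate bound on the state-estimation error, while ensuring every lumped quantity ($\Gamma_k$, $\mathcal{D}_{1,k}$, $\mathcal{D}_{2,k}$, $\mathcal{L}_k$, $R_k$, $\Pi_k$) is uniform in $k$ and oriented so that the monotonicity of $\mathcal{S}_k$ and the fixed-point inequality $\mathcal{S}_k(\lambda_{max}^{\theta}I)\le\lambda_{max}^{\theta}I$ hold simultaneously.
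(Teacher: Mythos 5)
Your overall route is genuinely different from the paper's: you set up a monotone Riccati-type comparison map and close with a fixed-point argument via Lemma~\ref{lemma1} (with Lemma~\ref{lemma2} absorbing the instrumental perturbations), whereas the paper never iterates at all. It writes the Gaussian-assumed update explicitly as (\ref{eq:P_kk_theta}), applies Lemma~\ref{lemma3} once to discard the predicted-covariance contribution, obtaining $P_{k|k}^{\theta\theta}\le (B_k+\Gamma_k\mathcal{D}_{2,k}\mathcal{L}_k)^{-1}\bigl[\mathbb{E}\{\zeta_k\zeta_k^T\}+\mathbb{E}\{\tilde g_{x,k}\tilde g_{x,k}^T\}+\Sigma_{v,k}\bigr](B_k+\Gamma_k\mathcal{D}_{2,k}\mathcal{L}_k)^{-T}$ as in (\ref{eq:P_kk_theta1}), and then reads off $\lambda_{max}^\theta$ from the lower bounds in (\ref{eq:thm:Pkk_theta1})--(\ref{eq:thm:Pkk_theta2}) as in (\ref{eq:P_kk_theta2}). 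Your comparison scheme could in principle work (and would retain information the paper discards), but it needs more machinery, and note that both your step $B_k^T\Sigma_{v,k}^{-1}B_k\ge c_{min}I$ and the paper's inversion of $B_k+\Gamma_k\mathcal{D}_{2,k}\mathcal{L}_k$ implicitly require $B_k$ to be (effectively) square and of full rank; a row-space lower bound $B_kB_k^T\ge b_{min}^2 I$ alone does not give the column-space bound you invoke.

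The genuine gap is exactly the point you flag as your ``principal obstacle'': you propose to dominate the state-coupling term $\tilde g_{x,k}=g(x_k,\theta_k,u_k)-g(\hat x_{k|k},\theta_k,u_k)$ ``purely through compactness/Lipschitz arguments rather than through a separate bound on the state-estimation error.'' That cannot work as stated. Assumption~2 confines $(x_k,\theta_k)$ to a compact set, but $\hat x_{k|k}$ is an output of the state filter and is not constrained to lie in that set, and any Lipschitz estimate of $\tilde g_{x,k}$ scales with $\|x_k-\hat x_{k|k}\|$, i.e.\ with the state estimation error itself. This is precisely why the paper introduces, inside the proof, the additional Assumption~3 (state estimation error bounded by $\varepsilon_x$ and its covariance by $\varsigma_{max}I$), whose justification is deferred to the separate state-error boundedness analysis in Section~\ref{subsec:bound_state}; the resulting constant explicitly carries the term $\bar{\tilde g}_x^2\varsigma_{max}$ in (\ref{eq:P_kk_theta2}). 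Without such an assumption (or an equivalent coupling argument), your effective noise $R_k$ and offset $\Pi_k$ are not uniformly bounded in $k$, so no uniform fixed point $\lambda_{max}^\theta$ with $\mathcal{S}_k(\lambda_{max}^\theta I)\le\lambda_{max}^\theta I$ for all $k$ can be exhibited, and the comparison step via Lemma~\ref{lemma1} does not close. To repair the proposal, import the paper's Assumption~3 (or prove a joint state--parameter bound) before constructing $\mathcal{S}_k$.
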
	

\begin{proof}
	\label{proof:thm:Pkk_theta}
	
	Based on the Taylor series expansion, the parameter and measurement prediction errors by using the proposed CNFs for parameter estimation can be obtained as
	\begin{equation}
	\label{eq:PE_k|k-1_simple}
	\epsilon_{k|k-1}\!=\!A_{k-1}\epsilon_{k-1|k-1}\!+\!\psi_\theta(\hat{\theta}_{k-1|k-1},\theta_{k-1},x_k)\!+\!\tau_{k}
	\end{equation}
		\begin{equation}
	\label{eq:PE_k|k-1}
	\begin{split}
	\epsilon_{k|k}&=(I-K_k^\theta B_k)\epsilon_{k|k-1}+K_k^\theta\psi_{z,k}^\theta (\hat{\theta}_{k|k-1},\theta_k, x_k) \\
	&-K_k^\theta \zeta_k-K_k^{\theta}\tilde{g}_{x,k}-K_{k}^\theta v_k\\
	\end{split}
	\end{equation}
	where $\psi_\theta(\hat{\theta}_{k-1|k-1},\theta_{k-1},x_k)$ and $\psi_z(\hat{\theta}_{k|k-1},\theta_k,x_k)$ represent the higher order terms which involve truncation errors associated with the approximation, and $A_{k-1}=\partial h(\cdot)/\partial \theta_{k-1}$, $B_k=\partial g(\cdot)/\partial \theta_k$.
	
	For simplicity, the high-order terms $\psi_\theta(\hat{\theta}_{k-1|k-1},\theta_{k-1}, x_k)$, $\psi_z(\hat{\theta}_{k|k-1},\theta_k, x_k)$ and $\zeta(x_k, u_k)$ in the following deviations are simplified as $\psi_{\theta,k-1}$, $\psi_{z,k}^{\theta}$ and $\zeta_k$, respectively.
	
	In order to facilitate the expression in the process of boundedness analysis, the conditions on the interactive error term $\tilde{g}_{x,k}$, the high-order terms $\psi_{\theta,k-1}$ and $\psi_{z,k}^{\theta}$ are first analyzed. The condition on uncertainty $\zeta_k$ has been provided in Assumption 1.
	\begin{itemize}
		\item[$\bullet$] The interactive error term $\tilde{g}_{x,k}$ can be bounded given the following assumption.
		\\
		\textbf{Assumption 3:} The state estimation error and its corresponding error covariance matrix at the time instant $k$ are bounded by $\varepsilon_x$ and  $\varsigma_{max} I$, respectively, with $\varepsilon_x>0$ and $\varsigma_{max}>0$.
		
		The rationality of the Assumption 3 will be discussed in Section \ref{subsec:bound_state}. 
	\end{itemize}
	
	Based on the Taylor series expansion of $\tilde{g}_{x,k}$ at $\hat{{x}}_{k|k}$, one obtains $\tilde{g}_{x,k}=\alpha_{g_x}G_{x,k}\tilde{x}_{k|k}$, where $G_{x,k}=\partial g(x_k,\theta_k,u_k)/\partial x_k$ and $\alpha_{g_x}=diag(\alpha_{g_x,1,k},\alpha_{g_x,2,k},\cdots,\alpha_{g_x,n_x,k})$ denotes an unknown instrumental diagonal matrix to compensate the high-order terms of expansion. Given Assumption 2, the two terms $G_{x,k}$ and  $\alpha_{g_x}$ are assumed to hold conditions of $\|G_{x,k}\| \leq \bar{g}_x$ and  $\|\alpha_{g_x}\| \leq \bar{\alpha}_{g_x}I$, respectively. In this case, further considering Assumption 3, one can obtain inequalities $\|\tilde{g}_{x,k}\|\leq \bar{\tilde{g}}_x \varepsilon_{x}$ and  $\|\tilde{g}_{x,k}\tilde{g}_{x,k}^T\|\leq\bar{\tilde{g}}_x^2 \varsigma_{max} I$, with $\bar{\tilde{g}}_x=\bar{\alpha}_{g_x} \bar{g}_x$.
	
	\begin{itemize}
		\item[$\bullet$] The high-order terms $\psi_{\theta,k-1}$ and $\psi_{z,k}^{\theta}$ can be transformed into the following formulations \cite{kai2009robust}
		$$\psi_{\theta,k-1}=\Theta_{k-1}\mathcal{D}_{1,k-1}\mathcal{L}_{k-1}\epsilon_{k-1|k-1}$$
		$$\psi_{z,k}^{\theta}=\Gamma_k \mathcal{D}_{2,k}\mathcal{L}_k \epsilon_{k|k-1}$$ 
		where $\Theta_{k-1}$ and $\Gamma_k$ denote problem-dependent scaling matrices, $\mathcal{L}_{k}$ is introduced to provide an extra degree of freedom to tune the filter, and $\mathcal{D}_{i,k}, i=1,2$  denotes an unknown time-varying matrix accounting for the linearization errors of dynamical model which satisfies $\mathcal{D}_{i,k} \mathcal{D}_{i, k}^T \leq I$. The conditions on these matrices are given in (\ref{eq:thm:Pkk_theta2}).
	\end{itemize}

	Considering the parameter error covariance matrices $P_{k|k-1}^{\theta \theta}=\mathbb{E}\{\epsilon_{k|k-1}\epsilon_{k|k-1}^T\}$ and $P_{k|k}^{\theta \theta}=\mathbb{E}\{\epsilon_{k|k}\epsilon_{k|k}^T\}$, with the Gaussian-assumed update procedures, the error covariance can be approximated by
	\begin{equation}
	\label{eq:P_kk_theta}
	\begin{split}
	&P_{k|k}^{\theta \theta}=(A_{k-1}+\Theta_{k-1}\mathcal{D}_{1,k-1}\mathcal{L}_{k-1})P_{k-1|k-1}^{\theta \theta}(A_{k-1}+\Theta_{k-1}\mathcal{D}_{2,k-1}\mathcal{L}_{k-1})^T-\left\{(A_{k-1}+\Theta_{k-1}\mathcal{D}_{1,k-1}\mathcal{L}_{k-1})P_{k-1|k-1}^{\theta \theta} \right.\\
	&\left.\times(A_{k-1}+\Theta_{k-1}\mathcal{D}_{2,k-1}	\mathcal{L}_{k-1})^T+\Sigma_{\tau,k-1}\right\}(B_{k}+\Gamma_k \mathcal{D}_{2,k}\mathcal{L}_k)^T 
	\left[(B_{k}+\Gamma_k \mathcal{D}_{2,k}\mathcal{L}_k)\left((A_{k-1}+\Theta_{k-1}\mathcal{D}_{1,k-1}\mathcal{L}_{k-1})\right.\right.\\
	&\left.\left.\times P_{k-1|k-1}^{\theta \theta}(A_{k-1}+\Theta_{k-1}\mathcal{D}_{1,k-1}\mathcal{L}_{k-1})^T+\Sigma_{\tau,k}\right)(B_{k}+\Gamma_k \mathcal{D}_{2,k}\mathcal{L}_k)^T+ \mathbb{E}\{\zeta_k\zeta_k^T\}+\mathbb{E}\{\tilde{g}_{x,k}\tilde{g}_{x,k}^T\}+\Sigma_{v_k}
	\right]^{-1}(B_{k}\\
	& 
	+\Gamma_k \mathcal{D}_{2,k}\mathcal{L}_k)\times \left\{(A_{k-1}+\Theta_{k-1}\mathcal{D}_{1,k-1}\mathcal{L}_{k-1})P_{k-1|k-1}^{\theta \theta}(A_{k-1}+\Theta_{k-1}\mathcal{D}_{1,k-1}\times \mathcal{L}_{k-1})^T+\Sigma_{\tau,k}\right\}
	\end{split}
	\end{equation}
	where the term $\tilde{g}_{x,k}$ is uncorrelated with the modeling uncertainty $\zeta_k$ and the predictive error $\epsilon_{k|k-1}$.  
	According to Lemma \ref{lemma3}, one can approximate
	\begin{equation}
	\label{eq:P_kk_theta1}
	\begin{split}
	&P_{k|k}^{\theta \theta}\leq (B_{k}+\Gamma_k \mathcal{D}_{2,k}\mathcal{L}_k)^{-1}\left[\mathbb{E}\{\zeta_k\zeta_k^T\}+\mathbb{E}\{\tilde{g}_{x,k}\tilde{g}_{x,k}^T\}+\Sigma_{v_k}\right](B_{k}+\Gamma_k \mathcal{D}_{2,k}\mathcal{L}_k)^{-T}
	\end{split}
	\end{equation}
	Given the conditions in (\ref{eq:thm:Pkk_theta1}) and (\ref{eq:thm:Pkk_theta2}), we can have
	\begin{equation}
	\label{eq:P_kk_theta2}
	\begin{split}
	&P_{k|k}^{\theta \theta}\leq
	\frac{\bar{\zeta}^2I+\bar{\tilde{g}}_x^2 \varsigma_{max}I+\delta_{v,max}I}
	{(b_{min}+\gamma_{min}d_{2,min}l_{min})^2} 
	\end{split}
	\end{equation}
	Computing the Euclidean norm on both sides of the above inequality leads to:
	\begin{equation}
	\label{eq:P_kk_theta3}
	\begin{split}
	&\|P_{k|k}^{\theta \theta}\|\leq
	{(\bar{\zeta}^2+\bar{\tilde{g}}_x^2 \varsigma_{max}+\delta_{v,max})}/
	{(b_{min}+\gamma_{min}d_{2,min}l_{min})^2} 
	\end{split}
	\end{equation}
	Therefore, the parameter estimation error covariance $P_{k|k}^{\theta \theta}$ is bounded in the case that the modeling uncertainty $\zeta_k$, the interactive error term $\mathbb{E}\{\tilde{g}_{x,k}\tilde{g}_{x,k}^T\}$ and the noise term $\Sigma_{v,k}$ are bounded. Let the right hand side of the inequality (\ref{eq:P_kk_theta2}) be denoted by $\lambda_{max}^{\theta}I$, then the proof of Theorem \ref{thm:Pkk_theta} is completed.
	
\end{proof}

\subsection{Bounded Parameter Estimation Error}

The second task for parameter estimation error boundedness analysis is to provide a sufficient condition to verify the exponential boundedness of the parameter estimation error in the mean square sense. The following Assumption 4 states some standard results on  boundedness of  stochastic processes that are utilized as presented in our main result in Theorem \ref{thm:stability}. \\
\textbf{\text{Assumption 4:}}
It is assumed that \textit{(a)} the matrix $A$ satisfies $\|A_kA_k^T\| \le a_{max}^2I$, \textit{(b)} the prior error covariance matrix satisfies ${\lambda}_{min}^{\theta} I \leq P_{k-1|k-1}^{\theta \theta} \leq {\lambda}_{max}^{\theta} I$, \textit{(c)} the high-order terms-related matrix is bounded by $\|\Theta_{k-1}\Theta_{k-1}^T\| \le \vartheta_{max}^2I $, \textit{(d)} the inequalities $ \Sigma_{\tau,k} \le \delta_{\tau,max}I$ hold, and where all the bounds are positive constants.
\begin{thm}
	\label{thm:stability}
	Consider the parameter estimation filter as proposed in the dual methodology consisting of  a \textit{$3rd$-degree} ($d_\theta=3$) CNF (CNF-I or CNF-III), and let Theorem \ref{thm:Pkk_theta} and Assumption 4 hold. The parameter estimation error $\epsilon_{k|k}$ satisfies the following conditions,
	\begin{equation}
	\label{eq:thm_eq1}
	\mathbb{E}[{J}(\epsilon_{k|k})]-{J}(\epsilon_{k-1|k-1}) \le \mu_q-{\mu_p} {J}(\epsilon_{k-1|k-1})
	\end{equation}
	\begin{equation}
	\label{eq:thm_eq2}
	\frac{1}{{\lambda}_{max}^{\theta}}\|\epsilon_{k-1|k-1}\|^2 \le {J}(\epsilon_{k-1|k-1}) \le \frac{1}{{\lambda}_{min}^{\theta}} \|\epsilon_{k-1|k-1}\|^2
	\end{equation}
	Therefore, $\epsilon_{k|k}$ is boundeded  in mean square sense where
	\begin{equation}
	\label{eq:thm_eq3}
	\mathbb{E}\{\| \epsilon_{k|k}\|^2\} \le \frac{{\lambda^{\theta}_{max}}}{{\lambda}^{\theta}_{min}} \mathbb{E}\{\|\epsilon_{0|0}\|^2\}(1-\mu_p)^{k}+\frac{\mu_q}{{\lambda}^{\theta}_{min}} \sum_{i=1}^{k-1} (1-\mu_p)^i
	\end{equation}
	if the initial conditions of the system satisfy $\|\epsilon_{0|0}\| \le \epsilon_f$, and $\mu_q>0$, $0 < \mu_p \le 1$.
\end{thm}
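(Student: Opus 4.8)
The proof follows the standard stochastic Lyapunov route for filter error stability (in the spirit of \cite{theodor1996robust,xie1994robust,kluge2010stochastic}), organized around the candidate function $J(\epsilon_{k|k})\triangleq\epsilon_{k|k}^{T}(P_{k|k}^{\theta\theta})^{-1}\epsilon_{k|k}$. Inequality (\ref{eq:thm_eq2}) is the easy ingredient: Theorem~\ref{thm:Pkk_theta} together with Assumption~4(b) gives $\lambda_{min}^{\theta}I\le P_{k|k}^{\theta\theta}\le\lambda_{max}^{\theta}I$, hence $(\lambda_{max}^{\theta})^{-1}I\le(P_{k|k}^{\theta\theta})^{-1}\le(\lambda_{min}^{\theta})^{-1}I$, and sandwiching the quadratic form $\epsilon^{T}(\cdot)\epsilon$ yields (\ref{eq:thm_eq2}) at once. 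The substance of the proof is therefore the one-step supermartingale-type estimate (\ref{eq:thm_eq1}).

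To obtain (\ref{eq:thm_eq1}) the plan is to first collapse the two error recursions (\ref{eq:PE_k|k-1_simple})--(\ref{eq:PE_k|k-1}) into a single closed-loop form. Substituting $\psi_{\theta,k-1}=\Theta_{k-1}\mathcal{D}_{1,k-1}\mathcal{L}_{k-1}\epsilon_{k-1|k-1}$ and $\psi_{z,k}^{\theta}=\Gamma_{k}\mathcal{D}_{2,k}\mathcal{L}_{k}\epsilon_{k|k-1}$ gives $\epsilon_{k|k}=\mathcal{F}_{k-1}\epsilon_{k-1|k-1}+\mathcal{G}_{k-1}$, with $\mathcal{F}_{k-1}=(I-K_{k}^{\theta}B_{k}+K_{k}^{\theta}\Gamma_{k}\mathcal{D}_{2,k}\mathcal{L}_{k})(A_{k-1}+\Theta_{k-1}\mathcal{D}_{1,k-1}\mathcal{L}_{k-1})$ and $\mathcal{G}_{k-1}=(I-K_{k}^{\theta}B_{k}+K_{k}^{\theta}\Gamma_{k}\mathcal{D}_{2,k}\mathcal{L}_{k})\tau_{k}-K_{k}^{\theta}(\zeta_{k}+\tilde{g}_{x,k}+v_{k})$. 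Taking $\mathbb{E}[\cdot\,|\,\epsilon_{k-1|k-1}]$ and using that $\tau_{k}$, $v_{k}$ are zero-mean and independent of $\epsilon_{k-1|k-1}$ while $\tilde{g}_{x,k}$ is uncorrelated with $\epsilon_{k|k-1}$ and $\zeta_{k}$ (the decorrelation already used in (\ref{eq:P_kk_theta})), the cross terms vanish and $\mathbb{E}[J(\epsilon_{k|k})\,|\,\epsilon_{k-1|k-1}]=\epsilon_{k-1|k-1}^{T}\mathcal{F}_{k-1}^{T}(P_{k|k}^{\theta\theta})^{-1}\mathcal{F}_{k-1}\epsilon_{k-1|k-1}+\mathbb{E}[\mathcal{G}_{k-1}^{T}(P_{k|k}^{\theta\theta})^{-1}\mathcal{G}_{k-1}]$. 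The second term is bounded above by a constant $\mu_{q}>0$ using $\|\zeta_{k}\|\le\bar\zeta$, $\mathbb{E}\{\tilde{g}_{x,k}\tilde{g}_{x,k}^{T}\}\le\bar{\tilde{g}}_{x}^{2}\varsigma_{max}I$, $\Sigma_{v,k}\le\delta_{v,max}I$, $\Sigma_{\tau,k}\le\delta_{\tau,max}I$, the uniform bound $(P_{k|k}^{\theta\theta})^{-1}\le(\lambda_{min}^{\theta})^{-1}I$, and a uniform bound on the gain $K_{k}^{\theta}=P_{\theta z,k|k-1}(P_{zz,k|k-1})^{-1}$ inherited from the covariance bounds and $b_{min}^{2}I\le B_{k}B_{k}^{T}$.

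The crux of the argument, and the step I expect to be the main obstacle, is the contraction estimate $\mathcal{F}_{k-1}^{T}(P_{k|k}^{\theta\theta})^{-1}\mathcal{F}_{k-1}\le(1-\mu_{p})(P_{k-1|k-1}^{\theta\theta})^{-1}$ for some $0<\mu_{p}\le1$. The plan is to apply Lemma~\ref{lemma2} twice --- once to absorb the norm-bounded perturbation $\Theta_{k-1}\mathcal{D}_{1,k-1}\mathcal{L}_{k-1}$ (which has the form $HFE$ with $FF^{T}\le I$) into a modified predictor covariance, and once for $K_{k}^{\theta}\Gamma_{k}\mathcal{D}_{2,k}\mathcal{L}_{k}$ --- then to invoke the measurement-update identity $(P_{k|k}^{\theta\theta})^{-1}=(P_{k|k-1}^{\theta\theta})^{-1}+B_{k}^{T}(\cdot)^{-1}B_{k}$ together with the predictor recursion $P_{k|k-1}^{\theta\theta}=\bar{A}_{k-1}P_{k-1|k-1}^{\theta\theta}\bar{A}_{k-1}^{T}+\Sigma_{\tau,k-1}$ (with $\bar{A}_{k-1}=A_{k-1}+\Theta_{k-1}\mathcal{D}_{1,k-1}\mathcal{L}_{k-1}$), Lemma~\ref{lemma3}, and the uniform bounds $\|A_{k}A_{k}^{T}\|\le a_{max}^{2}$, $b_{min}^{2}I\le B_{k}B_{k}^{T}$, $\|\Theta_{k-1}\Theta_{k-1}^{T}\|\le\vartheta_{max}^{2}$, $\Sigma_{\tau,k}\le\delta_{\tau,max}I$ supplied by Assumption~4 and Theorem~\ref{thm:Pkk_theta}, so as to extract a strictly positive $\mu_{p}$ --- intuitively because the measurement update strictly shrinks the predicted covariance while the process-noise floor keeps $\bar{A}_{k-1}P_{k-1|k-1}^{\theta\theta}\bar{A}_{k-1}^{T}$ from dominating. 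The difficulty is that the two perturbation matrices, the interactive state-estimation error $\tilde{g}_{x,k}$, and all the uniform bounds have to be reconciled simultaneously, and this is precisely the point at which the controllability/observability-type content hidden in Theorem~\ref{thm:Pkk_theta} and Assumption~4 is spent.

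Finally, given (\ref{eq:thm_eq1}) and (\ref{eq:thm_eq2}), the bound (\ref{eq:thm_eq3}) follows by an elementary iteration (or Lemma~\ref{lemma1} in its comparison form): taking total expectations in (\ref{eq:thm_eq1}) gives $\mathbb{E}[J(\epsilon_{k|k})]\le(1-\mu_{p})\mathbb{E}[J(\epsilon_{k-1|k-1})]+\mu_{q}$, hence $\mathbb{E}[J(\epsilon_{k|k})]\le(1-\mu_{p})^{k}\mathbb{E}[J(\epsilon_{0|0})]+\mu_{q}\sum_{i=0}^{k-1}(1-\mu_{p})^{i}$; sandwiching with (\ref{eq:thm_eq2}) and using $\|\epsilon_{0|0}\|\le\epsilon_{f}$ and $0<\mu_{p}\le1$ then yields the mean-square bound (\ref{eq:thm_eq3}) (up to an immaterial relabelling of the constant $\mu_{q}$), establishing exponential boundedness of $\epsilon_{k|k}$ in the mean-square sense.
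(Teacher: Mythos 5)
Your overall skeleton matches the paper's: the Lyapunov candidate $J(\epsilon_{k|k})=\epsilon_{k|k}^{T}(P_{k|k}^{\theta\theta})^{-1}\epsilon_{k|k}$, the sandwich (\ref{eq:thm_eq2}) from the covariance bounds, a one-step drift inequality, and the geometric iteration plus Jensen's inequality at the end. However, there are two genuine gaps. First, the cross terms do \emph{not} vanish under conditional expectation: the closed-loop forcing term contains $\zeta_k$ and $\tilde{g}_{x,k}$, which are bounded modeling/interaction uncertainties, not zero-mean noises independent of $\epsilon_{k-1|k-1}$, so $\mathbb{E}\{\epsilon_{k-1|k-1}^{T}\mathcal{F}_{k-1}^{T}(P_{k|k}^{\theta\theta})^{-1}\mathcal{G}_{k-1}\}$ cannot be discarded by a decorrelation argument. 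The paper keeps exactly these cross terms (the second and third terms of (\ref{eq:constructor})) and bounds them by $2\bar{\pi}\bar{o}_u\epsilon_f/\lambda_{max}^{\theta}$ using the hypothesis $\|\epsilon_{k-1|k-1}\|\le\epsilon_f$; this is where $\epsilon_f$ enters the theorem and where a sizeable part of $\mu_q$ comes from. Dropping them is not a relabelling of constants but an unjustified step.

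Second, and more importantly, the decisive contraction estimate $\epsilon_{k-1|k-1}^{T}\Pi_k^{T}(P_{k|k}^{\theta\theta})^{-1}\Pi_k\epsilon_{k-1|k-1}\le(1-\mu_p)J(\epsilon_{k-1|k-1})$ is only announced as a plan (Lemma \ref{lemma2} applied twice, the information-form measurement update, Lemma \ref{lemma3}), and you yourself flag it as the main obstacle; it is never carried out. That route is not the paper's and is not obviously available with the stated hypotheses: the filter covariance here is the cubature-computed one (note the correction term $\Delta P_{k|k}$ in (\ref{eq:P_K+1|K})), and a Reif-style information-form argument would additionally require, e.g., a nonsingular process-noise floor and invertibility of the perturbed transition matrix, which Assumption 4 does not supply. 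The paper's actual argument is much more direct: it takes the closed-loop covariance identity (\ref{eq:P_K+1|K}), writes $P_{k|k}^{\theta\theta}=\Pi_kP_{k-1|k-1}^{\theta\theta}\Pi_k^{T}+\Gamma_{k|k}^{*}$, factors $\Pi_k$ out as in (\ref{eq:Pk+1|k_simple_1}), and lower-bounds $\Pi_k^{-1}\Gamma_{k|k}^{*}\Pi_k^{-T}$ through the process-noise level and $\|\Pi_k\|\le\bar{\pi}$, which immediately yields the contraction with $1-\mu_p=\left[1+\delta_{\tau,max}/\bar{\pi}^{2}\right]^{-1}$; Lemmas \ref{lemma1}--\ref{lemma3} are not the engine of this step. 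Without either completing your Riccati-type route (with the extra hypotheses it needs) or switching to this covariance-decomposition argument, the inequality (\ref{eq:thm_eq1}), and hence the theorem, is not established.
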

\begin{proof}
	\label{proof:thm1}
	Let us define a performance index for parameter estimation as	${J}(\epsilon_{k|k})=\epsilon_{k|k}^T (P_{k|k}^{\theta \theta})^{-1} \epsilon_{k|k}$.  Following the Assumption 3, it gives
	\begin{equation}
	\nonumber
	\frac{1}{{\lambda}_{max}^{\theta}}\|\epsilon_{k-1|k-1}\|^2 \le {J}(\epsilon_{k-1|k-1}) \le \frac{1}{{\lambda}_{min}^{\theta}} \|\epsilon_{k-1|k-1}\|^2
	\end{equation}
	Substituting Eq. (\ref{eq:PE_k|k-1_simple}) into $\hat{\theta}_{k|k}$ and Eq. (\ref{eq:PE_k|k-1}), one obtains
	\begin{equation}
	\label{eq:e_k+1|k}
	\begin{split}
	\epsilon_{k|k}&=(A_{k-1}+\Theta_{k-1}\mathcal{D}_{1,k-1}\mathcal{L}_{k-1})(I-K_{k}^\theta (B_{k}+\Gamma_k \mathcal{D}_{2,k}\mathcal{L}_k))\epsilon_{k-1|k-1}+o_{u,k}+o_{n,k}\\
	\end{split}
	\end{equation}
	where $o_{n,k}=(I-K_{k}^\theta (B_{k}+\Gamma_k \mathcal{D}_{2,k}\mathcal{L}_k))\tau_k-K_kv_k$ denotes the noise term and $o_{u,k}=-K_k\zeta_k-K_k\tilde{g}_{x,k}$ denotes uncertainties from the approximation error and modeling. 
	Consequently, the parameter error covariance matrix becomes 
	\begin{equation}
	\label{eq:P_K+1|K}
	\begin{split}
	&P_{k|k}^{\theta\theta}=\Pi_{k}P_{k-1|k-1}^{\theta\theta}\Pi_{k}^T+\mathbb{E}\{
	\Pi_{k}\epsilon_{k-1|k-1}o_{u,k}^T+o_{u,k}\epsilon_{k-1|k-1}^T\Pi_{k}^T\}
	+\mathbb{E}\{o_{n,k}o_{n,k}^T\}+\mathbb{E}\{o_{u,k}o_{u,k}^T\}+\Delta P_{k|k}
	\end{split}
	\end{equation}
	where $\Pi_{k}=(A_{k-1}+\Theta_{k-1}\mathcal{D}_{1,k-1}\mathcal{L}_{k-1})(I-K_k^\theta (B_k+\Gamma_k \mathcal{D}_{2,k}\mathcal{L}_k))$, which satisfy
	\begin{equation}
	\nonumber
	\begin{split}
	\|\Pi_{k}\|&\leq (a_{max}+\vartheta_{max}d_{1,max}l_{max})(1+\bar{k}_\theta(b_{max}+\gamma_{max}d_{2,max}l_{max})) \triangleq \bar{\pi}
	\end{split}
	\end{equation}
	where $\bar{k}_\theta$ denotes the upper bound of the gain with $\bar{k}_\theta \leq \lambda_{max}^{\theta}(b_{max}+\gamma_{max}d_{2,max}l_{max})/\delta_{v,max}$.
	
	By substituting $\epsilon_{k|k}$ into ${J}(\epsilon_{k|k})$, one obtains
	\begin{equation}
	\label{eq:constructor}
	\begin{aligned}
	{J}(\epsilon_{k|k})&=\epsilon_{k-1|k-1}^T\Pi_{k}^T(P_{k|k}^{\theta \theta})^{-1}\Pi_{k}\epsilon_{k-1|k-1}+\epsilon_{k-1|k-1}^T\Pi_{k}^T(P^{\theta \theta}_{k|k})^{-1}o_{u,k}+o_{u,k}^T(P^{\theta \theta}_{k|k})^{-1}\Pi_{k} \epsilon_{k-1|k-1}\\
	&+o_{u,k}^T(P^{\theta \theta}_{k|k})^{-1}o_{u,k}+o_{n,k}^T (P^{\theta \theta}_{k|k})^{-1}o_{n,k}\\
	\end{aligned}
	\end{equation}
	Each term in (\ref{eq:constructor}) can be shown to be bounded by utilizing certain conditions of the proposed assumptions. The simplified derivation process is as follows.
	
	Simplify (\ref{eq:P_K+1|K}) as $P_{k|k}^{\theta\theta}
	=\Pi_kP_{k-1|k-1}^{\theta\theta}\Pi_k^T+\Gamma_{k|k}^{*}$, then
	\begin{equation}
	\label{eq:Pk+1|k_simple_1}
	\begin{split}
	P_{k|k}^{\theta\theta}
	&=\Pi_k\left\{P_{k-1|k-1}^{\theta\theta}+\Pi_k^{-1}\Gamma_{k|k}^{*}\Pi_k^{-T}\right\}\Pi_k^{T}\\
	\end{split}
	\end{equation}
	where $\Pi_k^{-1}\Gamma_{k|k}^{*}\Pi_k^{-T}
	\ge {\delta_{\tau,max} }/{\bar{\pi}^2}$.
	
	It follows that the next inequality can be obtained:
	\begin{equation}
	\begin{split}
	&\epsilon_{k-1|k-1}^T\Pi_k^{T}(P_{k|k}^{\theta\theta})^{-1}\Pi_k\epsilon_{k-1|k-1}\le (1-\mu_p) \epsilon_{k-1|k-1}^T(P_{k-1|k-1}^{\theta\theta})^{-1}\epsilon_{k-1|k-1}
	\end{split}
	\end{equation}
	where $1-\mu_p=\left[1+{\delta_{\tau,max}}/{\bar{\pi}^2}\right]^{-1}$. 
	
	Given that
	\begin{equation}
	\label{eq:PE_fcn1}
	{J}(\epsilon_{k-1|k-1})=\epsilon_{k-1|k-1}^T (P_{k-1|k-1}^{\theta \theta})^{-1} \epsilon_{k-1|k-1}
	\end{equation}
	the first term in (\ref{eq:constructor}) can be shown to be bounded by $(1-\mu_p) {J}(\epsilon_{k-1|k-1})$.
	
	Regarding the uncertainty term $o_{u,k}$, we have
	\begin{equation}
	\label{eq:bound_ou}
	\|o_{u,k}\| \le \bar{k}_{\theta}(\bar{\zeta}+\bar{\tilde{g}}_x \varepsilon_{x}) \triangleq \bar{o}_u
	\end{equation}
	where the uncertainty term satisfies
	\begin{equation}
	\label{eq:uncertainty}
	o_{u,k}^T (P^{\theta \theta}_{k|k})^{-1}o_{u,k} \leq  \bar{o}_u^2/ {\lambda}_{max}^\theta 
	\end{equation}
	where $ {\lambda}_{max}^\theta  I$ denotes the upper bound for $P_{k|k}^{\theta \theta}$, which has been shown earlier. 
	
	Then $\epsilon_{k\!-\!1|k\!-\!1}^T\Pi_{k}^T(P^{\theta \theta}_{k|k})^{\!-\!1}o_{u,k}\!+\!o_{u,k}^T(P^{\theta \theta}_{k|k})^{\!-\!1}\Pi_{k} \epsilon_{k\!-\!1|k\!-\!1}$ is  consequently bounded by $2\bar{\pi}\bar{o}_u\|\epsilon_{k\!-\!1|k\!-\!1}\|/ {\lambda}_{max}^\theta  $. Considering
	$\|\epsilon_{k-1|k-1}\| \leq \epsilon_f$,
	it follows that the second and third terms of Eq. (\ref{eq:constructor}) can satisfy
	\begin{equation}
	\nonumber
	\begin{split}
	&\epsilon_{k-1|k-1}^T\Pi_{k}^T(P^{\theta \theta}_{k|k})^{-1}o_{u,k}+o_{u,k}^T(P^{\theta \theta}_{k|k})^{-1}\Pi_{k} \epsilon_{k-1|k-1}\leq 2\bar{\pi}\bar{o}_u\epsilon_f/ {\lambda}_{max}^\theta 
	\end{split}
	\end{equation}	
	Regarding the noise term $o_{n,k}$, we have
	\begin{equation}
	\label{eq:bound_on}
	\begin{split}
	&\|o_{n,k}o_{n,k}^T\| \le (1+\bar{k}_\theta(b_{max}+\gamma_{max}d_{max}l_{max}))^2\delta_{\tau,max}+\bar{k}_\theta^2\delta_{v,max}\triangleq \bar{o}_n
	\end{split}
	\end{equation}
	Therefore, the following inequality holds
	\begin{equation}
	\label{eq:noise}
	o_{n,k}^T (P^{\theta \theta}_{k|k})^{-1}o_{n,k} \leq  \bar{o}_n/ {\lambda}_{max}^\theta 
	\end{equation}
	Consequently, one can obtain
	\begin{equation}
	\label{eq:constructor_bound}
	\begin{aligned}
	&{J}(\epsilon_{k|k})\leq (1-\mu_p) {J}(\epsilon_{k-1|k-1})+\mu_q
	\end{aligned}
	\end{equation}
	where $\mu_q=2\bar{\pi}\bar{o}_u\epsilon_{f}/ {\lambda}_{max}^\theta  +\bar{o}_u^2/ {\lambda}_{max}^\theta +\bar{o}_n/ {\lambda}_{max}^\theta $.
	Therefore, the following inequality can be observed:
	\begin{equation}
	\label{eq:constructor_bounded}
	\begin{aligned}
	\mathbb{E}[{J}(\epsilon_{k|k})]-{J}(\epsilon_{k-1|k-1})&\leq-{\mu_p} {J}(\epsilon_{k-1|k-1})+\mu_q
	\end{aligned}
	\end{equation}
	where $\mu_q>0$ and $0 < \mu_p < 1$.
	
	The parameter estimation error $\epsilon_{k|k}$ in presence of bounded sensor modeling uncertainties by using the \textit{3rd-degree} CNF satisfies the root mean square boundedness, \textit{i.e.}, 
	\begin{equation}
	\begin{split}
	&E\{\|\epsilon_{k|k}\|^2\} \\
	&\le \frac{ {\lambda}_{max}^\theta }{ {\lambda}_{min}^\theta } E \{\|\epsilon_{0|0}\|^2\}(1-\mu_p)^{k}
	+\frac{\mu_q}{ {\lambda}_{min}^\theta }\sum_{i=1}^{k-1}(1-\mu_p)^i\\
	&\le \frac{ {\lambda}_{max}^\theta }{ {\lambda}_{min}^\theta } E \{\|\epsilon_{0|0}\|^2\}(1-\mu_p)^{k}
	+\frac{\mu_q}{ {\lambda}_{min}^\theta }\sum_{i=1}^{\infty}(1-\mu_p)^i\\
	&= \frac{ {\lambda}_{max}^\theta }{ {\lambda}_{min}^\theta } E \{\|\epsilon_{0|0}\|^2\}(1-\mu_p)^{k}
	+\frac{\mu_q}{ {\lambda}_{min}^\theta \mu_p}\\
	\end{split}
	\end{equation}
	when the initial error $\|\epsilon_{0|0}\|$ is bounded by  $\epsilon_f$. 
	
	Considering the Jensen's inequality, we have
	\begin{equation}
	\label{eq:jensen_par}
	\|\mathbb{E}\{(\epsilon_{k|k})^2\}\| \leq \mathbb{E}\{\|\epsilon_{k|k}\|^2\}
	\end{equation} 
	where the upper bound of the parameter estimation error can be given as
	\begin{equation}
	\label{eq:error_bound_par}
	\begin{split}
	&\|\mathbb{E}\{\epsilon_{k|k}\}\|\leq \sqrt{\mathbb{E}\{\|\epsilon_{k|k}\|^2\}}\leq \sqrt{\frac{ {\lambda}_{max}^\theta }{ {\lambda}_{min}^\theta } E \{\|\epsilon_{0|0}\|^2\}(1-\mu_p)^{k}
		+\frac{\mu_q}{ {\lambda}_{min}^{\theta} \mu_p}}
	\end{split}
	\end{equation}
	This completes the proof of the theorem.
	
\end{proof}
\textbf{Remark 7: } Compared to the existing estimation error boundedness analysis of CKF, our  analysis has the following unique features. First, distinct from \cite{zarei2015convergence}, which is based on a nonlinear system but with linear measurements, our boundedness analysis is conducted for nonlinear stochastic systems with nonlinear measurement equations. Analyzing the error boundedness  of the \textit{3rd-degree} CNF with nonlinear measurement expressions is more challenging than analyzing that of linear measurement equations due to the resulting higher approximation errors associated with the cubature. Second, distinct from the work presented in \cite{xu2016stochastic}, we further added the term ($\zeta(x_k, u_k)$) representing uncertainties into the boundedness analysis. Consequently, uncertainty from both the approximation error of the cubature rules as well as measurement uncertainties are taken into account and considered. Importantly, we have analyzed the interactive error effects $\tilde{g}_{x,k}$ from the state estimation, which has not been considered in the boundedness analysis of the relevant literature.  

Since the above boundedness analysis on parameter estimation error can be guaranteed with one important premise that the state estimation error and its covariance are bounded (i.e., Assumption 3), the following section will verify such rationality.

\subsection{Boundedness Analysis of State Estimation Error} 
\label{subsec:bound_state}
The goal of this section is to verify the boundedness of the state estimation error for achieving the boundedness of the estimated parameters. Considering the following reformulated nonlinear system for state estimation problem at time $k$ given the estimated $\hat{{\theta}}_{k-1}$
\begin{equation}
\label{eq:nonlinear_sys_state1}
{\Omega}_x:  
\left\{ 
\begin{array}{l}
x_k={f}(x_{k-1},\hat{\theta}_{k-1},u_{k-1})+\tilde{f}_{\theta,k-1}+w_k\\
y_k={g}(x_k,\hat{\theta}_{k-1},u_{k})+\tilde{g}_{\theta,k-1}+v_k\\
\end{array}
\right.
\end{equation}
where,
\begin{equation}
\nonumber
\begin{split}
\tilde{f}_{\theta,k-1}& \triangleq {f}(x_{k-1},{\theta}_{k-1},u_{k-1})-{f}(x_{k-1},\hat{\theta}_{k-1},u_{k-1})\\
\tilde{g}_{\theta,k-1}& \triangleq {g}(x_{k},{\theta}_{k-1},u_{k})-{g}(x_{k},\hat{\theta}_{k-1},u_{k})\\
\end{split}
\end{equation}
The error due to the estimate $\hat{\theta}_{k-1}$ is accounted for by introducing $\tilde{f}_{\theta,k-1}$ and $\tilde{g}_{\theta,k-1}$. 

Based on the Taylor series expansion of $\tilde{f}_{\theta,k-1}$ at $\hat{{\theta}}_{k-1|k-1}$, one obtains $\|\tilde{f}_{{\theta},{k-1}}\|=\|\alpha_{f_\theta}F_{\theta,k-1}\tilde{\theta}_{k-1|k-1}\|$,	where $\alpha_{f_\theta}=diag(\alpha_{f_\theta,1,k},\alpha_{f_\theta,2,k},\cdots,\alpha_{f_\theta,n_\theta,k})$ denotes an unknown instrumental diagonal matrix to compensate the high-order terms of expansion, which is assumed to satisfy $\|\alpha_{f_\theta}\| \leq \bar{\alpha}_{f_\theta}$. Given that the prior parameter error and its covariance matrix are assumed to be bounded at the time instant $k-1$, the inequality $\|\tilde{f}_{\theta,k-1}\|\leq \bar{\tilde{f}}_\theta \epsilon_{\theta}$ can be obtained, with $\bar{\tilde{f}}_\theta=\bar{\alpha}_{f_\theta}\bar{f}_{\theta}$, $\|F_{\theta,k-1}\| \leq \bar{f}_{\theta}$.

Similarly, $\|\tilde{g}_{\theta,k-1}\|\leq \bar{\tilde{g}}_\theta \epsilon_{\theta}$ can be also devised by using the Taylor series expansion of $\tilde{g}_{\theta,k-1}$ at $\hat{{\theta}}_{k-1|k-1}$ and each of the matrices can be bounded.

In this paper, we have developed 5th-degree cubature rule-based nonlinear filters for state estimation. Due to the fact that different 5th-degree cubature-based nonlinear filters can have different weights and cubature points, therefore, the boundedness  analysis differs from each other. Here the boundedness analysis will focus on the employed CNF-IV. 

Based on the Taylor series expansion of the nonlinear functions $f(\cdot)$ and $g(\cdot)$, we have
\begin{equation}
\label{eq:PE_k|k-1_x}
\epsilon_{k|k-1}^x=C_{k-1}\epsilon_{k-1|k-1}^x+\psi_x(\hat{x}_{k-1|k-1},x_{k-1})+\tilde{f}_{\theta,k-1}+w_{k}
\end{equation}
\begin{equation}
\label{eq:Delta_z_k|k-1_state}
\Delta z_{k|k-1}^x= D_k\epsilon_{k|k-1}^x+\zeta_k+\psi_z^{x}(\hat{x}_{k|k-1},x_k)+\tilde{g}_{\theta,k-1}+v_{k}
\end{equation}
\begin{equation}
\label{eq:PE_k|k_x}
\begin{split}
\epsilon_{k|k}^x&=(I-K_k^x D_k)\epsilon_{k|k-1}^x-K_k^x\psi_z^{x}(\hat{x}_{k|k-1},x_k)  \\
&-K_k^x \zeta_k-K_k^x\tilde{g}_{\theta,k-1}-K_{k}^x v_k\\
\end{split}
\end{equation}
where $C_{k-1}=\partial f(\cdot)/\partial x_{k-1}$ and $D_k=\partial g(\cdot)/\partial x_k$. $\psi_x(\hat{x}_{k-1|k-1},x_{k-1})$ and $\psi_z^{x}(\hat{x}_{k|k-1},x_k)$ represent the higher order terms which involve truncation errors associated with the approximation. This can be transformed into easy-to-handle formulations, i.e., $\psi_{x,k-1}=\mathcal{X}_{k-1}\mathcal{S}_{1,k-1}\mathcal{M}_{k-1}\epsilon_{k-1|k-1}^{x}$ and $\psi_{z,k}^{x}=\mathcal{Z}_k\mathcal{S}_{2,k}\mathcal{M}_k\epsilon_{k|k-1}^{x}$, where $\mathcal{X}_{k-1}$ and $\mathcal{Z}_k$ denote problem-dependent scaling matrices, $\mathcal{M}_{k}$ is introduced to provide an extra degree of freedom to tune the filter, $\mathcal{S}_{1,k-1}$ and $\mathcal{S}_{2,k}$ denote unknown time-varying matrices accounting for the linearization errors of the dynamical model which satisfy $\mathcal{S}_{1,k-1}\mathcal{S}_{1,k-1}^T\le I$ and $\mathcal{S}_{2,k}\mathcal{S}_{2,k}^T\le I$, respectively.
\vspace{1mm}\\
\textbf{\text{Assumption 5:}}
It is assumed that \textit{(a)} $C_kC_k^T \le c_{max}^2I$, \textit{(b)} the error covariance matrix satisfies $\sigma_{min} I \le P_{k-1|k-1}^{xx} \le \sigma_{max} I$, \textit{(c)} $\chi_{min}^2I \le \mathcal{X}_{k}\mathcal{X}_{k}^T \le \chi_{max}^2I$, \textit{(d)} the inequalities $ \Sigma_{w,k} \le \delta_{w,max}I$, where all the bounds are positive constants.

Let us define a performance index as ${J}(\epsilon_{k|k}^{x})=(\epsilon_{k|k}^x)^T (P_{k|k}^{xx})^{-1} \epsilon_{k|k}^x$, which can be expressed as
\begin{equation}
\label{eq:constructor_state}
\begin{aligned}
&{J}(\epsilon_{k|k}^{x})=(\epsilon_{k-1|k-1}^x)^T(\Pi_{1,k}\Pi_{2,k})^T(P_{k|k}^{xx})^{-1}\Pi_{1,k}\Pi_{2,k}\epsilon_{k-1|k-1}^x+(\epsilon_{k-1|k-1}^x)^T(\Pi_{1,k}\Pi_{2,k})^T(P^{xx}_{k|k})^{-1}o_{u,k}+o_{u,k}^T(P^{xx}_{k|k})^{-1} \\
&\times\Pi_{k} \epsilon_{k-1|k-1}^x+o_{u,k}^T(P^{xx}_{k|k})^{-1}o_{u,k}+o_{n,k}^T (P^{xx}_{k|k})^{-1}o_{n,k}\\
\end{aligned}
\end{equation}
where
\begin{equation}
\nonumber
\begin{split}
&\Pi_{1,k}^{x}=I-K_k^{x}(D_k+\mathcal{Z}_k\mathcal{S}_{2,k}\mathcal{M}_k)\\
&\Pi_{2,k}^{x}=C_{k-1}+\mathcal{X}_{k-1}\mathcal{S}_{k-1}\mathcal{M}_{k-1}\\
&o_{n,k}^x=\Pi_{1,k}^{x}w_k-K_k^xv_k\\
&o_{u,k}^x=\Pi_{1,k}^{x} \tilde{f}_{\theta,k-1}-K_k^x \tilde{g}_{\theta,k-1}-K_k^x  \zeta_k
\end{split}
\end{equation}
It is easy to observe the boundedness of the above variables, which are defined as $\|\Pi_{1,k}^{x}\| \leq \bar{\pi}_{1}$, $\|\Pi_{1,k}^{x}\|\leq \bar{\pi}_{2}$, $\|o_{n,k}^x\| \leq \bar{o}_{n}^x$ and $\|o_{u,k}^x\| \leq \bar{o}_{u}^x$.

Through tedious algebraic manipulations and assuming that $\|\epsilon_{k-1|k-1}^x\| \leq \epsilon_{x}$, each term in Eq. (\ref{eq:constructor_state}) can be shown to be bounded by utilizing certain conditions of the Assumption 5, which can be expressed as follows:
\begin{equation}
\label{eq:constructor_bound_state}
\begin{aligned}
&{J}(\epsilon_{k|k}^{x})\leq (1-\varrho_p) {J}(\epsilon_{k-1|k-1})+\varrho_q
\end{aligned}
\end{equation}
where, 
\begin{equation}
\nonumber
\begin{split}
&\varrho_p=1-\left[1+{\delta_{\tau,max}}/{\bar{\pi}_1^2\bar{\pi}_2^2}\right]^{-1}\\
&\varrho_q=2\bar{\pi}_{1}\bar{\pi}_{2}\bar{o}_u^x\epsilon_{x}/ \sigma_{max} +\{(\bar{o}_u^x)^2+\bar{o}_n^x\}/ \sigma_{max}
\end{split}
\end{equation}
with $\bar{o}_u^x=\bar{\pi}_{1}\bar{\tilde{f}}_{\theta}+\bar{k}_x\bar{\tilde{g}}_{\theta}+\bar{k}_x\bar{\zeta}$ and $\bar{o}_n^x=\bar{\pi}_{1}^2\delta_{w,max}+\bar{k}_x^2\delta_{v,max}$.
Therefore, the following inequality can be observed:
\begin{equation}
\label{eq:constructor_bounded_x}
\begin{aligned}
\mathbb{E}[{J}(\epsilon_{k|k}^x)]-{J}(\epsilon_{k-1|k-1}^x)&\leq-{\varrho_p} {J}(\epsilon_{k-1|k-1}^x)+\varrho_q
\end{aligned}
\end{equation}
where $\varrho_q>0$ and $0 < \varrho_p < 1$.

Based on the Assumption 5 and Eq. (\ref{eq:PE_fcn1}), one obtains
\begin{equation}
\label{eq:PE_fcn1_bound_state}
\frac{1}{\sigma_{max}}\|\epsilon_{k-1|k-1}^x\|^2 \le{J}(\epsilon_{k-1|k-1}^{x}) \le \frac{1}{\sigma_{min}}\|\epsilon_{k-1|k-1}^x\|^2
\end{equation}

The state estimation error $\epsilon_{k|k}^x$ in presence of error coupling effects from the parameter estimation satisfies the root mean square boundedness, \textit{i.e.}, 
\begin{equation}
\begin{split}
&E\{\|\epsilon_{k|k}^x\|^2\} 
\le \frac{ {\sigma}_{max} }{ {\sigma}_{min} } E \{\|\epsilon_{0|0}^x\|^2\}(1-\varrho_p)^{k}
+\frac{\varrho_q}{ {\sigma}_{min} \varrho_p}\\
\end{split}
\end{equation}
when the initial error $\|\epsilon_{0|0}^x\|$ is bounded by  $\varepsilon_{x_0}$. 

Considering the Jensen's inequality,  the upper bound of the state estimation bias can be given as
\begin{equation}
\label{eq:error_bound}
\begin{split}
\|\mathbb{E}\{(\epsilon_{k|k}^x)\}\|&\leq \sqrt{\mathbb{E}\{\|\epsilon_{k|k}^x\|^2\}}\leq \sqrt{\frac{ {\sigma}_{max} }{ {\sigma}_{min} } E \{\|\varepsilon_{x_0}\|^2\}(1-\varrho_p)^{k}
	+\frac{\varrho_q}{ {\sigma}_{min} \varrho_p}}
\end{split}
\end{equation}

Towards this end, our concerned parameter estimation error can be ultimately to be justified as bounded. Specifically, by assuming that both the initial errors and the initial error covariance matrices for states and parameters are bounded, the state estimation at next time instant can be bounded by the  condition in (\ref{eq:error_bound}). This condition will be subsequently utilized into the parameter estimation error boundedness analysis, leading to the bounded error that is expressed in (\ref{eq:error_bound_par}). Such analysis approach is motivated by the proposed dual estimation scheme in this paper. That is, the developed state filter and parameter filter are concurrently running, which indicates that one estimate is obtained and optimized at one time and then alternate to estimate the other. 
\\
\textbf{Remark 8:} It should be noted that the nonlinearities, faults and modeling uncertainties lead to the deviation of the possible equilibrium points. Therefore, we aim to consider the exponential boundedness in mean square (rather than the convergence) of the estimation error for both states and parameters. As shown in the proposed theorems, sufficient conditions under certain assumptions are given to achieve the desired performance requirements. Importantly, upper bounds on the estimation bias for the developed dual estimation scheme are provided, even taking into account the interactive error coupling effects between states and parameters. Further research directions include development of global convergence criterion for the joint state and fault estimation algorithm.



\section{Discussion and Conclusions} \label{sec7}
In this paper, a novel hybrid-degree dual estimation framework is proposed by using case-dependent cubature rules and our proposed cubature-based nonlinear filters for performing simultaneously state and parameter estimation objectives. The performance of our proposed hybrid-degree dual estimation strategy is demonstrated and evaluated by its application to a nonlinear gas turbine engine system for solving component fault diagnosis problem. From the perspective of dual estimation performance, our proposed hybrid-degree scheme with the \textit{5th-degree} for state estimation and the \textit{3rd-degree} for parameter estimation demonstrates its superiorities in terms of estimation accuracy and robustness to unmodelled dynamics and parametric uncertainties as compared to  cubature Kalman filters and unscented Kalman filters, and  computational efficiency as compared to the well-known particle filters.  The superiority, especially of the hybrid combination of Hybrid \{VI-I\}, is reflected  by the promptness in fault detection time, lower false alarm rates, reasonable fault identification accuracy, guarantee of computational efficiency and estimation error boundedness. By incorporating a modified cubature point propagation method into our proposed hybrid solution, the robustness capabilities against  modeling uncertainties can be improved in terms of lower false alarms.

The above characteristics justify and substantiate the observation that our proposed strategy is more suitable for the purpose of fault diagnosis of safety critical nonlinear systems that require lower fault detection times, lower false alarm rates, and accurate identification of the current health status. The limitations of using deterministic sampling and weighting in  cubature-based nonlinear filters suggest that considering more effective and adaptive tuning of free parameters may lead to a promising solution for improving the overall diagnostics and estimation performance, especially robustness with respect to  modeling uncertainties, model mismatches, and disturbances. In addition, another one of our future work will be concentrated on efficiently estimating the noise statistics to improve the adaptivity and robustness of the developed cubature-based nonlinear filters. This is motivated by the fact that it has been shown by some research work in the literature that incorporating the noise statistic estimator into the filtering process can actually in an adaptive manner adjust the noise tuning parameters. The verification and validation of our proposed results to a real gas turbine engine is another topic for our future research.

\bibliography{arXiv_syy}

\end{document}